\documentclass{lmcs} 
\pdfoutput=1
\usepackage[utf8]{inputenc}

\usepackage{lastpage}
\lmcsdoi{21}{4}{17}
\lmcsheading{}{\pageref{LastPage}}{}{}%
{Dec.~02,~2024}{Oct.~30,~2025}{}

\keywords{\renyi entropy, Shannon entropy, range query, data structure, data partitioning}

\newif\ifshowcomments
\showcommentstrue

\usepackage{amssymb, amsthm, amsmath, mathtools}
\usepackage{comment}
\usepackage[linesnumbered,lined,ruled, commentsnumbered, noend]{algorithm2e}
\usepackage{multirow}
\usepackage{cite}
\usepackage{multicol}
\usepackage{graphicx}
\usepackage{aliascnt}
\usepackage{balance}
\usepackage{mathtools}
\usepackage[normalem]{ulem}
\usepackage{booktabs}
\usepackage{fancyvrb}
\usepackage{bm}
\usepackage{colortbl}
\usepackage{footnote}
\usepackage{xspace}
\usepackage{url}
\usepackage{textcomp}
\usepackage{booktabs}
\usepackage{hyperref}
\usepackage{color}
\usepackage{xspace}
\usepackage{caption}
\usepackage{float}
\usepackage{wrapfig,lipsum}
\renewcommand{\paragraph}[1]{\medskip \noindent {\bf #1}}

\newcommand{\polylog}{\operatorname{polylog}}

\newcommand{\eps}{\varepsilon}
\newcommand{\T}{\mathcal{T}}

\renewcommand{\O}{\tilde{O}}

\renewcommand{\Re}{\mathbb{R}}

\newcommand{\new}[1]{{#1}}
\newcommand{\renyi}{R\'enyi\xspace}
\newcommand{\ren}{H}
\newcommand{\matrixmul}{\mathcal{M}}
\newcommand{\out}{\xi}
\theoremstyle{plain} 
\DeclareMathOperator*{\argmax}{arg\,max}


\begin{document}

\title{Range (R\'enyi) Entropy Queries and Partitioning}

\author[A.~Esmailpour]{Aryan Esmailpour\lmcsorcid{0009-0000-3798-9578}}[a]
\author[S.~Krishnan]{Sanjay Krishnan\lmcsorcid{0000-0001-6968-4090}}[b]
\author[S.~Sintos]{Stavros Sintos\lmcsorcid{0000-0002-2114-8886}}[a]

\address{University of Illinois Chicago}	
\email{aesmai2@uic.edu, stavros@uic.edu}  

\address{University of Chicago}	
\email{skr@uchicago.edu}  

\begin{abstract}
Data partitioning that maximizes/minimizes the Shannon entropy, or more generally the \renyi entropy is a crucial subroutine in data compression, columnar storage, and cardinality estimation algorithms.
These partition algorithms can be accelerated if we have a data structure to compute the entropy in different subsets of data when the algorithm needs to decide what block to construct. Such a data structure will also be useful for data analysts exploring different subsets of data to identify areas of interest. For example, subsets with high entropy might correspond to dirty data in data cleaning or areas with high biodiversity in ecology.
While it is generally known how to compute the Shannon or the \renyi entropy of a discrete distribution in the offline or streaming setting efficiently, we focus on the query setting where we aim to efficiently derive the entropy among a subset of data that satisfy some linear predicates. We solve this problem in a typical setting when we deal with real data, where data items are geometric points and each requested area is a query (hyper)rectangle. More specifically, we consider a set $P$ of $n$ weighted and colored points in $\Re^d$, where $d$ is a constant.
For the range S-entropy (resp. R-entropy) query problem, 
the goal is to construct a low space data structure, such that given a query (hyper)rectangle $R$, it computes the Shannon (resp. \renyi) entropy based on the colors and the weights of the points in $P\cap R$, in sublinear time.
We show conditional lower bounds proving that we cannot hope for data structures with near-linear space and near-constant query time for both the range S-entropy and R-entropy query problems. Then, we propose exact data structures for $d=1$ and $d>1$ with $o(n^{2d})$ space and $o(n)$ query time for both problems. We also provide a tuning parameter $t$ that the user can choose to bound the asymptotic space and query time of the new data structures. Next, we propose near linear space data structures for returning either an additive or a multiplicative approximation of the Shannon (resp. \renyi) entropy in $P\cap R$. Finally, we show how we can use the new data structures to efficiently partition time series and histograms with respect to the Shannon entropy.
\end{abstract}


\maketitle

\section{Introduction}
\label{sec:intro}
Discrete Shannon entropy is defined as the expected amount of information needed to represent an event drawn from a probability distribution. That is, given a probability distribution $\mathcal{D}$ over the set $\mathcal{X}$, the Shannon entropy is defined as\footnote{We use $\log(\cdot)$ for the logarithmic function with base $2$.}
$$
H(\mathcal{D}) = -\sum_{x \in \mathcal{X}} \mathcal{D}(x) \cdot \log \mathcal{D}(x).$$
In information theory, the \renyi entropy is a quantity that generalizes Shannon entropy and various other notions of entropy, including Hartley entropy, collision entropy, and min-entropy.
The \renyi entropy of order $\alpha>1$ for a distribution $\mathcal{D}$ is defined as\footnote{Although the \renyi entropy can be defined for any order $\alpha>0$, for simplicity we focus on the case where $\alpha>1$, as was also done in~\cite{obremski2017renyi}. Most of our methods and data structures can be extended to the range $\alpha\in(0,1)$.} $$\ren_\alpha(\mathcal{D})=-\frac{1}{\alpha-1}\log\left(\sum_{x\in \mathcal{X}}(\mathcal{D}(x))^\alpha\right).$$
It is known that $\lim_{\alpha\rightarrow 1}\ren_{\alpha}(\mathcal{D})=H(\mathcal{D})$.
\new{Some other common values of $\alpha$ that are used in the literature are: $\alpha=2$ (Collision entropy~\cite{bosyk2012collision}) and $\alpha\rightarrow \infty$ (Min entropy~\cite{konig2009operational}).}

The Shannon and \renyi entropy have a few different interpretations in information theory, statistics, and theoretical computer science such as:
\begin{itemize}
    \item (Compression) Entropy is a lower bound on data compressibility for datasets generated from the probability distribution via the Shannon source coding theorem.
    \item (Probability) Entropy measures a probability distribution's similarity to a uniform distribution over the set $\mathcal{X}$ on a scale of $[0, \log |\mathcal{X}| ]$.
    \item (Theoretical computer science) Entropy is used in the context of randomness extractors~\cite{vadhan2012pseudorandomness}.
\end{itemize}
Because of these numerous interpretations, entropy is a highly useful optimization objective. Various algorithms, ranging from columnar compression algorithms to histogram construction and data cleaning, maximize or minimize (conditional) entropy as a subroutine. 
These algorithms try to find high or low entropy data subsets.
Such algorithms can be accelerated if we have a data structure to efficiently calculate the entropy of different subsets of data.
While it is known how to compute the entropy of a distribution efficiently, there is little work on such ``range entropy queries'', where we want to derive efficiently the entropy among the data items that lie in a specific area. 
To make this problem more concrete, let us consider a few examples.
\begin{exa}[Columnar Compression]
\label{ex1}
An Apache Parquet file is a columnar storage format that first horizontally partitions a table into row groups, and then applies columnar compression along each column within the row group. A horizontal partitioning that minimizes the Shannon entropy within each partition can allow for more effective columnar compression~\cite{hansert2024partition}.
\end{exa}

\begin{exa}[Histogram Construction]
\label{ex2}
Histogram estimation often uses a uniformity assumption, where the density within a bucket is modeled as roughly uniform.  A partitioning that maximizes the (Shannon or \renyi) entropy within each partition can allow for more accurate estimation under uniformity assumptions~\cite{to2013entropy, markl2007consistent, jizba2014multifractal}. 
\end{exa}
\begin{exa}[Data Cleaning]
\label{ex3}
As part of data exploration, a data analyst explores different subsets of data to find areas with high Shannon entropy, i.e., high uncertainty. Usually, subsets of data or items in a particular area of the dataset with high entropy contain dirty data so they are good candidates for applying data cleaning methods. For example, Chu et al.~\cite{chu2015katara} used a (Shannon) entropy-based scheduling algorithm to maximize the uncertainty reduction of candidate table patterns. Table patterns are used to identify errors in data.
\end{exa}

\begin{exa}[Diversity index]
    \label{ex3a}
    The \renyi  entropy is used in ecology as a diversity index to measure how many different types (e.g., species) there exist in an area~\cite{chao2016phylogenetic, chao2015estimating}.
    An ecologist might explore different subsets of data to find areas with high or low entropy, corresponding to areas with high or low biodiversity.
\end{exa}

\new{
\begin{exa}[Network-Traffic Anomaly Detection]
\label{ex4}
The \renyi entropy has been employed to detect sudden distribution shifts in high-volume network traffic. 
By monitoring Rényi entropy over sliding windows of flow features, one can spot anomalies such as
DDoS bursts or malware beacons more sensitively than with Shannon entropy. 
For instance, Yu et al.~\cite{yu2024renyi, yu2020multiple} design a Rényi-entropy–driven detector that automatically
sets dynamic thresholds and achieves higher precision and recall than state-of-the-art statistical
baselines on real backbone-trace datasets.
\end{exa}
}


\begin{figure}
    \includegraphics[scale=0.4]{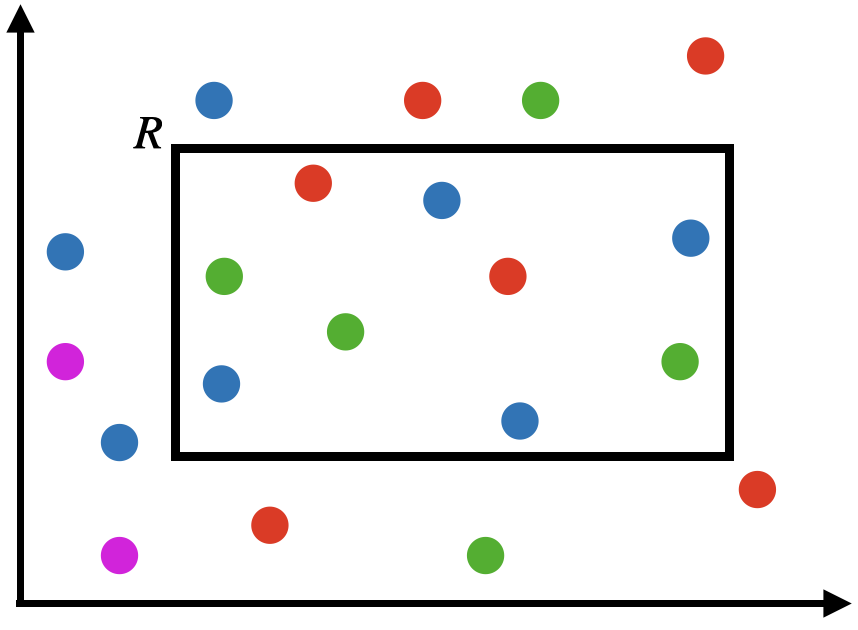}
   \caption{A set $P$ of $20$ points in $\Re^2$. For simplicity, assume that the weight of every point is $1$, i.e., $w(p)=1$ for every $p\in P$. Each point is associated with one color (or category) red, green, blue, or purple. There are three different colors among the points in $P\cap R$, namely red, green, and blue. The distribution $\mathcal{D}_R$ is defined over $3$ outcomes: red, green, and blue. The probability of red is $\mathcal{D}_R(\mathsf{red})=\frac{2}{9}$ because there are $2$ red points and $9$ total points in $P\cap R$. Similarly, the probability of green is $\mathcal{D}_R(\mathsf{green})=\frac{3}{9}$ and the probability of blue is $\mathcal{D}_R(\mathsf{blue})=\frac{4}{9}$. We have $H(P\cap R)=H(\mathcal{D}_R)=\frac{2}{9}\log\frac{9}{2}+\frac{3}{9}\log\frac{9}{3}+\frac{4}{9}\log\frac{9}{4}\approx 1.53$ and $\ren_2(P\cap R)=\ren_2(\mathcal{D}_R)=-\log\left((2/9)^2+(3/9)^2+(4/9)^2\right)\approx 1.48$.\label{fig:motiv}}
\end{figure}

The first two examples above have a similar structure, where an outer algorithm leverages a subroutine that identifies data partitions that minimize or maximize entropy. In the last two examples, we aim to explore areas with high or low entropy by running arbitrary range entropy queries.
We formulate the problem of range entropy query in a typical and realistic setting when we deal with real data: We assume that each item is represented as a point in Euclidean space.
More specifically, we consider a set $P$ of $n$ weighted and colored points in $\Re^d$. Each point $p\in P$ has a color (category) $u(P)$ and a weight $w(p)\in \Re$.
We aim to compute the Shannon or \renyi entropy of the points in $P\cap R$.
The entropy of $P\cap R$ is defined as the entropy of a discrete distribution $\mathcal{D}_R$ over the colors in $P\cap R$: Let $U_R$ be the set of all colors of the points in $P\cap R$. For each color $u_j\in U_R$, we define a value (we can also refer to it as an independent event or outcome) $\out_j$ with probability $\mathcal{D}_R(\out_j)$ equal to the sum of weights of points with color $u_j$ in $P\cap R$ divided by the sum of the weights of all points in $P\cap R$.
In other words, the discrete distribution $\mathcal{D}_R$ has $|U_R|$ outcomes corresponding to the points' colors in $P\cap R$, and each outcome $\out_j=u_j\in U_R$ has probability $\mathcal{D}_R(\out_j)=\frac{\sum_{p\in P\cap R, u(p)=u_j}w(p)}{\sum_{p\in P\cap R}w(p)}$.
Notice that $\sum_{u_j\in U_R}\mathcal{D}_R(\out_j)=1$.
The Shannon entropy of $P\cap R$ is denoted by $H(P\cap R)=H(\mathcal{D}_R)$, and the \renyi entropy of $P\cap R$ is denoted by $\ren_\alpha(P\cap R)=\ren_\alpha(\mathcal{D}_R)$.
See Figure~\ref{fig:motiv} for an example.
The goal is to construct a data structure on $P$ such that given a region (for example a rectangle) $R$, it computes the Shannon (or \renyi) entropy of the points in $P\cap R$, i.e., the Shannon (or \renyi) entropy of the distribution $\mathcal{D}_R$.
Unfortunately, we do not have direct access to distribution $\mathcal{D}_R$; we would need $\Omega(n)$ time to construct the entire distribution $\mathcal{D}_R$ in the query phase. Using the geometry of the points along with key properties from information theory we design data structures such that after some pre-processing of $P$, given any query rectangle $R$, we compute $H(\mathcal{D}_R)$,  $\ren_\alpha(\mathcal{D}_R)$ without constructing $\mathcal{D}_R$ explicitly.



\begin{defi}[Range S-entropy query problem]
Given a set $P$ of $n$ weighted and colored points in $\Re^d$, the goal is to construct a data structure with low space such that given any query rectangle $R$, it returns $H(P\cap R)$ in sub-linear time $o(n)$.
\end{defi}
\begin{defi}[Range R-entropy query problem]
Given a set $P$ of $n$ weighted and colored points in $\Re^d$, and a parameter $\alpha>1$, the goal is to construct a data structure with low space such that given any query rectangle $R$, it returns $\ren_{\alpha}(P\cap R)$ in sub-linear time $o(n)$.
\end{defi}
We assume throughout that the dimension $d$ is constant.

As we show later, both query problems can be solved by constructing near linear \new{size} data structures with query time that depends linearly on the number of colors (see Section~\ref{sec:prelim}).
However, these are efficient data structures with $o(n)$ query time because in the worst case the number of different colors is $O(n)$. Our goal is to construct data structures whose query time is always sublinear with respect to $n$. We study both exact and approximate data structures. Exact data structure return $H(P\cap R)$ (resp. $\ren_\alpha(P\cap R)$) exactly, while approximated data structure return either an additive or multiplicative approximation of $H(P\cap R)$ (resp. $\ren_\alpha(P\cap R)$).

We note that known algorithms for estimating the Shannon entropy usually do not work for estimating the \renyi entropy and vice versa. Hence, different data structures are needed to solve the range S-entropy query problem and the range R-entropy query problem.

\new{
Our range S-entropy (equivalently R-entropy) query is essentially a \emph{range colored query}, as commonly defined in the literature. Range colored queries have been extensively studied, both in theory and in the database community. Typically, they are modeled as follows:
Given a set $P$ of colored points in $\mathbb{R}^d$ with $n = |P|$, and a real-valued function $f$ defined over the colored points of $P$, the goal is to construct a data structure that efficiently computes $f(P \cap R)$ for any query range $R$.
Various functions $f$ have been studied in the past, including counting, reporting, and ratio computations.
We discuss the connection between our problems and range colored queries in the related work, later in this section.
}

\paragraph{Useful notation.}
Throughout the paper we use the following notation.
Let $P$ be a set of $n$ points in $\Re^d$ and let $U$ be a set of $m$ colors $U=\{u_1, \ldots, u_m\}$.
Each point $p\in P$ is associated with a color from $U$, i.e., $u(p)=u_i$ for $u_i\in U$. Furthermore, each point $p\in P$ is associated with a non-negative weight $w(p)\geq 0$.
For a subset of points $P'\subseteq P$, let $P'(u_i)=\{p\in P'\mid u(p)=u_i\}$, for $i\leq m$, be the set of points having color $u_i$.
Let $u(P')=\{u_i\mid \exists p\in P', u(p)=u_i\}$ be the set of colors of the points in $P'$. Finally, let $w(P')=\sum_{p\in P'}w(p)$.

\paragraph{Summary of Results.}
One of the main challenges with range entropy queries is that entropy is not a \emph{decomposable} quantity. Let $P_1, P_2$ be two sets of points such that $P_1\cup P_2=P$ and $P_1\cap P_2=\emptyset$. If we know $H(P_1), H(P_2)$ there is no straightforward way to compute $H(P_1\cup P_2)$. Similarly,  if we know $\ren_{\alpha}(P_1), \ren_{\alpha}(P_2)$ there is no straightforward way to compute $\ren_{\alpha}(P_1\cup P_2)$. In this paper, we build low space data structures such that given a rectangle $R$, we visit points or subsets of points in $P\cap R$ in a particular order and carefully update the overall entropy. All results for the S-entropy query can be seen in Table~\ref{tab:results}, while all results for R-entropy query can be seen in Table~\ref{tab:Renresults}.
\begin{table}[t]
    \centering
    \resizebox{\columnwidth}{!}{
    \begin{tabular}{|c|c|c|c|}
    \hline
    Type & Space & Query Time & Preprocessing\\\hline
    \ifshowcomments
    \new{Lower bound (Space/Query), $d=1$}& \new{$\widetilde{\Omega}\left(\frac{n^2}{(Q(n))^4}\right)$} & \new{$Q(n)$} & \new{--} \\\hline
    \fi
    Lower bound (Space/Query), $d\geq 2$& $\widetilde{\Omega}\left(\left(\frac{n}{Q(n)}\right)^2\right)$ & $Q(n)$ & -- \\\hline
    Lower bound (Prep./Query), $d\geq 1$& -- & $Q(n)$ & $\Omega(\max\{\matrixmul(\sqrt{n})-nQ(n),1\})$ \\\hline
    $d=1$, exact & $O\left(n^{2(1-t)}\right)$ & $\O\left(n^t\right)$ & $O\left(n^{2-t}\right)$\\\hline
    $d>1$, exact & $\O\left(n^{(2d-1)t+1}\right)$ & $\O\left(n^{1-t}\right)$ & $\O\left(n^{(2d-1)t+1}\right)$\\\hline
    $d\geq 1$, $\Delta$-additive approx. & $\O\left(n\right)$ & $\O\left(\frac{1}{\Delta^2}\right)$ & $\O\left(n\right)$\\\hline
    $d\geq 1$, $(1+\eps)$-multiplicative approx. & $\O\left(n\right)$ & $\O\left(\frac{1}{\eps^2}\right)$ & $\O\left(n\right)$\\\hline
    $d=1$, $\eps$-additive and &\multirow{2}{*}{$\O\left(\frac{n}{\eps}\right)$} & \multirow{2}{*}{$\O\left(1\right)$} & \multirow{2}{*}{$\O\left(\frac{n}{\eps}\right)$} \\
    $(1+\eps)$-multiplicative approx.& & &\\\hline
    \end{tabular}
    }
    \caption{New results for the S-entropy query problem (lower bounds in the first two rows and data structures with their complexities in the next rows). $t\in[0,1]$ is a tune parameter. $\O(\cdot)$ and $\widetilde{\Omega}(\cdot)$ notation hides a $\log^{O(1)}n$ factor, where the $O(1)$ exponent is at most linear on $d$. $Q(n)$ is any function of $n$ that represents the query time of a data structure for S-entropy queries over $n$ points. $\matrixmul(\sqrt{n})$ is a function of $\sqrt{n}$ that represents the running time of the fastest algorithm to multiply two $\sqrt{n}\times\sqrt{n}$ boolean matrices.
    }
    \label{tab:results}
\end{table}

\begin{table}[t]
    \centering
    \resizebox{\columnwidth}{!}{
    \begin{tabular}{|c|c|c|c|}
    \hline
    Type & Space & Query Time & Preprocessing\\\hline
     \ifshowcomments
    \new{Lower bound (Space/Query), $d=1$}& \new{$\widetilde{\Omega}\left(\frac{n^2}{(Q(n))^4}\right)$} & \new{$Q(n)$} & \new{--} \\\hline
    \fi
    Lower bound (Space/Query), $d\geq 2$& $\widetilde{\Omega}\left(\left(\frac{n}{Q(n)}\right)^2\right)$ & $Q(n)$ & -- \\\hline
    Lower bound (Prep./Query), $d\geq 1$& -- & $Q(n)$ & $\Omega(\max\{\matrixmul(\sqrt{n})-nQ(n),1\})$ \\\hline
    $d=1$, exact & $O\left(n^{2(1-t)}\right)$ & $\O\left(n^t\right)$ & $O\left(n^{2-t}\right)$\\\hline
    $d>1$, exact & $\O\left(n^{(2d-1)t+1}\right)$ & $\O\left(n^{1-t}\right)$ & $\O\left(n^{(2d-1)t+1}\right)$\\\hline
    $d\geq 1$, $\alpha\in (1,2]$, $\Delta$-add. approx. & $\O\left(n\right)$ & $\O\left(\min\left\{\frac{\alpha}{(\alpha-1)^2\Delta^2}, \frac{1}{(1-2^{(1-\alpha)\Delta})^2}\right\}\cdot n^{1-1/\alpha}\right)$ & $\O\left(n\right)$\\\hline
     $d\geq 1$, $\alpha>2$, $\Delta$-add. approx. & $\O\left(n\right)$ & $\O\left(\min\left\{\frac{\alpha}{\Delta^2}, \frac{1}{(1-2^{(1-\alpha)\Delta})^2}\right\}\cdot n^{1-1/\alpha}\right)$ & $\O\left(n\right)$\\\hline
    $d= 1$, $\eps\cdot\frac{\alpha+1}{\alpha-1}$-add. approx. & $\O\left(\frac{\alpha \cdot n}{\eps}\right)$ & $\O\left(\log \alpha\right)$ & $\O\left(\frac{\alpha \cdot n}{\eps}\right)$\\\hline
    $d\geq 1$, $\alpha\in(1,2]$,$(1+\eps)$-mult. approx.&$\O(n)$&$\O\left(\frac{\alpha}{(\alpha-1)^2\eps^2}\cdot n^{1-1/\alpha}\right)$&$\O(n)$\\\hline
    $d\geq 1$, $\alpha>2$,$(1+\eps)$-mult. approx.&$\O(n)$&$\O\left(\frac{\alpha}{\eps^2}\cdot n^{1-1/\alpha}\right)$&$\O(n)$\\\hline
    \end{tabular}
    }
    \caption{New results for the R-entropy query problem (lower bounds in the first two rows and data structures with their complexities in the next rows). $t\in[0,1]$ is a tune parameter. $\O(\cdot)$ and $\widetilde{\Omega}(\cdot)$ notation hides a $\log^{O(1)}n$ factor, where the $O(1)$ exponent is at most linear on $d$. $Q(n)$ is any function of $n$ that represents the query time of a data structure for R-entropy queries over $n$ points. $\matrixmul(\sqrt{n})$ is a function of $\sqrt{n}$ that represents the running time of the fastest algorithm to multiply two $\sqrt{n}\times\sqrt{n}$ boolean matrices.
    }
    \label{tab:Renresults}
\end{table}

\begin{itemize}
    \item In Section~\ref{sec:prelim} we introduce some useful notation and we revisit a way to update the Shannon entropy of the union of two sets with no color in common in $O(1)$ time. Similarly, we show how to update the \renyi entropy of the union of two sets with no color in common in $O(1)$ time.
    \item In Section~\ref{sec:lowerbound}, we propose space-query and preprocessing-query tradeoff lower bound proofs for the S-entropy and R-entropy queries.
    First, we study the preprocessing-query tradeoff of our queries for $d\geq 1$. We reduce the problem of multiplying two $\sqrt{n}\times \sqrt{n}$ boolean matrices to the range S-entropy query problem (resp. R-entropy query problem) in $\Re^1$ over $n$ points. We prove a conditional lower bound showing that if we have a data structure with $P(n)$ preprocessing time and $Q(n)$ query time then the multiplication of two $\sqrt{n}\times\sqrt{n}$ boolean matrices can be done in $O(P(n)+n\cdot Q(n))$ time. Equivalently,
    any data structure for the range S-entropy (resp. R-entropy) query problem with $Q(n)$ query time must have $\Omega(\max\{\matrixmul(\sqrt{n})-n\cdot Q(n),1\})$ preprocessing time.
    Second, we study the space-query tradeoff of our queries for $d\geq 2$. We
    reduce the set intersection problem to the range S-entropy query problem (resp. R-entropy query problem) in $\Re^2$. We prove a conditional lower bound showing that
    any data structure with $Q(n)$ query time must have $\widetilde{\Omega}\left(\left(\frac{n}{Q(n)}\right)^2\right)$ space. Hence,
    we cannot hope for $O(n\polylog n)$ space and $O(\polylog n)$ query time data structures for the range S-entropy (resp. R-entropy) query problems.
    \ifshowcomments
    \new{Using ideas from the lower bound with preprocessing-query tradeoff, we also show a space-query tradeoff of our queries for $d=1$, which is weaker than the lower bound we got for $d\geq 2$. In particular, for $d=1$, we prove a conditional lower bound showing that any data structure with $Q(n)$ query time must have $\widetilde{\Omega}\left(\frac{n^2}{(Q(n))^4}\right)$ space.}
    \fi
    \item Exact data structures for $d=1$. In Section~\ref{subsec:DS1}, we efficiently partition the input points with respect to their $x$ coordinates into buckets, where each bucket contains a bounded number of points. 
    Given a query interval $R$, we visit the bounded number of points in buckets that are partially intersected by $R$ and we update the overall Shannon entropy (resp. \renyi entropy) of the buckets that lie completely inside $R$. For any parameter $t\in [0,1]$ chosen by the user, we construct a data structure in $O(n^{2-t})$ time, with $O(n^{2(1-t)})$ space and $O(n^t\log n)$ query time. The same guarantees hold for both S-entropy and R-entropy queries.
    \item In Section~\ref{subsec:DSd}, instead of partitioning the points with respect to their geometric location, we partition the input points with respect to their colors. We construct $O(n^{1-t})$ blocks where two sequential blocks contain at most one color in common. Given a query rectangle, we visit all blocks and carefully update the overall Shannon entropy (resp. \renyi entropy).
    For any tune parameter $t\in [0,1]$ chosen by the user, we construct a data structure in $O(n\log^{2d}n + n^{(2d-1)t+1}\log^{d+1} n)$ time with $O(n\log^{2d-1}n + n^{(2d-1)t+1})$ space and $O(n^{1-t}\log^{2d} n)$ query time. The same guarantees hold for both S-entropy and R-entropy queries.
    \item Additive approximation --- S-entropy. In Subsection~\ref{subsec:ApproxAdd} we use known results for estimating the Shannon entropy of an unknown distribution by sampling in the \emph{dual access model}. We propose efficient data structures that apply sampling in a query range in the dual access model.
    We construct a data structure in $O(n\log^{d}n)$ time, with $O(n\log^{d-1}n)$ space and $O\left(\frac{\log^{d+3} n}{\Delta^2}\right)$ query time. The data structure returns an additive $\Delta$-approximation of the Shannon entropy in a query hyper-rectangle, with high probability. It also supports dynamic updates in $O(\log^d n)$ time.
    \item Multiplicative approximation --- S-entropy. In Subsection~\ref{subsec:ApproxMult}
    we propose a multiplicative approximation of the entropy using the results for estimating the entropy in a streaming setting. One significant difference with the previous result is that in information theory at least $\Omega\left(\frac{\log n}{\eps^2\cdot H'}\right)$ sampling operations are needed to find get an $(1+\eps)$-multiplicative approximation, where $H'$ is a lower bound of the entropy. Even if we have efficient data structures for sampling (as we have in additive approximation) we still do not have an efficient query time if the real entropy $H$ is extremely small.
    We overcome this technical issue by considering two cases: i) there is no color with a total weight of more than $2/3$, and ii) there exists a color with a total weight of at most $2/3$.
    While in the latter case, the entropy can be extremely small, an additive approximation is sufficient in order to get a multiplicative approximation. In the former one, the entropy is large so we apply the standard sampling method to get a multiplicative approximation. 
    We construct a data structure in $O(n\log^{d}n)$ time, with $O(n\log^{d}n)$ space and $O\left(\frac{\log^{d+3}}{\eps^2}\right)$ query time. The data structure returns a multiplicative $(1+\eps)$-approximation of the Shannon entropy in a query hyper-rectangle, with high probability. It also supports dynamic updates in $O(\log^d n)$ time.
    \item Additive and multiplicative approximation --- S-entropy. In Subsection~\ref{subsec:ApproxAddMult1}, we propose a new data structure for approximating the entropy in the query range for $d=1$. We get the intuition from data structures that count the number of colors in a query interval. Such a data structure finds a geometric mapping to a different geometric space, such that if at least a point with color $u_i$ exists in the original $P\cap R$, then there is a unique point with color $u_i$ in the corresponding query range in the new geometric space. Unfortunately, this property is not sufficient for finding the entropy. Instead, we need to know more information about the weights of the points and the entropy in canonical subsets of the new geometric space, which is challenging to do.
    We construct a data structure in $O\left(\frac{n}{\eps}\log^5 n\right)$ time, with $O\left(\frac{n}{\eps}\log^2 n\right)$ space and $O\left(\log^2 n \log\frac{\log n}{\eps}\right)$ query time. The data structure returns an $(1+\eps)$-multiplicative and $\eps$-additive approximation of the entropy.
    \item Additive approximation --- R-entropy. In Subsection~\ref{subsec:RenApprox1}, we use results for estimating the \renyi entropy of an unknown distribution by sampling in the \emph{samples-only model} and the dual access model.
    We construct a data structure in $O(n\log^{d}n)$ time, with $O(n\log^{d-1}n)$ space and $O\left(\min\left\{\frac{\alpha}{\Delta^2}, \frac{1}{(1-2^{(1-\alpha)\Delta})^2}\right\}\cdot n^{1-1/\alpha}\log^{d+1} n\right)$ query time if $\alpha>2$ and $O\left(\min\left\{\frac{\alpha}{(\alpha-1)^2\Delta^2}, \frac{1}{(1-2^{(1-\alpha)\Delta})^2}\right\}\cdot n^{1-1/\alpha}\log^{d+1} n\right)$ query time if $\alpha\in(1,2]$. The data structure returns an additive $\Delta$-approximation of the \renyi entropy with high probability. It also supports dynamic updates in $O(\log^d n)$ time. The data structure works for any $d\geq 1$. In Subsection~\ref{subsec:RenApprox2}, for $d=1$, we construct a faster and deterministic data structure using ideas from the additive and multiplicative approximation data structure we designed for the range S-entropy query problem. In particular, for the range R-entropy query problem in $\Re$ we design a data structure in $O(\frac{\alpha\cdot n}{\eps}\log^2 n)$ time, with $O(\frac{\alpha\cdot n}{\eps}\log^2 n)$ space and $O(\log^2 n \log\frac{\alpha\cdot \log n}{\eps})$ query time. The data structure returns an 
    $\eps\cdot\frac{\alpha+1}{\alpha-1}$-additive approximation of the \renyi entropy.

    \item Multiplicative approximation -- R-entropy. In Subsection~\ref{subsec:Rmult}, we propose a multiplicative approximation of the \renyi entropy modifying a known algorithm in~\cite{harvey2008sketching} for estimating the \renyi entropy in the streaming setting. Interestingly, there is no known multiplicative approximation algorithm of the \renyi entropy in the streaming setting for every $\alpha>1$. The multiplicative approximation in~\cite{harvey2008sketching} works for $\alpha\in(1,2]$. Similarly, to the best of our knowledge, there is no known multiplicative approximation in the samples-only or dual access model given an unknown distribution. Taking advantage of the query setting and the geometry of the input points, we are able to design a data structure that returns a multiplicative approximation for every $\alpha>1$. More specifically, we construct a data structure in  $O(n\log^d n)$ time, with $O(n\log^d n)$ space and $O\left(\frac{\alpha}{(\alpha-1)^2\eps^2}\cdot n^{1-1/\alpha}\log^d n\right)$ query time if $\alpha\in (1,2]$, and $O\left(\frac{\alpha}{\eps^2}\cdot n^{1-1/\alpha}\log^d n\right)$ time if $\alpha>2$. The data structure returns a multiplicative $(1+\eps)$-approximation of the \renyi entropy in a query hyper-rectangle, with high probability. It also supports dynamic updates in $O(\log^d n)$ time.
    
    \item Partitioning using entropy. In Section~\ref{sec:partition} we show how our new data structures for the range S-entropy query problem can be used to run partitioning algorithms over time series, histograms, and points efficiently.
\end{itemize}

\paragraph{Comparison with the conference version.}
An earlier version of this work~\cite{krishnan2024range} appeared in ICDT 2024. There are multiple new results in this new version of our work. The main differences from the previous version are:
\begin{itemize}
    \item In Section~\ref{sec:lowerbound} we propose a new (conditional) lower bound proof with preprocessing-query tradeoff for any $d\geq 1$. In the previous version, we only had a (conditional) lower bound with space-query tradeoff for $d\geq 2$.
    \ifshowcomments
    \new{Furthermore, we added a new lower bound proof with space-query tradeoff for $d=1$.}
    \fi
    All lower bounds hold for both range S-entropy and R-entropy queries. 
    \item We extended all results from the range S-entropy query to the range R-entropy query problem. In the ICDT version, we only considered the Shannon entropy. In the new version, we design new data structures to compute the \renyi entropy of any order $\alpha>1$ in a query hyper-rectangle constructing near-linear size data structures with sublinear query time. While the exact data structures for the range R-entropy queries share similar ideas with the exact data structures for the range S-entropy queries, new techniques and novel ideas are required for the approximate data structures. All results in Table~\ref{tab:Renresults} are new.
    \item We included all the missing proofs and details from the ICDT version. More specifically, in the new version, we included: an efficient construction algorithm of the exact data structure in Subsection~\ref{subsec:DS1}, an efficient construction algorithm of the exact data structure in Subsection~\ref{subsec:DSd}, the construction of a range tree to sample a point excluding the points of a specific color in Subsection~\ref{subsec:ApproxMult}, the full correctness proof of the multiplicative algorithm in Subsection~\ref{subsec:ApproxMult}, the proof of Lemma~\ref{lem:monotone}, and the construction algorithm of the data structure in Subsection~\ref{subsec:ApproxAddMult1}.
\end{itemize}

\paragraph{Related work.}
Shannon entropy has been used a lot for partitioning to create histograms in databases.
For example, To et al.~\cite{to2013entropy} use entropy to design histograms for selectivity estimation queries.
In particular, they aim to find a partitioning of $k$ buckets in $1$d such that the cumulative entropy is maximized.
They consider a special case where they already have a histogram (so all items of the same color are accumulated to the same location) and the goal is to partition the histogram into $k$ buckets. They propose a greedy algorithm that finds a local optimum solution. However, there is no guarantee on the overall optimum partitioning. Using our new data structures, we can find the entropy in arbitrary range queries, which is not supported in~\cite{to2013entropy}. Our data structures can also be used to accelerate partitioning algorithms with theoretical guarantees (see Subsection~\ref{sec:partition}) in a more general setting, where points of the same color have different locations.

In addition, there are a number of papers that use the Shannon entropy to find a clustering of items.
Cruz et al.~\cite{cruz2011entropy} use entropy for the community detection problem in augmented social networks. They describe a greedy algorithm that exchanges two random nodes between two random clusters if the entropy of the new instance is lower.
Barbar{\'a} et al.~\cite{barbara2002coolcat} use the \emph{expected entropy} for categorical clustering. They describe a greedy algorithm that starts with a set of initial clusters, and for each new item decides to place it in the cluster that has the lowest entropy.
Li et al.~\cite{li2004entropy} also use the expected entropy for categorical clustering but they extend it to probabilistic clustering models.
Finally, Ben-Gal et al.~\cite{ben2019clustering} use the expected entropy to develop an entropy-based clustering measure that measures the homogeneity of mobility patterns within clusters of users.
All these methods do not study the problem of finding the entropy in a query range efficiently.
While these methods perform well in practice, it is challenging to derive theoretical guarantees. In spatial databases, items are represented as points in $\Re^d$, so our new data structures could be used to find faster and better entropy-based clustering techniques. For example, we could run range entropy queries with different radii around a center until we find a cluster with a small radius and small (or large) expected entropy.

There is a lot of work on computing an approximation of the Shannon and \renyi entropy in the streaming setting~\cite{bhuvanagiri2006estimating, chakrabarti2006estimating, guha2006streaming, li2011new}. For a stream of $m$ distinct values ($m$ colors in our setting) Chakrabarti et al.~\cite{chakrabarti2007near} compute an $(1+\eps)$-multiplicative approximation of the entropy in a single pass using $O(\eps^{-2}\log (\delta^{-1})\log m)$ words of space, with probability at least $1-\delta$.
For a stream of size $n$ ($n$ points in our setting) Clifford and Cosma~\cite{clifford2013simple} propose a single-pass $\eps$-additive algorithm using $O(\eps^{-2}\log n \log (n\eps^{-1}))$ bits with bounded probability.
Harvey et al.~\cite{harvey2008sketching} allow deletions in the streaming setting and they propose a single-pass $(1+\eps)$-multiplicative algorithm using $\O(\eps^{-2}\log^2 m)$ words of space with bounded probability. Furthermore, they propose a single-pass $\eps$-additive approximation using $\O(\eps^{-2}\log m)$ words of space. Finally, they design a streaming algorithm for multiplicative approximation of the \renyi entropy using $O(\frac{\log m}{|1-\alpha|\eps^2})$ bits of space, for $\alpha\in(1,2]$.
While some techniques from the streaming setting are useful in our query setting, the two problems are fundamentally different. In the streaming setting, preprocessing is not allowed, all data are processed one by one and an estimation of the entropy is maintained. In our setting, the goal is to construct a data structure such that given any query range, the entropy of the items in the range should be computed in sublinear time, i.e., without processing all items in the query range during the query phase.

Let $\mathcal{D}$ be an unknown discrete distribution over $n$ values.
There is an interesting line of work on approximating the Shannon and the \renyi entropy of $\mathcal{D}$ by applying oracle queries in the \emph{dual access model}.\footnote{In the dual access model we are given an oracle  to sample and an oracle to evaluate the probability of an outcome from an unknown distribution.  A more formal definition is given in Section~\ref{sec:approx}.}
Batu et al.~\cite{batu2002complexity} give an $(1+\eps)$-multiplicative approximation of the Shannon entropy of $\mathcal{D}$ with oracle complexity $O(\frac{(1+\eps)^2\log^2 n}{\eps^2\cdot H'})$, where $H'$ is a lower bound of the actual entropy $H(\mathcal{D})$.
Guha et al.~\cite{guha2006streaming} improve the oracle complexity to $O(\frac{\log n}{\eps^2\cdot H'})$, matching the lower bound $\Omega(\frac{\log n}{(2+\eps)\eps^2\cdot H'})$ found in~\cite{batu2002complexity}.
Canonne and Rubinfeld~\cite{canonne2014testing} describe a $\Delta$-additive approximation of the Shannon entropy with oracle complexity $O(\frac{\log^2\frac{n}{\Delta}}{\Delta^2})$. Caferov et al.~\cite{caferov2015optimal} show that $\Omega(\frac{\log^2 n}{\Delta^2})$ oracle queries are necessary to get $\Delta$-additive approximation. They also describe a $\Delta$-additive approximation of the \renyi entropy with oracle complexity $O(\frac{n^{1-1/\alpha}}{(1-2^{(1-\alpha)\Delta})^2}\log n)$.
Finally, Obremski and Skorski~\cite{obremski2017renyi} use $O(2^{\frac{\alpha-1}{\alpha}\ren_\alpha(\mathcal{D})}\frac{\log n}{\Delta^2})$ random samples from the unknown distribution $\mathcal{D}$ (\emph{samples-only model}) to get an additive $\Delta$ approximation of the \renyi entropy.
All these algorithms return the correct approximations with constant probability. If we want to guarantee the result with high probability then the sample complexity is multiplied by a $\log n$ factor.

As pointed out earlier, our range S-entropy and R-entropy queries are essentially range colored queries. Next, we discuss known data structures for range colored queries, including range colored counting and reporting.

For range colored counting, the goal is to return the number of colors in $P\cap R$, i.e., $|u(P\cap R)|$. For range colored reporting, the goal is to report all colors in $P\cap R$.
For $d\leq 3$, Gupta et al.~\cite{gupta1995further} study the range colored counting/reporting queries. For $d=1$, where the query range is an interval, they design a data structure for the range colored reporting query with $O(n)$ space and $O(\log n + \mathsf{OUT})$, where $\mathsf{OUT}$ is the output size. For the range colored counting query, the data structure has $O(n)$ space and $O(\log n)$ query time. For $d=2$, where the query range is a rectangle, they derive a data structure for the range colored counting query with $O(n^2\log^2 n)$ space and $O(\log^2 n)$ query time. For the range colored reporting query the data structure has $O(n\log^2 n)$ space and $O(\log n + \mathsf{OUT})$ query time. For $d=3$, where the query range is a box, they design a data structure for the range colored reporting problem with $O(n\log^4 n)$ space and $O(\log^2 n +\mathsf{OUT})$ query time. They extend their result to dynamic data structure and other range queries such as open rectangles.
Chan et al.~\cite{chan2020further} study range colored reporting queries for $d=3$. When the query range is a box, they design a randomized data structure with $O(n\polylog (n))$ space and $O(\mathsf{OUT}\cdot \polylog(n))$ expected query time. 
See~\cite{gupta2018computational} for a survey on range colored queries. 
Kaplan et al.~\cite{kaplan2007counting} study the range colored counting problem for any constant $d\geq 2$. Their data structure has $O(n^d\log^{2d-2}n)$ space and $O(\log^{2d-2})$ query time. More generally, for any threshold parameter $1\leq X\leq n$, they obtain a data structure with $O\left(\frac{n^d}{X^{d-1}}\log^{2d-1}n\right)$ space and $O(X\log^d n + \log^{2d-1}n)$ query time.
Since exact range colored counting queries are generally challenging, there are also papers in the literature~\cite{nekrich2014efficient, rahul2017approximate} proposing near optimal  data structures for approximate range colored counting queries for $d\leq 3$, over various query ranges.
To the best of our knowledge, none of these data structures cannot be extended to handle the more complex range S-entropy and R-entropy queries.

A different type of range colored queries has been studied in~\cite{rahul2009data, rahul2010range}. Give a set $P$ of $n$ (weighted) colored points in $\Re^d$, they design efficient data structures such that, given a query hyper-rectangle $R$, for every color $u_i\in u(P\cap R)$ they report the weighted sum of $P(u_i)\cap R$, the maximum weight of a point in $P(u_i)\cap R$, or the bounding box of $P(u_i)\cap R$.
Finally, a new type of range colored queries, which is related to \emph{data discovery}, have been studied in~\cite{afshani2023range, esmailpour2025theoretical}. More specifically, for $d\leq 3$, Afshani et al.~\cite{afshani2023range} design am efficient data structure such that given a query halfspace (or open box) $R$ and a parameter $\eps\in (0,1)$, it returns all colors that contain at least $\eps\cdot|P\cap R|$ points in $P\cap R$, i.e., it returns a color $u_i$ if $|P(u_i)\cap R|\geq \eps\cdot |P\cap R|$, along with their frequencies with an additive error of $\eps|P\cap R|$. Furthermore, for any constant $d$,  Esmailpour et al.~\cite{esmailpour2025theoretical} design an efficient data structure such that given a query hyper-rectangle $R$ and an interval $\theta\subseteq [0,1]$, it returns all colors whose fraction of points in $R$ lies in $\theta$, i.e., it returns a color $u_i$ if $\frac{|P(u_i)\cap R|}{|P(u_i)|}\in \theta$.
While these queries are more complex than range colored reporting queries, their objectives are fundamentally different than the objectives in S-entropy and R-entropy queries. Furthermore, they focus on reporting colors that satisfy a condition, so in the worst case their query time depends on $|U|=m$. We aim for data structures with sublinear query time with respect to both $n$ and $m$.

\section{Preliminaries}
\label{sec:prelim}
Let $P$ be a set of $n$ colored points in $\Re^d$ and let $P'\subseteq P$.
The Shannon entropy of set $P'$ is defined as
$$H(P')=\sum_{i=1}^m\frac{w(P'(u_i))}{w(P')}\log\left(\frac{w(P')}{w(P'(u_i))}\right),$$
while the \renyi entropy of order $\alpha$ of $P'$ is defined as
$$\ren_{\alpha}(P')=\frac{1}{\alpha-1}\log\left(\frac{1}{\sum_{i=1}^m \left(\frac{w(P'(u_i))}{w(P')}\right)^{\alpha}}\right).$$


For simplicity, and without loss of generality, we can consider throughout the paper that $w(p)=1$ for each point $p\in P$. All the results, proofs, and properties we show hold for the weighted case straightforwardly. Hence, from now on, we assume $w(p)=1$ and the definition of Shannon entropy becomes,
\begin{equation}
\label{eq:def}
H(P')=\sum_{i=1}^m\frac{|P'(u_i)|}{|P'|}\log\left(\frac{|P'|}{|P'(u_i)|}\right)=\sum_{u_i\in u(P')}\frac{|P'(u_i)|}{|P'|}\log\left(\frac{|P'|}{|P'(u_i)|}\right).
\end{equation}
If $|P'(u_i)|=0$, then we consider that $\frac{|P'(u_i)|}{|P'|}\log\left(\frac{|P'|}{|P'(u_i)|}\right)= 0$.

The definition of \renyi entropy becomes,
$$\ren_{\alpha}(P')=\frac{1}{\alpha-1}\log\left(\frac{1}{\sum_{i=1}^m \left(\frac{|P'(u_i)|}{|P'|}\right)^{\alpha}}\right).$$

\paragraph{Updating the Shannon entropy.}
Let $P_1, P_2 \subset P$ be two subsets of $P$ such that $u(P_1)\cap u(P_2)=\emptyset$.
The next formula for the entropy of $P_1\cup P_2$ is known (see~\cite{to2013entropy})
\begin{equation}
\label{eq:entropyupdate}
H(P_1\cup P_2)=\frac{|P_1|H(P_1)+|P_2|H(P_2)+|P_1|\log\left(\frac{|P_1|+|P_2|}{|P_1|}\right)+|P_2|\log\left(\frac{|P_1|+|P_2|}{|P_2|}\right)}{|P_1|+|P_2|}.    
\end{equation}

If $|u(P_2)|=1$ then,
\begin{equation}
\label{eq:entropyupdateinsert}
H(P_1\cup P_2)=\frac{|P_1|H(P_1)}{|P_1|+|P_2|}+ \frac{|P_1|}{|P_1|+|P_2|}\log\left(\frac{|P_1|+|P_2|}{|P_1|}\right)+\frac{|P_2|}{|P_1|+|P_2|}\log\left(\frac{|P_1|+|P_2|}{|P_2|}\right).      
\end{equation}

Finally, if $P_3\subset P_1$ with $|u(P_3)|=1$ and $u(P_1\setminus P_3)\cap u(P_3)=\emptyset$ then
\begin{equation}
\label{eq:entropyupdatedelete}
H(P_1\setminus P_3)=\frac{|P_1|}{|P_1|-|P_3|}\left(H(P_1)-\frac{|P_3|}{|P_1|}\log\frac{|P_1|}{|P_3|}-\frac{|P_1|-|P_3|}{|P_1|}\log \frac{|P_1|}{|P_1|-|P_3|}\right).
\end{equation}
We notice that in all cases, if we know $H(P_1), H(P_2), |P_1|, |P_2|, |P_3|$ we can update the entropy in $O(1)$ time.
\new{If we consider the weighted case, where the points may have different weights, then we replace $|P_1|, |P_2|, |P_3|$ in the formulas with $w(P_1), w(P_2), w(P_3)$, respectively.}

\paragraph{Updating the \renyi entropy.}
Let $P_1, P_2 \subset P$ be two subsets of $P$ such that $u(P_1)\cap u(P_2)=\emptyset$.
The next formula for the \renyi entropy of order $\alpha$ of $P_1\cup P_2$ follows from basic algebraic operations.
\new{For completeness, we show proofs are shown in Appendix~\ref{sec:appndx:renyi}.}
\begin{equation}
\label{eq:Renentropyupdate}
\ren_{\alpha}(P_1\cup P_2)=\frac{1}{\alpha-1}\log\left(\frac{(|P_1|+|P_2|)^{\alpha}}{|P_1|^{\alpha}\cdot 2^{(1-\alpha)\ren_{\alpha}(P_1)}+|P_2|^{\alpha}\cdot 2^{(1-\alpha)\ren_{\alpha}(P_2)}}\right).    
\end{equation}

If $|u(P_2)|=1$ then,
\begin{equation}
\label{eq:Renentropyupdateinsert}
\ren_{\alpha}(P_1\cup P_2)=  \frac{1}{\alpha-1}\log\left(\frac{(|P_1|+|P_2|)^{\alpha}}{|P_1|^{\alpha}\cdot 2^{(1-\alpha)\ren_{\alpha}(P_1)}+|P_2|^{\alpha}}\right).     
\end{equation}

Finally, if $P_3\subset P_1$ with $|u(P_3)|=1$ and $u(P_1\setminus P_3)\cap u(P_3)=\emptyset$ then
\begin{equation}
\label{eq:Renentropyupdatedelete}
\ren_{\alpha}(P_1\setminus P_3)=\frac{1}{\alpha-1}\log\left(\frac{(|P_1|-|P_3|)^{\alpha}}{|P_1|^{\alpha}\cdot 2^{(1-\alpha)\ren_{\alpha}(P_1)}-|P_3|^{\alpha}}\right).  
\end{equation}

We notice that in all cases, if we know $\ren_{\alpha}(P_1), \ren_{\alpha}(P_2), |P_1|, |P_2|, |P_3|$ we can update the \renyi entropy in $O(1)$ time. \new{Similarly to the Shannon entropy, if we consider the weighted case, where the points may have different weights, then we replace $|P_1|, |P_2|, |P_3|$ in the formulas with $w(P_1), w(P_2), w(P_3)$, respectively.}

\paragraph{Range queries.}
In some data structures we need to handle range reporting or range counting problems. Given $P$, we need to construct a data structure such that given a query rectangle $R$, the goal is to return $|R\cap P|$, or report $R\cap P$. We use range trees~\cite{berg1997computational}. 
A range tree can be constructed in $O(n\log^{d})$ time, it has $O(n\log^{d-1}n)$ space and can answer an aggregation query (such as count, sum, max etc.) in $O(\log^{d}n)$ time. A range tree can be used to report $R\cap P$ in $O(\log^{d}n + |R\cap P|)$ time. Using \emph{fractional cascading} the $\log^d n$ term can be improved to $\log^{d-1} n$ in the query time. \new{However, for simplicity, we consider the simple version of a range tree without using fractional cascading. In this way, it is easy to extend to the weighted case of the problem where fractional cascading is not applied}.
Furthermore, a range tree can be used to return a uniform sample point from $R\cap P$ in $O(\log^{d} n)$ time. We give more details about range trees and sampling in the next paragraph.
There is also lot of work on designing data structures for returning $k$ independent samples in a query range efficiently~\cite{martinez2020parallel, tao2022algorithmic, wang2015spatial, xie2021spatial, afshani2017independent, afshani2019independent, hu2014independent}. For example, if the input is a set of points in $\Re^d$ and the query range is a query hyper-rectangle, then there exists a data structure~\cite{martinez2020parallel} with space $O(n\log^{d-1}n)$ and query time $O(\log^d n + k\log n)$. For our purposes, it is sufficient to run $k$ independent sampling queries in a (modified) range tree with total query time $O(k\log^d n)$.

\paragraph{Range tree and sampling.}
Next, we formally describe the construction of the range tree and we show how it can be used for range sampling queries.

For $d=1$, the range tree on $P$ is a balanced binary search tree $T$ of $O(\log n)$ height. The points of $P$ are stored at the leaves of $T$ in increasing order, while each internal node $v$ stores the smallest and the largest values/coordinates, $\alpha_v^-$ and $\alpha_v^+$, respectively, contained in its subtree.
The node $v$ is associated with an interval $I_v=[\alpha_v^-, \alpha_v^+]$ and the subset $P_v=I_v\cap P$.
For $d>1$, $T$ is constructed recursively: 
We build a $1$D range tree $T_d$ on the $x_d$-coordinates of points in $P$. Next, for each node $v\in T_d$, we recursively construct a $(d-1)$-dimensional range tree $T_v$ on $P_v$, which is defined as the projection of $P_v$ onto the hyperplane $x_d=0$, and attach $T_v$ to $v$ as its secondary tree. The size of $T$ in $\Re^d$ is $O(n\log^{d-1} n)$ and it can be constructed in $O(n\log^d n)$ time.

For a node $v$ at a level-$i$ tree, let $p(v)$ denote its parents in that tree. If $v$ is the root of that tree, $p(v)$ is undefined.
For each node $v$ of the $d$-th level of $T$, we associate a $d$-tuple $\langle v_1, v_2, \ldots, v_d=u\rangle$, where $v_i$ is the node at the $i$-th level tree of $T$ to which the level-$(i+1)$ tree containing $v_{i+1}$ is connected.
We associate the rectangle $\square_v=\prod_{j=1}^d I_{v_j}$ with the node $v$.
For a rectangle $R=\prod_{i=1}^d \delta_i$
, a $d$-level node $v$ is called a \emph{canonical node} if for every $i\in [1,d]$, $I_{v_i}\subseteq \delta_i$ and $I_{p(v_i)}\not\subseteq \delta_i$.
For any rectangle $R$, there are $O(\log^d n)$ canonical nodes in $\T$, denoted by $\mathcal{N}(R)$, and they can be computed in $O(\log^d n)$ time~\cite{bentley1978decomposable,de1997computational,lueker1978data, agarwal2017range, agarwal1999geometric}.
$\T$ can be maintained dynamically, as points are inserted into $P$ or deleted from $P$ using the standard partial-reconstruction method, which periodically reconstructs various bottom subtrees. The amortized time is $O(\log^d n)$; see~\cite{overmars1983design} for details.

A range tree can be used to answer range (rectangular) aggregation queries, such as range counting queries, in $O(\log^d n)$ time and range reporting queries in $O(\log^d n + K)$ time, where $K$ is the output size. The query time can be improved to $O(\log^{d-1} n)$ using fractional cascading. See~\cite{lueker1978data,de1997computational, agarwal1999geometric} for details. However, for simplicity, in this work we use the simpler version of it with the term $\log^d n$ in the query time.

\new{
A range tree can be used to return a uniform sample in a query rectangle. More formally, the goal is to construct a data structure such that given a query rectangle $R$, a uniform sample in $P\cap R$ is returned in $O(\log^d n)$ time. We construct a standard range tree $T$ on the point set $P$. For each $d$-level node $v$ of the tree we precompute and store $c(v)=|P\cap \square_v|$, i.e., the number of points stored in the subtree with root $v$. The space of $T$ remains $O(n\log^{d-1}n)$ and the construction time $O(n\log^d n)$. We are given a query rectangle $R$. We run the query procedure in the range tree $T$ and we find the set of canonical nodes $\mathcal{N}(R)$. For each node $v\in \mathcal{N}(R)$, we define the weight $w_v=\frac{c(v)}{\sum_{v'\in \mathcal{N}(R)} c(v')}$. We sample one node from $\mathcal{N}(R)$ with respect to weights $\{w_v\mid v\in \mathcal{N}(R)\}$, using reservoir sampling~\cite{efraimidis2006weighted}. 
Let $v$ be the node that is sampled.
If $v$ is a leaf node then we return the point that is stored in node $v$. Otherwise, assume that $v$ has two children $x, y$. We move to the node $x$ with probability $\frac{c(x)}{c(x)+c(y)}$ and to node $y$ with probability $\frac{c(y)}{c(x)+c(y)}$. We recursively repeat this process until we reach a leaf node of the range tree. We return the point stored in the leaf node.

\textit{Analysis.}
As we discussed above, we can get the set $\mathcal{N}(R)$ in $O(\log^d n)$ time. Then, we sample one node from $\mathcal{N}(R)$ in $O(\log^d n)$ time using reservoir sampling. Finally, the recursive method takes $O(\log n)$ time because the height of the level-$d$ tree is $O(\log n)$. Overall, the query procedure takes $O(\log^d n)$ time.

Next, we show that the sampled point is chosen uniformly at random, i.e., with probability $\frac{1}{|P\cap R|}$. 
Let $v\rightarrow v_1\rightarrow\ldots\rightarrow v_k$ be the path of nodes followed by the algorithm to sample a point $p$. Thus $p$ is stored in the leaf node $v_k$. Let $\bar{v}_1,\ldots, \bar{u}_k$ be the siblings of nodes $v_1, \ldots, v_k$, respectively. The probability that $p$ is selected is 
\vspace{-0.1em}
$$\frac{c(v)}{\sum_{v'\in \mathcal{N}(R)}c(v')}\cdot \frac{c(v_1)}{c(v_1)+c(\bar{v}_1)}\cdot \ldots\cdot \frac{c(v_k)}{c(v_k)+c(\bar{v}_k)}.$$
Notice that $c(v)=c(v_1)+c(\bar{v}_1)$ and $c(v_\ell)=c(v_{\ell+1})+c(\bar{v}_{\ell+1})$ for every $\ell\in[k-1]$. Furthermore $c(v_k)=1$ because $v_k$ is a leaf node.
We conclude that the probability of selecting $p$ is $\frac{1}{\sum_{v'\in\mathcal{N}(R)}c(v')}=\frac{1}{|P\cap R|}$.


\textit{Extension to sampling on weighted points.}
Given a set of weighted points, the range tree can be used to sample a point from $P\cap R$ with respect to their weights. Assume that each point $p\in P$ has a weight $w(p)$, which is a non-negative real number. Given a query hyper-rectangle $R$ the goal is to sample a point from $P\cap R$ with respect to their weight, i.e., a point $p\in P\cap R$ should be selected with probability $\frac{w(p)}{\sum_{p'\in P\cap R}w(p')}$. The construction is exactly the same as in the unweighted case. The only difference is that instead of storing the count $c(v)$ in each node $v$, we store $w(v)=\sum_{p'\in P\cap \square_v}w(p')$. The query time remains $O(\log^d n)$ and the correctness proof remains the same replacing $c(v)$ with $w(v)$, for each node $v$ of the range tree.
}

\paragraph{Range trees for S-entropy and R-entropy queries in $\O(m)$ query time.} The range tree can be used to design a near-linear space data structure for the range S-entropy and R-entropy query problem having $O(m\log^d n)$ query time. For every color $u_i\in U$ construct a range tree $\mathcal{T}_i$ on $P(u_i)$ for counting queries. Furthermore, construct a range tree $\mathcal{T}$ on $P$  for counting queries. Given a query rectangle $R$, for every color $u_i\in U$, we use $\mathcal{T}_i$ to get $|P(u_i)\cap R|$. We also use $\mathcal{T}$ to get $|P\cap R|$. These $m+1$ quantities are sufficient to compute $H(P\cap R)$ or $\ren_\alpha(P\cap R)$ in $O(m)$ additional time. The data structure uses $O(n\log^d n)$ space, but the query time is $O(m\log^d n)$. This data structure is sufficient if $m$ is small, for example $m=\polylog (n)$. However, this is not an efficient data structure because in the worst case $m=O(n)$. In this work, we focus on low space (ideally, near-linear space) data structures for the range S-entropy and R-entropy queries in strictly sublinear query time.

\paragraph{Expected Shannon entropy and monotonicity.}
Shannon (and \renyi) entropy is not monotone because if $P_1\subseteq P_2$, it does not always hold that $H(P_1)\leq H(P_2)$. Using the results in~\cite{li2004entropy}, we can show that $H(P_1)\geq \frac{|P_1|-1}{|P_1|}H(P_1\setminus\{p\})$, for a point $p\in P_1\subseteq P$. If we multiply with $|P_1|/n$ we have
$\frac{|P_1|}{n}H(P_1)\geq \frac{|P_1|-1}{n}H(P_1\setminus\{p\})$. Hence, we show that, for $P_1\subseteq P_2\subseteq P$,
$
\frac{|P_1|}{n}H(P_1)\leq \frac{|P_2|}{n}H(P_2)$.
The quantity $\frac{|P_1|}{|P|}H(P_1)$ is called expected Shannon entropy.
This monotonicity property helps us to design efficient partitioning algorithms with respect to expected entropy, for example, find a partitioning that minimizes the cumulative or maximum expected entropy.
\section{Lower Bounds}
\label{sec:lowerbound}
In this section, we show conditional lower bounds for range S-entropy and range R-entropy data structures in the real-RAM model. First, we show a connection to the matrix multiplication problem to study the tradeoff between the preprocessing and query time. Then, we show a connection to the set intersection problem to study the tradeoff between the query time and the space used.

\subsection{Preprocessing-query tradeoff}\label{subsec:query-tradeoff} We show a connection between the boolean matrix multiplication problem and range entropy queries.
We get our intuition from~\cite{Chan2014}, designing data structures for range mode queries, which are different from range S-entropy and R-entropy queries.
By making this connection, we show that it is unlikely to have a data structure for answering range entropy queries that has a near-linear preprocessing time and answers the queries in polylogarithmic time even for $d = 1$ ($1$-dimensional space). 

\begin{figure}[t]
\centering
\begin{tikzpicture}[scale=0.65, every node/.style={scale=0.6}]

\foreach \x in {0, 1, 2} {
    \foreach \y in {0, 1, 2} {
        \draw (\x, \y) rectangle (\x+1, \y+1);
    }
}
\node at (0.5, 2.5) {1};
\node at (1.5, 2.5) {0};
\node at (2.5, 2.5) {1};
\node at (0.5, 1.5) {0};
\node at (1.5, 1.5) {1};
\node at (2.5, 1.5) {0};
\node at (0.5, 0.5) {1};
\node at (1.5, 0.5) {1};
\node at (2.5, 0.5) {0};

\fill[red] (0.5, 3.2) circle (0.1); 
\fill[green] (1.5, 3.2) circle (0.1); 
\fill[blue] (2.5, 3.2) circle (0.1); 

\node at (1.5, -0.5) {$A$};

\foreach \x in {4, 5, 6} {
    \foreach \y in {0, 1, 2} {
        \draw (\x, \y) rectangle (\x+1, \y+1);
    }
}
\node at (4.5, 2.5) {0};
\node at (5.5, 2.5) {1};
\node at (6.5, 2.5) {1};
\node at (4.5, 1.5) {1};
\node at (5.5, 1.5) {1};
\node at (6.5, 1.5) {1};
\node at (4.5, 0.5) {0};
\node at (5.5, 0.5) {0};
\node at (6.5, 0.5) {0};

\fill[red] (7.2, 2.5) circle (0.1); 
\fill[green] (7.2, 1.5) circle (0.1); 
\fill[blue] (7.2, 0.5) circle (0.1); 

\node at (5.5, -0.5) {$B$};

\draw[->] (8, 1) -- (22.9, 1); 
\foreach \x [evaluate=\x as \pos using 8 + \x * 0.8] in {1, 2, ..., 18} {
    \draw (\pos, 0.9) -- (\pos, 1.1); 
    \node[below] at (\pos, 0.8) {\x}; 
}

\fill[green] (8.8, 1.1) circle (0.1);  
\fill[red] (9.6, 1.1) circle (0.1); 
\fill[blue] (10.4, 1.1) circle (0.1); 
\fill[red] (11.2, 1.1) circle (0.1); 
\fill[blue] (12.0, 1.1) circle (0.1); 
\fill[green] (12.8, 1.1) circle (0.1); 
\fill[blue] (13.6, 1.1) circle (0.1); 
\fill[red] (14.4, 1.1) circle (0.1); 
\fill[green] (15.2, 1.1) circle (0.1); 

\fill[green] (16.0, 1.1) circle (0.1); 
\fill[red] (16.8, 1.1) circle (0.1); 
\fill[blue] (17.6, 1.1) circle (0.1); 
\fill[red] (18.4, 1.1) circle (0.1); 
\fill[green] (19.2, 1.1) circle (0.1); 
\fill[blue] (20.0, 1.1) circle (0.1); 
\fill[red] (20.8, 1.1) circle (0.1); 
\fill[green] (21.6, 1.1) circle (0.1); 
\fill[blue] (22.4, 1.1) circle (0.1); 

\foreach \i [evaluate=\i as \x using 8 + (3 * \i + 0.5) * 0.8] in {0, 1, 2, 4, 5, 6} {
    \draw[black, dashed] (\x, 0.7) -- (\x, 1.3); 
}
\draw[red, thick, dashed] (15.6, 0.7) -- (15.6, 1.3);

\node at (15.6, -0.3) {$P$};

\draw[thick] (12.5, 1.1) -- (12.5, 1.6) -- (19.5, 1.6) -- (19.5, 1.1); 

\node at (16.0, 2) {$\rho_{2,2}$};

\end{tikzpicture}
\caption{An example of constructing the point set $P$ on the right based on two sample $3 \times 3$ matrices $A$ and $B$ on the left. The colors red, green, and blue represent colors $1$, $2$, and $3$, respectively. Points from $1$ to $9$, represent the points in $A_1A_2A_3$ corresponding to the rows of $A$, and the points from $10$ to $18$ represent the points in $B_1B_2B_3$ corresponding to the columns of $B$. Blocks are separated by vertical dashed lines. The interval $\rho_{2, 2}$ which is used to find the entry $c_{2,2}$ contains the points $6$ to $14$ as shown.}
    \label{fig:matrixaxis}
\end{figure}

We consider the boolean matrix multiplication problem.
Let $A$ and $B$ be two $\sqrt{n} \times \sqrt{n}$ boolean matrices and the goal is to compute the product $C = A \cdot B$. 
We show that the matrix $C$ can be computed using a range entropy query data structure over a set $P \subset \Re^1$ of $2n$ points using $\sqrt{n}$ colors. Observe that the entry $c_{i,j}\in C$ is $1$ if and only if there exists at least one index $k$ such that $a_{ik} = b_{kj} = 1$. Our goal is to first build $P$ and then find each entry $c_{i,j}$ using a single query to the data structure. 

For each $i \in [\sqrt{n}]$, we build an array of points $A_i$, containing exactly $\sqrt{n}$ points and color them based on the entries in the $i$'th row of the matrix $A$. We build each $A_i$, such that any point in $A_{i+1}$, has a larger coordinate than any point in $A_i$, for all $i \in [\sqrt{n} - 1]$. Moreover, we assume that the points in each $A_i$ are sorted based on their coordinates. For each $i$, let $Z_i = \{j | a_{i,j} = 0\}$, be the set of indices of $0$ values in the $i$'th row of $A$. Let $U = [\sqrt{n}]$ be our set of colors. We color the first $|Z_i|$ points in $A_i$, using the colors from $Z_i$ in an arbitrary order. We color the remaining $\sqrt{n} - |Z_i|$ points in $A_i$ using the colors from $U - Z_i$ in an arbitrary order. Note that during this coloring we use each color in $U$ exactly once. Intuitively, for each $A_i$, we color the first points using the indices of $0$ values of the $i$'th row and the remaining points using the indices of $1$ values. 

Similarly, for each $i \in [\sqrt{n}]$, we build an array of $\sqrt{n}$ points $B_i$ and color them based on the $i$'th column of $B$. We set the coordinates such that any point in $B_1$ has a larger coordinate than $A_{\sqrt{n}}$, and any point  in
$B_{i+1}$ has a larger coordinate than every point in $B_i$, for all $i \in [\sqrt{n} - 1]$. This time, we color the first points in each $B_i$ using the $1$ values and the remaining points based on the $0$ values from the $i$'th column of $B$. More formally, let $O_i = \{j | b_{ji} = 1\}$, be the set of indices of $1$ values in the $i$'th column of $B$. We color the first $\sqrt{n}$ points in $B_i$, using the colors from $O_i$ in an arbitrary order. We color the remaining $\sqrt{n} - |O_i|$ points in $B_i$ using the colors from $U - O_i$ in an arbitrary order. 

We refer to each of these constructed arrays $A_i$ and $B_i$ as \textit{blocks} and denote the $j$'th point in $A_i$ ($B_i$) by $A_{i}[j]$ ($B_{i}[j]$). We set the point set $P$ to be $A_1 A_2 \dots A_{\sqrt{n}} B_1 B_2 \dots B_{\sqrt{n}}$, the concatenation of the points in all the blocks. Note that by the construction of the blocks, the points in $P$ are sorted based on their coordinate. An example of this construction based on two sample matrices $A$ and $B$ is shown in Figure~\ref{fig:matrixaxis}.

We construct the range entropy data structure $\mathcal{D}$ over $P$. We first describe how we can find each entry $c_{i,j}$ using a single range S-entropy query and later show how we can do it by a single range R-entropy query. To compute $c_{i,j}$ we set the interval $\rho_{i,j} = [A_i[|Z_i| + 1], B_j[|O_j|]]$ and query the data structure to return $\mathcal{D}(\rho_{i,j})$.
Let $H_{i,j}$ denote the returned answer, which is the S-entropy of the points $P \cap \rho_{i,j}$. Observe that by this coloring, the entry $c_{i,j}$ is $1$ if and only if the last $|\sqrt{n} - Z_i|$ points of $A_i$ share a common color with the first $|O_j|$ points of $B_j$. An example is shown in Figure~\ref{fig:matrixaxis}.

Let $t$ denote the number of blocks that lie completely inside $\rho_{i,j}$, and let $P_1 = P \cap A_i$ and $P_2 = P \cap B_j$. We define the value $H'_{i,j}$ as follows:
\begin{align*}
    H'_{i,j} &= (|P_1| + |P_2|)\left(\frac{t+1}{t\sqrt{n} + |P_1| + |P_2|}\log \left( \frac{t\sqrt{n} + |P_1| + |P_2|}{{t+1}}\right) \right)
    \\ &+ (\sqrt{n} - |P_1| - |P_2|)\left(\frac{t}{t\sqrt{n} + |P_1| + |P_2|}\log\left(\frac{t\sqrt{n} + |P_1| + |P_2|}{{t}}\right)\right).
\end{align*}

It is straightforward to see that we can compute $H'_{i,j}$ in constant time since all the parameters are known. 

\begin{lem}\label{lem:s-matrix}
    In the preceding reduction, $c_{i,j} = 0$ if and only if $H_{i,j} = H'_{i,j}$.
\end{lem}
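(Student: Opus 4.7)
\medskip\noindent\textbf{Proof proposal.}
The plan is to set $N = t\sqrt{n}+|P_1|+|P_2|$ and compute $H_{i,j}$ directly from the color multiplicities inside $P\cap \rho_{i,j}$. First I would record the key consequence of the construction: each of the $t$ complete blocks lying inside $\rho_{i,j}$ uses every color of $U=[\sqrt{n}]$ exactly once; the colors appearing in $P_1$ form the set $U\setminus Z_i$ (the indices $k$ with $a_{ik}=1$), each color appearing once; and the colors appearing in $P_2$ form $O_j$ (the indices $k$ with $b_{kj}=1$), each color appearing once. Hence $c_{i,j}=0 \iff (U\setminus Z_i)\cap O_j=\emptyset \iff $ the color sets of $P_1$ and $P_2$ are disjoint.

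For the forward direction, assume $c_{i,j}=0$. Then in $P\cap\rho_{i,j}$ the $|P_1|+|P_2|$ colors of $P_1\cup P_2$ each appear $t+1$ times (once per block plus once in $P_1$ or in $P_2$), and the remaining $\sqrt n-|P_1|-|P_2|$ colors each appear exactly $t$ times. Plugging these multiplicities into $H_{i,j}=\sum_c \tfrac{m_c}{N}\log\tfrac{N}{m_c}$ yields exactly $H'_{i,j}$.

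For the converse, assume $c_{i,j}\neq 0$ and set $s=|(U\setminus Z_i)\cap O_j|\geq 1$. The multiplicities split into four groups: $s$ shared colors with multiplicity $t+2$, $|P_1|-s$ colors appearing only in $P_1$ (each $t+1$ times), $|P_2|-s$ colors appearing only in $P_2$ (each $t+1$ times), and $\sqrt n-|P_1|-|P_2|+s$ remaining colors (each $t$ times). A direct computation gives
\[
H_{i,j}-H'_{i,j} \;=\; \frac{s}{N}\bigl(\,f(t{+}2) - 2f(t{+}1) + f(t)\,\bigr),
\]
where $f(x)=x\log(N/x)$. Since $f''(x)=-1/(x\ln 2)<0$ on $(0,\infty)$, the function $f$ is strictly concave, so the second difference is strictly negative; as $s\geq 1$ this gives $H_{i,j}<H'_{i,j}$ and in particular $H_{i,j}\neq H'_{i,j}$.

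The only nontrivial bookkeeping step is correctly accounting for doubly-counted colors: colors in both $P_1$ and $P_2$ must be subtracted twice when counting ``colors exclusive to $P_1\cup P_2$,'' giving $|P_1|+|P_2|-2s$ rather than $|P_1|+|P_2|-s$. Once this is done, the strict concavity of $x\log(N/x)$ closes the argument without any further calculation.
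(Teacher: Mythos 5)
Your proof is correct and follows the same strategy as the paper's: identify the color multiplicities ($t+2$ for shared colors, $t+1$ for colors in exactly one of $P_1,P_2$, $t$ otherwise) and show that merging two multiplicity-$(t+1)$ colors into one of multiplicity $t+2$ and one of multiplicity $t$ strictly decreases the entropy. Your version is in fact a cleaner rendering: the explicit second-difference formula $H_{i,j}-H'_{i,j}=\tfrac{s}{N}\bigl(f(t{+}2)-2f(t{+}1)+f(t)\bigr)$ with general $s\geq 1$, settled by strict concavity of $f(x)=x\log(N/x)$, handles all shared colors at once (just note that when $t=0$ one uses the convention $f(0)=0=\lim_{x\to 0^+}f(x)$ so concavity still applies on $[0,\infty)$), whereas the paper argues the inequality for a single swapped pair via an ad hoc analysis of $t\log t+(t{+}2)\log(t{+}2)-2(t{+}1)\log(t{+}1)$.
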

\begin{proof}
We first note that $H_{i,j}'$ is the Shannon entropy  of the points in $P\cap \rho_{i,j}$ assuming that $u(P_1)\cap u(P_2)=\emptyset$, or equivalently, $c_{i,j}=0$. Indeed, if $c_{i,j}=0$ then $u(P_1)\cap u(P_2)=\emptyset$, and there are $|P_1|+|P_2|$ colors with $t+1$ points and $\sqrt{n}-|P_1|-|P_2|$ colors with $t$ points in $P\cap \rho_{i,j}$, while $|P\cap \rho_{i,j}|=t\sqrt{n}+|P_1|+|P_2|$.
Next, we focus on the other direction assuming that $c_{i,j}=1$. In this case $u(P_1)\cap u(P_2)\neq \emptyset$.
Intuitively, this creates a distribution with lower uncertainty, so the entropy should be decreased. 
In the value $H_{i,j}'$, there are two colors, say $u_1\in u(P_1)$ and $u_2\in u(P_2)$, such that each of them contributed $\frac{t+1}{N}\log\frac{N}{t+1}$ in the Shannon entropy. Next, assume that $u_1$ has $t$ points, while $u_2$ has $t+2$ points. It is sufficient to show that $2\frac{t+1}{N}\log\frac{N}{t+1}>\frac{t}{N}\log\frac{N}{t}+\frac{t+2}{N}\log\frac{N}{t+2}\Leftrightarrow t\log (t) + (t+2)\log(t+2) -2(t+1)\log(t+1)>0$. The function $f(t)=t\log (t) + (t+2)\log(t+2) -2(t+1)\log(t+1)$ is decreasing for $t\geq 0$, $\lim_{t\rightarrow \infty}f(t)=0$ and $\lim_{t\rightarrow 0}f(t)=\infty$, so $f(t)>0$. The result follows.
\end{proof}

By the lemma above, we can report $c_{i,j}$ by comparing the answer received from $\mathcal{D}(\rho_{i,j}) = H_{i,j}$ and $H'_{i,j}$, and hence we can compute the matrix product $C = A\cdot B$, by making $n$ queries to $\mathcal{D}$. Furthermore, we can build the point set $P$ in $O(n)$ time. 

\paragraph{Extension to range R-entropy query.} We use the same reduction as for S-entropy queries. However, we set $\mathcal{D}$ to be a range R-entropy data structure and denote the order $\alpha$ R-entropy of the points in $\rho_{i,j} \cap P$ by $H^{\alpha}_{i,j}$. We define the value $H'^{\alpha}_{i,j}$ as follows: 
$$ H'^{\alpha}_{i,j} = \frac{1}{\alpha-1}\log\left(\frac{1}{(|P_1|+|P_2|)\left(\frac{t+1}{t\sqrt{n} + |P_1| + |P_2|}\right)^\alpha+(\sqrt{n} - |P_1| - |P_2|)\left(\frac{t}{t\sqrt{n} + |P_1| + |P_2|}\right)^\alpha}\right).$$



\begin{lem}\label{lem:r-matrix}
    In the preceding reduction, $c_{i,j} = 0$ if and only if $H^{\alpha}_{i,j} = H'^{\alpha}_{i,j}$.
\end{lem}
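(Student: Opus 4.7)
The strategy mirrors the proof of Lemma~\ref{lem:s-matrix} but replaces the concavity argument for Shannon entropy with a convexity argument applied to the function $x \mapsto x^{\alpha}$ inside the sum that defines \renyi entropy. First I would verify the ``only if'' direction. Assume $c_{i,j}=0$, i.e.\ $u(P_1)\cap u(P_2)=\emptyset$. Because each of the $t$ complete blocks between $A_i$ and $B_j$ contains every color exactly once, the multiset of color frequencies in $P\cap \rho_{i,j}$ consists of $|P_1|+|P_2|$ colors of frequency $t+1$ (one contribution each from $P_1$ or $P_2$ plus the $t$ complete blocks) and the remaining $\sqrt{n}-|P_1|-|P_2|$ colors of frequency $t$, with total mass $N := t\sqrt{n}+|P_1|+|P_2|$. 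Plugging this multiset into the defining formula for $\ren_{\alpha}$ recovers $H'^{\alpha}_{i,j}$ verbatim, so $H^{\alpha}_{i,j}=H'^{\alpha}_{i,j}$.

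Next I would handle the ``if'' direction by contrapositive. Suppose $c_{i,j}\geq 1$ and let $k\geq 1$ be the number of indices $\ell$ with $a_{i\ell}=b_{\ell j}=1$; equivalently, $k=|u(P_1)\cap u(P_2)|$. Then the color frequencies in $P\cap \rho_{i,j}$ are: $k$ colors of frequency $t+2$ (present in $P_1$, $P_2$, and every complete block), $|P_1|+|P_2|-2k$ colors of frequency $t+1$, and $\sqrt{n}-|P_1|-|P_2|+k$ colors of frequency $t$, still with total mass $N$. Write $S_{=0}$ and $S_{\geq 1}$ for the quantities $\sum_{u\in U}\bigl(|(P\cap\rho_{i,j})(u)|/N\bigr)^{\alpha}$ in the two cases. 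A direct computation gives
\[
N^{\alpha}\bigl(S_{\geq 1}-S_{=0}\bigr)=k\bigl[(t+2)^{\alpha}+t^{\alpha}-2(t+1)^{\alpha}\bigr].
\]

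The main (and only real) technical point is to argue this difference is nonzero. Since $\alpha>1$, the function $f(x)=x^{\alpha}$ is strictly convex on $[0,\infty)$, hence its second difference satisfies $f(t+2)+f(t)-2f(t+1)>0$ for every $t\geq 0$. As $k\geq 1$, the expression above is strictly positive, so $S_{\geq 1}>S_{=0}$. Finally, $\ren_{\alpha}(P\cap \rho_{i,j})=\tfrac{1}{\alpha-1}\log(1/S)$ is a strictly monotone function of $S$, so $S_{\geq 1}\neq S_{=0}$ forces $H^{\alpha}_{i,j}\neq H'^{\alpha}_{i,j}$, completing the contrapositive and the lemma. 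I do not anticipate a serious obstacle; the only subtle point is to double-check the color-frequency bookkeeping in the $c_{i,j}\geq 1$ case (in particular that each of the $k$ common colors truly contributes $t+2$ and not $t+1$), which follows from the definitions of $Z_i$ and $O_j$ and from the fact that each complete block uses every color exactly once.
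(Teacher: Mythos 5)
Your proof is correct and follows the same overall strategy as the paper's: verify that $H'^{\alpha}_{i,j}$ matches the frequency multiset when $c_{i,j}=0$, and for the converse reduce everything to the single inequality $t^{\alpha}+(t+2)^{\alpha}>2(t+1)^{\alpha}$. The two arguments diverge only in how that inequality is established. The paper treats $f(t)=t^{\alpha}+(t+2)^{\alpha}-2(t+1)^{\alpha}$ by a three-way case analysis on $\alpha$ (increasing with $f(0)>0$ for $\alpha>2$; constant $2$ for $\alpha=2$; decreasing with limit $0$ for $\alpha\in(1,2)$), whereas you observe that the left-hand side is the second difference of the strictly convex function $x\mapsto x^{\alpha}$ and is therefore automatically positive for all $\alpha>1$. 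Your route is shorter, uniform in $\alpha$, and avoids the calculus. You are also more careful on the bookkeeping side: you track all $k=|u(P_1)\cap u(P_2)|$ common colors and obtain the exact identity $N^{\alpha}(S_{\geq 1}-S_{=0})=k\bigl[(t+2)^{\alpha}+t^{\alpha}-2(t+1)^{\alpha}\bigr]$, while the paper argues informally through a single pair of colors whose counts are perturbed from $(t+1,t+1)$ to $(t,t+2)$; the two accountings produce the same change in $\sum_u N_u^{\alpha}$, but your version makes the general case explicit. The final step (strict monotonicity of $S\mapsto\frac{1}{\alpha-1}\log(1/S)$) is the same in both.
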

\begin{proof}
We first note that $H_{i,j}'^\alpha$ is the \renyi entropy  of the points in $P\cap \rho_{i,j}$ assuming that $u(P_1)\cap u(P_2)=\emptyset$, or equivalently, $c_{i,j}=0$. Indeed, if $c_{i,j}=0$ then $u(P_1)\cap u(P_2)=\emptyset$, and there are $|P_1|+|P_2|$ colors with $t+1$ points and $\sqrt{n}-|P_1|-|P_2|$ colors with $t$ points in $P\cap \rho_{i,j}$, while $|P\cap \rho_{i,j}|=t\sqrt{n}+|P_1|+|P_2|$.
Next, we focus on the other direction assuming that $c_{i,j}=1$. In this case $u(P_1)\cap u(P_2)\neq \emptyset$.
Intuitively, this creates a distribution with lower uncertainty, so the entropy should be decreased. 
In the value $H_{i,j}'^\alpha$, there are two colors, say $u_1\in u(P_1)$ and $u_2\in u(P_2)$, such that each of them contributed $\left(\frac{N}{t+1}\right)^\alpha$ in the $\log(\cdot)$ function of the \renyi entropy. Next, assume that $u_1$ has $t$ points, while $u_2$ has $t+2$ points. It is sufficient to show that 
$\frac{1}{2\left(\frac{t+1}{N}\right)^\alpha}>\frac{1}{\left(\frac{t}{N}\right)^\alpha + \left(\frac{t+2}{N}\right)^\alpha}$
or equivalently 
$t^\alpha + (t+2)^\alpha> 2(t+1)^\alpha$. Indeed, for $t\geq 0$, 
the function $f(t)=t^\alpha + (t+2)^\alpha- 2(t+1)^\alpha$ is 
i) increasing for $\alpha>2$ with $f(0)>0$ and $\lim_{t\rightarrow\infty} f(t)=\infty$, ii) $f(t)=2$ for $\alpha=2$, and iii) decreasing for $\alpha\in(1,2)$ with $\lim_{t\rightarrow 0}f(t)=\infty$ and $\lim_{t\rightarrow \infty}f(t)=0$.
The result follows.
\end{proof}

Thus, with the same argument as for S-entropy queries, we conclude with the following theorem. 

\new{
\begin{thm}\label{theorem:r-matrix}
Let $\matrixmul(\sqrt{n})$ be the running time of the optimum algorithm to multiple two $\sqrt{n}\times \sqrt{n}$ boolean matrices.
Any data structure for range R-entropy (resp. S-entropy) queries over $n$ points in $\Re^d$, for $d\geq 1$,
with $Q(n)$ query time must have $\Omega(\max\{\matrixmul(\sqrt{n})-n\cdot Q(n),1\})$ preprocessing time.



\end{thm}
}

\paragraph{Interpretation.}
There has been extensive work in the theory community studying lower bounds and designing algorithms for the problem of multiplying two boolean matrices. The results can be partitioned into two groups, combinatorial algorithms and algebraic algorithms.

For the problem of multiplying boolean matrices, there exists a well-known conjecture \cite{satta1994tree, lee2002fast} that no combinatorial algorithm\footnote{Combinatorial algorithms reduce redundancy in computations by exploiting the combinatorial properties of Boolean matrices. The formal definition of a combinatorial algorithm is an open problem~\cite{yu2018improved}.} with running time $O(n^{3-\eps})$ exists to multiply two $n\times n$ boolean matrices, for any positive value of $\eps<1$. A discussion about this pessimistic lower bound can be found in~\cite{yu2018improved, abboud2024new}. If this conditional lower bound does not hold, then we would have faster combinatorial algorithms for multiple fundamental discrete problems. Using this conditional lower bound and Theorem~\ref{theorem:r-matrix}, we get that any data structure for the range S-entropy or R-entropy query over $n$ points, with $O(n^{0.5-\eps})$ query time requires $\Omega(n^{1.5-\eps})$ preprocessing time, for any positive $\eps<1$.

On the other hand, there exist faster algebraic algorithms for multiplying two boolean matrices since they rely on the structure of the field, and in the ring structure of matrices over the field. Multiplying two $n\times n$ boolean matrices can be done in $O(n^{\omega})$ time for some value of $\omega\geq 2$. Currently, the best algebraic algorithm for this problem runs in $O(n^{\omega})$ time for $\omega=2.371552$~\cite{williams2024new}. Assuming that the optimum algorithm runs in $O(n^\omega)$ time for a value $\omega>2$, we can argue that any data structure for the range S-entropy or R-entropy query over $n$ points, with $O(n^{\omega/2-1-\eps})$ query time requires $\Omega(n^{\omega/2})$ preprocessing time, for any positive $\eps<\omega/2-1<1$.
Interestingly, the
only non-trivial (algebraic) lower bound for the matrix multiplication problem of two $n\times n$ boolean matrices
is $\Omega(n^2\log n)$. In this case, we can argue that any data structure for the range S-entropy or R-entropy query over $n$ points, with $O(\log^{1-\eps} n)$ query time requires $\Omega(n\log n)$ preprocessing time, for any positive $\eps<1$.

\subsection{Space-query tradeoff}\label{subsec:space-tradeoff}
Next, we show a reduction from the set intersection problem to range entropy problems. First, we show lower bounds for $d \geq 2$. At the end, we show that range entropy data structures with near-linear space and polylogarithmic query time are unlikely to exist even for $d=1$.

The set intersection problem is defined as follows. Given a family of sets $S_1, \ldots, S_g$, with $\sum_{i=1}^g|S_i|=n$, the goal is to construct a data structure such that given a query pair of indices $i, j$, it decides if $S_i\cap S_j=\emptyset$. It is widely believed that for any positive value $Q\in \Re$, any data structure for the set intersection problem with $O(Q)$ query time needs $\widetilde{\Omega}\left(\left(\frac{n}{Q}\right)^2\right)$ space~\cite{davoodi2012two, patrascu2010distance, rahul2012algorithms}. we call it the \emph{set intersection conjecture}.
Next, we show that any data structure for solving the range S-entropy query can be used to solve the set intersection problem. In the end, we extend the reduction to the range R-entropy query.

Let $S_1,\ldots, S_g$ be an instance of the set intersection problem as we defined above. We design an instance of the range entropy query constructing a set $P$ of $2n$ points in $\Re^2$ and $|U|=|\bigcup_i S_i|$.
Let $n_0=0$ and $n_i=n_{i-1}+|S_i|$ for $i=1,\ldots, g$.
Let $s_{i,k}$ be the value of the $k$-th item in $S_i$ (we consider any arbitrary order of the items in each $S_i$).
Let $S=\bigcup_i S_i$, and $q=|S|$. Let $\sigma_1,\ldots\sigma_q$ be an arbitrary ordering of $S$. We set $U=\{1,\ldots, q\}$.
Next, we create a geometric instance of $P$ in $\Re^2$: All points lie on two parallel lines $L=x+n$, and $L'=x-n$. For each $s_{i,k}$ we add in $P$ two points,
$p_{i,k}=(-(k+n_{i-1}), -(k+n_{i-1})+n)$ on $L$, and $p_{i,k}'=((k+n_{i-1}), k+n_{i-1}-n)$ on $L'$. If $s_{i,k}=\sigma_j$ for some $j\leq q$, we set the color/category of both points $p_{i,k}, p_{i,k}'$ to be $j$. Let $P_i$ be the set of points corresponding to $S_i$ that lie on $L$, and $P_i'$ the set of points corresponding to $S_i$ that lie on $L'$. We set $P=\bigcup_i (P_i\cup P_i')$. We note that for any pair $i, j$, points $P_i \cup P_j'$ have distinct categories if and only if $S_i\cap S_j=\emptyset$.
$P$ uses $O(n)$ space and can be constructed in $O(n)$ time.

Let $\mathcal{D}$ be a data structure for range entropy queries with space $S(n)$ and query time $Q(n)$ constructed on $n$ points. Given an instance of the set intersection problem, we construct $P$ as described above. Then we build $\mathcal{D}$ on $P$ and we construct a range tree $\mathcal{T}$ on $P$ for range counting queries.
Given a pair of indexes $i, j$ the question is if $S_i\cap S_j=\emptyset$. We answer this question using $\mathcal{D}$ and $\mathcal{T}$ on $P$. Geometrically, it is known that we can find a rectangle $\rho_{i,j}$ in $O(1)$ time such that $\rho_{i,j}\cap P=P_i\cup P_j'$ (see Figure~\ref{fig:lowerBound}). We run the range entropy query $\mathcal{D}(\rho_{i,j})$ and the range counting query $\mathcal{T}(\rho_{i,j})$. Let $H_{i,j}$ be the entropy of $P_i\cup P_j'$ and $n_{i,j}=|P_i\cup P_j'|$. If $H_{i,j}=\log n_{i,j}$ we return that $S_i\cap S_j=\emptyset$. Otherwise, we return $S_i\cap S_j\neq \emptyset$.

          \begin{figure}[H]
              \vspace{-1em}\includegraphics[width=0.4\linewidth]{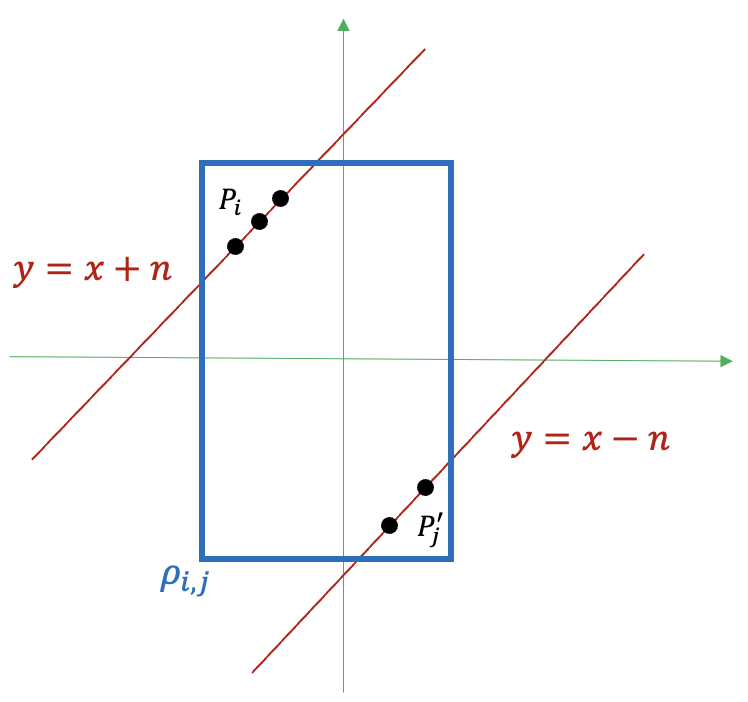}
              \caption{Lower bound construction.\label{fig:lowerBound}}
          \end{figure}

The data structure we construct for answering the set intersection problem has $O(S(2n)+n\log n)=\widetilde{O}(S(2n))$ space. The query time is $(Q(2n)+\log n)$ or just $O(Q(n))$ assuming that $Q(n)\geq \log n$.
\begin{lem}
\label{lem:lb}
In the preceding reduction, $S_i\cap S_j=\emptyset$ if and only if $H_{i,j}=\log n_{i,j}$.
\end{lem}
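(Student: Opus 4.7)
The plan is to exploit the well-known fact that the Shannon entropy of a distribution over $k$ outcomes is at most $\log k$, with equality if and only if the distribution is uniform. The reduction was designed so that the number of distinct colors in $P_i \cup P_j'$ equals $n_{i,j}=|S_i|+|S_j|$ precisely when $S_i\cap S_j=\emptyset$.

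First, I would observe a structural fact about the construction: within $P_i$, the points $\{p_{i,1},\ldots,p_{i,|S_i|}\}$ all carry pairwise distinct colors, because $S_i$ is a set and the color assignment $s_{i,k}\mapsto \sigma_j$ is injective on $S_i$. The same holds within $P_j'$. Consequently, every color appearing in $P_i\cup P_j'$ has multiplicity either $1$ (if it appears in only one of $S_i,S_j$) or $2$ (if it appears in $S_i\cap S_j$), giving $|u(P_i\cup P_j')|=|S_i\cup S_j|=n_{i,j}-|S_i\cap S_j|$.

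Next I would handle the two directions. For the ``only if'' direction, assume $S_i\cap S_j=\emptyset$. Then every color in $P_i\cup P_j'$ has multiplicity exactly $1$, so the distribution $\mathcal{D}_{\rho_{i,j}}$ is uniform over $n_{i,j}$ outcomes, each with probability $1/n_{i,j}$, and plugging into~\eqref{eq:def} yields $H_{i,j}=\log n_{i,j}$. For the ``if'' direction, assume $S_i\cap S_j\neq \emptyset$. By the structural fact, the number of distinct colors in $P_i\cup P_j'$ is $|S_i\cup S_j|\leq n_{i,j}-1$. Since the Shannon entropy of any distribution supported on $k$ outcomes is at most $\log k$, we obtain
\[
H_{i,j} \;\leq\; \log|S_i\cup S_j| \;\leq\; \log(n_{i,j}-1) \;<\; \log n_{i,j},
\]
so $H_{i,j}\neq \log n_{i,j}$, giving the contrapositive.

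The proof is short and the only ``obstacle'' is being careful about two easy-to-miss points: (i) the multiplicity analysis uses that each $S_i$ is a set (no internal repetitions), which is where the within-$P_i$ injectivity comes from, and (ii) the classical upper bound $H(\mathcal{D})\leq \log|\mathrm{supp}(\mathcal{D})|$ with equality iff uniform is what cleanly converts a combinatorial statement about color collisions into the entropy equality. Combining the two directions yields the stated equivalence.
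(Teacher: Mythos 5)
Your proof is correct and follows essentially the same route as the paper: both directions reduce to the classical fact that entropy is maximized (at $\log$ of the support size) exactly by the uniform distribution, combined with the observation that a nonempty intersection forces a repeated color. The only cosmetic difference is that for the converse you bound $H_{i,j}\leq\log|u(P_i\cup P_j')|\leq\log(n_{i,j}-1)$ via the support-size inequality, whereas the paper invokes uniqueness of the uniform maximizer; your explicit multiplicity accounting (each color appears once or twice) is a welcome extra bit of care but does not change the argument.
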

\begin{proof}
If $S_i\cap S_j=\emptyset$ then from the construction of $P$ we have that all colors in $P_i\cup P_j'$ are distinct, so $n_{i,j}=|u(P_i\cup P_j')|$. Hence, the entropy $H(P_i\cup P_j')$ takes the maximum possible value which is $H(P_i\cup P_j')=\sum_{v\in u(P_i\cup P_j')}\frac{1}{n_{i,j}}\log n_{i,j}=\log n_{i,j}$.

If $H_{i,j}\neq\log n_{i,j}$ we show that $S_i\cap S_j\neq\emptyset$. The maximum value that $H_{i,j}$ can take is $\log n_{i,j}$ so we have $H_{i,j}<\log n_{i,j}$. The entropy is a measure of uncertainty of a distribution. It is known that the discrete distribution with the maximum entropy is unique and it is the uniform distribution. Any other discrete distribution has entropy less than $\log n_{i,j}$. Hence the result follows.
\end{proof}

\paragraph{Extension to range R-entropy query.}
Following the same reduction, we can show that Lemma~\ref{lem:lb} also holds for the \renyi entropy of any parameter $\alpha>0$.
Let $\ren_{\alpha}(P_i\cup P_j')$ be the \renyi entropy 
(of any order $\alpha$) of $P_i\cup P_j'$ and $n_{i,j}=|P_i\cup P_j'|$.
\begin{lem}
    In the preceding reduction, for any parameter $\alpha>0$ such that $\alpha\neq 1$, $S_i\cap S_j=\emptyset$ if and only if $\ren_{\alpha}(P_i\cap P_j')=\log n_{i,j}$.
\end{lem}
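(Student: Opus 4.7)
The plan is to mirror the proof of Lemma~\ref{lem:lb}, replacing the unique-maximizer property of Shannon entropy with the analogous property for \renyi entropy of order $\alpha \ne 1$. Let $P^\star = P_i \cup P_j'$, let $m' = |u(P^\star)|$ denote the number of distinct colors, and let $\mathcal{D}_\star$ be the induced distribution on colors (with equal weights, as assumed in the paper).

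For the forward direction, suppose $S_i \cap S_j = \emptyset$. By the construction of $P$, the points in $P_i$ and $P_j'$ that correspond to elements of $S_i$ and $S_j$ use disjoint color sets, so every point of $P^\star$ has a distinct color; thus $m' = n_{i,j}$ and $\mathcal{D}_\star$ is uniform on $n_{i,j}$ outcomes. Plugging into the definition,
\[
\ren_\alpha(P^\star) \;=\; \frac{1}{\alpha-1}\log\!\left(\frac{1}{\sum_{k=1}^{n_{i,j}} (1/n_{i,j})^\alpha}\right) \;=\; \frac{1}{\alpha-1}\log\!\left(n_{i,j}^{\alpha-1}\right) \;=\; \log n_{i,j},
\]
which gives one direction.

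For the converse, I would argue the contrapositive: if $S_i \cap S_j \ne \emptyset$, then some color is shared between $P_i$ and $P_j'$, so at least two distinct points of $P^\star$ have the same color, hence $m' \le n_{i,j} - 1$. It is a standard fact that among all distributions on a finite support, the uniform distribution uniquely maximizes $\ren_\alpha$ for every $\alpha > 0$, $\alpha \ne 1$: for $\alpha > 1$ this follows because $\sum_k p_k^\alpha$ is a strictly convex function minimized on the simplex uniquely at the uniform distribution (e.g.\ by the power-mean inequality or Jensen applied to $x \mapsto x^\alpha$), and for $\alpha \in (0,1)$ the same inequality is reversed while the logarithm in $\ren_\alpha$ is divided by the negative quantity $\alpha-1$, yielding the same conclusion. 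Applying this to $\mathcal{D}_\star$ on its support of size $m'$, we get
\[
\ren_\alpha(P^\star) \;\le\; \log m' \;\le\; \log(n_{i,j}-1) \;<\; \log n_{i,j},
\]
so $\ren_\alpha(P^\star) \ne \log n_{i,j}$, completing the proof.

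The only genuine subtlety is invoking the maximum-entropy characterization cleanly for both ranges of $\alpha$ at once; I would handle this by stating the power-mean/Jensen argument once and noting that the sign flip in the logarithm's prefactor $1/(\alpha-1)$ exactly compensates for the flipped inequality on $\sum_k p_k^\alpha$. Everything else is essentially a direct transcription of the Shannon argument in Lemma~\ref{lem:lb}.
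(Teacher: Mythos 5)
Your proof is correct and follows essentially the same route as the paper's: a direct computation for the uniform case in the forward direction, and the fact that the uniform distribution is the unique maximizer of \renyi entropy (which the paper attributes to Schur concavity and you derive via convexity of $\sum_k p_k^\alpha$ together with the sign of $1/(\alpha-1)$) for the converse. Your phrasing of the converse as the contrapositive, bounding $\ren_\alpha(P^\star)\le\log m'\le\log(n_{i,j}-1)$, is if anything slightly cleaner than the paper's, but it is the same argument.
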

\begin{proof}
    If $S_i\cap S_j=\emptyset$ then from the construction of $P$ we have that all colors in $P_i\cup P_j'$ are distinct, so $n_{i,j}=|u(P_i\cup P_j')|=|P_i|+|P_j'|$.
    It is known that the \renyi entropy of any order $\alpha>0$ is Schur concave so its optimum value is always achieved for the uniform distribution. Hence, $\ren_{\alpha}(P_i\cup P_j')=\log n_{i,j}$.

If $\ren_{\alpha}(P_i\cup P_j')\neq\log n_{i,j}$ then $S_i\cap S_j\neq\emptyset$.
Since $\ren_{\alpha}(P_i\cup P_j')\neq\log n_{i,j}$
it must be the case that $u(P_i\cup P_j')<n_{i,j}$ (the maximum value for the \renyi entropy is only achieved for the uniform distribution). Hence, there is at least a common color between the points in $P_i$ and $P_j'$, implying that $S_i\cap S_j\neq \emptyset$.
\end{proof}

We also conclude to the next theorem.
\begin{thm}
\label{thm:Renlb}
If there is a data structure for range R-entropy (resp. S-entropy) queries in dimension $d\geq 2$, with $S(n)$ space and $Q(n)$ query time, then for the set intersection problem there exists a data structure with $\O(S(2n))$ space and $\O(Q(2n))$ query time.
\end{thm}

\paragraph{Interpretation.}
Using the set intersection conjecture, we can also conclude that any data structure for the range S-entropy or R-entropy query over $n$ points with $Q(n)$ query time must have $\widetilde{\Omega}\left(\left(\frac{n}{Q(n)}\right)^2\right)$ space. For example, if the designed data structure for the range S-entropy (or R-entropy) query has $\polylog (n)$ query time, then the space should be $\widetilde{\Omega}(n^{2})$. Similarly, if the query time is $n^{0.25}$, then the space should be $\widetilde{\Omega}(n^{1.5})$. 

\new{
\begin{cor}\label{cor:prep-tradeoff}
 If the set intersection conjecture is true, then any data structure for range R-entropy (resp. S-entropy) queries over $n$ points in $\Re^d$, for $d\geq 2$, with $Q(n)$ query time must use $\widetilde{\Omega}\left(\left(\frac{n}{Q(n)}\right)^2\right)$ space.
\end{cor}
}

\new{
\paragraph{Space-query tradeoff for $d=1$.} Using the same ideas as in Subsection~\ref{subsec:query-tradeoff} and \cite{goldstein_set_intersection}, we show that we can obtain a weaker version of Corollary~\ref{cor:prep-tradeoff}, for the one-dimensional case, $d = 1$. While being a weaker lower bound than for the case $d \geq 2$, this still suggests that a data structure with near-linear space and polylogarithmic query time is unlikely to exist even for $d = 1$.
We use a similar reduction as in Subsection~\ref{subsec:query-tradeoff}, but instead of the matrix multiplication, we start from a set intersection instance. Given a family of sets $S_1, \ldots, S_g$, the goal is to construct a data structure such that given a query pair of indices $i, j$, it decides if $S_i\cap S_j=\emptyset$. Based on the given family of sets, we build a range entropy query data structure such that we can answer any set intersection query using a single query to the constructed data structure. Let $\mathcal{U} = \cup_{i \in [g]}S_i$ denote the universe of the sets and let $\upsilon = |\mathcal{U}|$. 
We follow the construction exactly like in Subsection~\ref{subsec:query-tradeoff}.
Let $U=\mathcal{U}$ be the set of colors in the instance of the range entropy query we construct. For each $i \in [g]$, we build an array of points $A_i$, containing exactly $\upsilon$ points, and color them based on the set $S_i$. We color the first $\upsilon - |S_i|$ points in $A_i$ using colors from $U - S_i$ and the last $|S_i|$ points using the colors from $S_i$ in an arbitrary order. Similarly, for each $i \in [g]$, we build the array of $\upsilon$ points $B_i$. We color the first $|S_i|$ points in $B_i$ using the colors from $S_i$ and the rest of $\upsilon - |S_i|$ points using the colors from $U - S_i$ in an arbitrary order. We define the point set $P$ to be $A_1A_2\cdots A_g B_1B_2 \cdots B_g$, the concatenation of the points in all the blocks similar to Subsection~\ref{subsec:query-tradeoff}. We then construct the range entropy data structure $\mathcal{D}$ over $P$. Given a set intersection query to decide whether $S_i \cap S_j = \emptyset$, we define $\rho_{i,j} = [A_i[\upsilon - |S_i| + 1], B_j[|S_j|]]$, and query the data structure to return $\mathcal{D}(\rho_{i,j})$. To answer the given set intersection query, we only need to decide whether there is a common color in the last $\upsilon - |S_i|$ points of $A_i$ and the first $|S_i|$ points of $B_i$. As shown in Subsection~\ref{subsec:query-tradeoff}, this can be decided using both S-entropy and R-entropy in $O(1)$ time, similar to Lemmas~\ref{lem:s-matrix} and \ref{lem:r-matrix}. Therefore, after constructing $\mathcal{D}$ as described, we are able to answer the set intersection queries by doing a single query to $\mathcal{D}$. We have $|P| = \sum_{i \in [g]}(|A_i| + |B_i|) = 2\cdot g\cdot \upsilon$. Goldstein et al. showed in Theorem~8 of \cite{goldstein_set_intersection} that this reduction is enough to obtain the following theorem. While they use the range mode queries to show their result, it is easy to verify that their proof also follows in our settings. 
\begin{thm}
\label{thm:lbd1}
    If the set intersection conjecture is true, any data structure for range R-entropy (resp. S-entropy) queries over $n$ points in $\Re^1$ with $Q(n)$ query time, must use $\widetilde{\Omega}(\frac{n^2}{(Q(n))^4})$ space.
\end{thm}
}
\section{Exact Data Structures}
\label{sec:exact}
In this section we describe data structures that return the entropy in a query range, exactly. First, we provide a data structure for $d=1$ and we extend it to any constant dimension $d$. Next, we provide a second data structure for any constant dimension $d$. The first data structure is better for $d=1$, while the second data structure is better for any constant $d>1$.
\new{We describe all data structures for the range R-entropy queries, however all all results can be extended straightforwardly to range S-entropy queries.}

\subsection{Efficient data structure for $d=1$}
\label{subsec:DS1}
Let $P$ be a set of $n$ points in $\Re^1$.
Since the range entropy query problem is not decomposable, the main idea is to precompute the entropy in some carefully chosen canonical subsets of $P$. When we get a query interval $R$, we find the maximal precomputed canonical subset in $R$, and then for each color among the colors of points in $R$ not included in the canonical subset, we update the overall entropy using Equations~\ref{eq:entropyupdate}, \ref{eq:entropyupdateinsert}, and~\ref{eq:entropyupdatedelete}. We also describe how we can precompute the entropy of all canonical subsets efficiently.

\paragraph{Data Structure.}
Let $t\in[0,1]$ be a parameter. Let $B_t=\{b_1, \ldots, b_k\}$ be $k=n^{1-t}$ points in $\Re^1$ such that $|P\cap [b_j,b_{j+1}]|=n^t$, for any $j<n^{1-t}$. For any pair $b_i, b_j\in B_t$ let $I_{i,j}=[b_i,b_j]$ be the interval with endpoints $b_i, b_j$. \new{Let $I=\{I_{i,j}\mid b_i, b_j\in B_t, b_i\leq b_j$\} be the set of all intervals defined by the points in $B$.}
For any pair $b_i, b_j$ we store the interval $I_{i,j}$ and we precompute $\hat{H}_{i,j}=\ren_\alpha(P\cap I_{i,j})$, and $n_{i,j}=|P\cap I_{i,j}|$.
Finally, for each color $u\in u(P)$ we construct a search binary tree $\mathcal{T}_u$ over $P(u)$.

We have $|B_t|=O(n^{1-t})$ so $|I|=O(n^{2(1-t)})$. 
Furthermore, all constructed search binary trees have $O(n)$ space in total.
Hence we need $O(n^{2(1-t)})$ space for our data structure.

\paragraph{Query procedure.}
Given a query interval $R$, we find the maximal interval $I_{i,j}\in I$ such that $I\subseteq R$ \new{using two predecessor queries}. Recall that we have precomputed the entropy $\hat{H}_{i,j}$.
Let $\hat{H}=\hat{H}_{i,j}$ be a variable that we will update throughout the algorithm storing the current entropy. Let also $N=n_{i,j}$ be the variable that stores the number of items we currently consider to compute $H$.
Let $P_R=P\cap (R\setminus I_{i,j})$ be the points in $P\cap R$ that are not included in the maximal interval $I_{i,j}$. See also Figure~\ref{fig:Query1d}.
\begin{figure}[H]
    \includegraphics[scale=0.3]{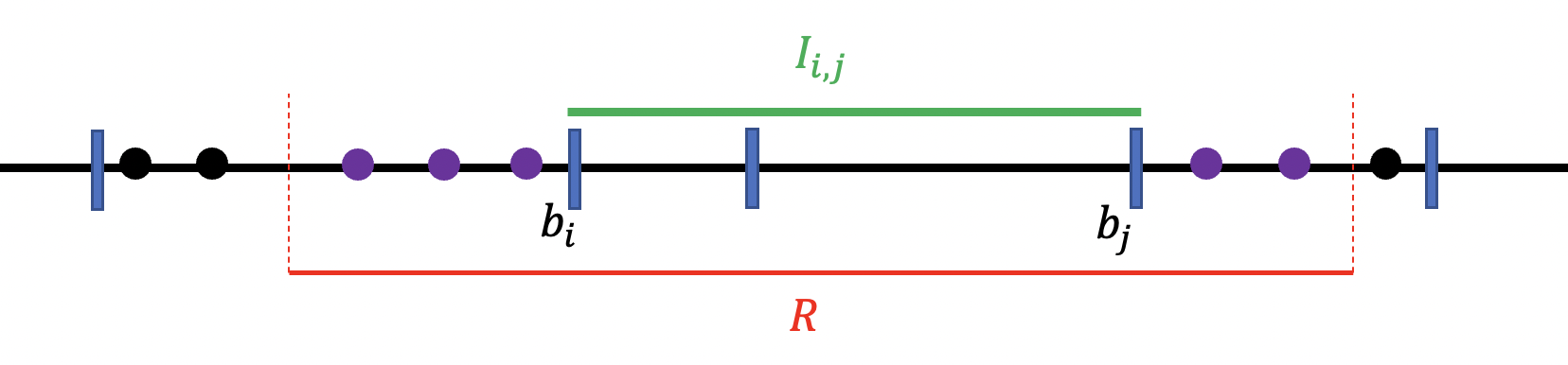}
   \caption{Instance of the query algorithm given query interval $R$. Purple points are points in $P_R$. \label{fig:Query1d}}
\end{figure}
We visit each point in $P_R$ and we identify $u(P_R)$.
For each $\mathbf{u}\in u(P_R)$, we run a query in $\mathcal{T}_\mathbf{u}$ with range $I_{i,j}$ finding the number of points in $P\cap I_{i,j}$ with color $\mathbf{u}$. Let $n_\mathbf{u}$ be this count.

If $n_\mathbf{u}=0$ then there is no point in $P\cap I_{i,j}$ with color $\mathbf{u}$ so we insert $|P_R(\mathbf{u})|$ items of color $\mathbf{u}$ in the current entropy using Equation~\ref{eq:Renentropyupdateinsert}. In that formula, $|P_1|=N$, $\ren_\alpha(P_1)=\hat{H}$ and $|P_2|=|u(P_R)|$. We update $N=N+|u(P_R)|$, and $\hat{H}$ with the updated entropy $\ren_\alpha(P_1\cup P_2)$.

If $n_\mathbf{u}>0$ then there is at least one point in $P\cap I_{i,j}$ with color $\mathbf{u}$. Hence, we update the entropy $\hat{H}$, by first removing the $n_\mathbf{u}$ points of color $\mathbf{u}$ in $P\cap I_{i,j}$ and then re-inserting $n_\mathbf{u}+|u(P_R)|$ points of color $\mathbf{u}$. We use Equation~\ref{eq:Renentropyupdatedelete} for removing the points with color $\mathbf{u}$ with $|P_1|=N$, $\ren_\alpha(P_1)=\hat{H}$, and $|P_3|=n_u$. We update $N=N-n_\mathbf{u}$ and $\hat{H}$ with the updated entropy $\ren_\alpha(P_1\setminus P_3)$. Then we use Equation~\ref{eq:Renentropyupdateinsert} for re-inserting the points with color $u$, with $|P_1|=N$, $\ren_\alpha(P_1)=\hat{H}$, and $|P_2|=n_\mathbf{u}+|u(P_R)|$. We update $N=N+n_\mathbf{u}+|u(P_R)|$ and $H$ with the updated entropy $\ren_\alpha(P_1\cup P_2)$.
After visiting all colors in $u(P_R)$, we return the updated entropy $\hat{H}$.
The correctness of the algorithm follows from Equations~\ref{eq:Renentropyupdateinsert}, \ref{eq:Renentropyupdatedelete}. For each color $u\in u(P_R)$ we update the entropy including all points of color $u$.

\new{For a query interval $R$ the predecessor queries take $O(\log n)$ time to find $I_{i,j}$}. The endpoints of $R$ intersect two intervals $[b_h, b_{h+1}]$ and $[b_v, b_{v+1}]$. Recall that by definition, such interval contains $O(n^{t})$ points from $P$. Hence, $|P_R|=O(n^{t})$ and $|u(P_R)|=O(n^t)$. For each $\mathbf{u} \in u(P_R)$, we spend $O(\log n)$ time to search $\mathcal{T}_\mathbf{u}$ and find $n_\mathbf{u}$. Then we update the entropy in $O(1)$ time. Overall, the query procedure takes $O(n^{t}\log n)$ time.

\paragraph{Fast Construction.}
In order to construct the data structure we need to compute $\hat{H}_{i,j}$ for every interval $I_{i,j}$. A straightforward algorithm is the following:
We first visit all intervals $I_{i,i+1}$ and compute the entropy by traversing all points in $P\cap I_{i,i+1}$.
Then we repeat the same for intervals $I_{i,i+2}$. More specifically, we first make a pass over $P$ and we compute $\hat{H}_{i,i+2}$ for each $i=\{1,3, 5, \ldots\}$. Then, we make another pass over $P$ and we compute, $\hat{H}_{i,i+2}$ for each $i=\{2,4, 6, \ldots\}$. We continue with the same way for intervals $I_{i,i+\ell}$. Overall the running time is upper bounded by $O\left(n+\sum_{\ell=2}^{n^{1-t}}\ell\cdot \frac{n^{1-t}}{\ell}n\right)=O(n^{3-2t})$.
We can improve the construction with the following trick. The high level idea of the algorithm remains the same. However, when we compute $\hat{H}_{i,i+\ell}$, notice that we have already computed $\hat{H}_{i,i+\ell-1}$. Hence, we can use $\hat{H}_{i,i+\ell-1}$ and only traverse the points in $P\cap I_{i+\ell-1,i+\ell}$ updating $\hat{H}_{i,i+\ell-1}$ as we did in the query procedure. Each interval $I_{i+\ell-1,i+\ell}$ contains $O(n^t)$ points so we need only $O(n^t\log n)$ time to find the new entropy. For each $\ell$, we need $O(\frac{n^{1-t}}{\ell}n^t)$ time to find all $\hat{H}_{i,i+\ell}$ for $i=\{1, 1+\ell, 1+2\ell,\ldots\}$. Hence, we need $O(\ell\frac{n^{1-t}}{\ell}n^t)$ time to compute all entropies $\hat{H}_{i,i+\ell}$.
Overall we can construct our data structure in $O\left(\sum_{\ell=1}^{n^{1-t}}\ell\cdot\frac{n^{1-t}}{\ell}n^t\right)=O(n^{2-t})$ time.

\new{\paragraph{Extension to Shannon Entropy.}
The data structure can be extended straightforwardly to the range S-entropy query. The only difference is that instead of computing $\ren_\alpha(P\cap I_{i,j})$, we pre-compute $H(P\cap I_{i,j})$ and we use the the Equations~\eqref{eq:entropyupdateinsert},~\eqref{eq:entropyupdatedelete} to update the Shannon entropy. We conclude with the next theorem.

\begin{thm}
\label{thm:exact1d}
Let $P$ be a set of $n$ points in $\Re^1$, where each point is associated with a color, and let $\alpha, t$ be two parameters such that $\alpha>1$ and $t\in[0,1]$. A data structure of $O(n^{2(1-t)})$ size can be constructed in $O(n^{2-t})$ time, such that given a query interval $R$, $H(P\cap R)$ and $\ren_\alpha(P\cap R)$ can be computed in $O(n^{t}\log n)$ time.
\end{thm}
}

\subsection{Efficient data structure for $d>1$}
\label{subsec:DSd}
While the previous data structure can be extended to higher dimensions, here we propose a more efficient data structure for $d>1$. In this data structure we split the points with respect to their colors. The data structure has some similarities with the data structure presented in~\cite{agarwal2018range, agarwal2016range} for the max query under uncertainty, however, the two problems are different and there are key differences on the way we construct the data structure and the way we compute the result of the query.

\begin{figure}[H]
        \includegraphics[width=0.5\linewidth]{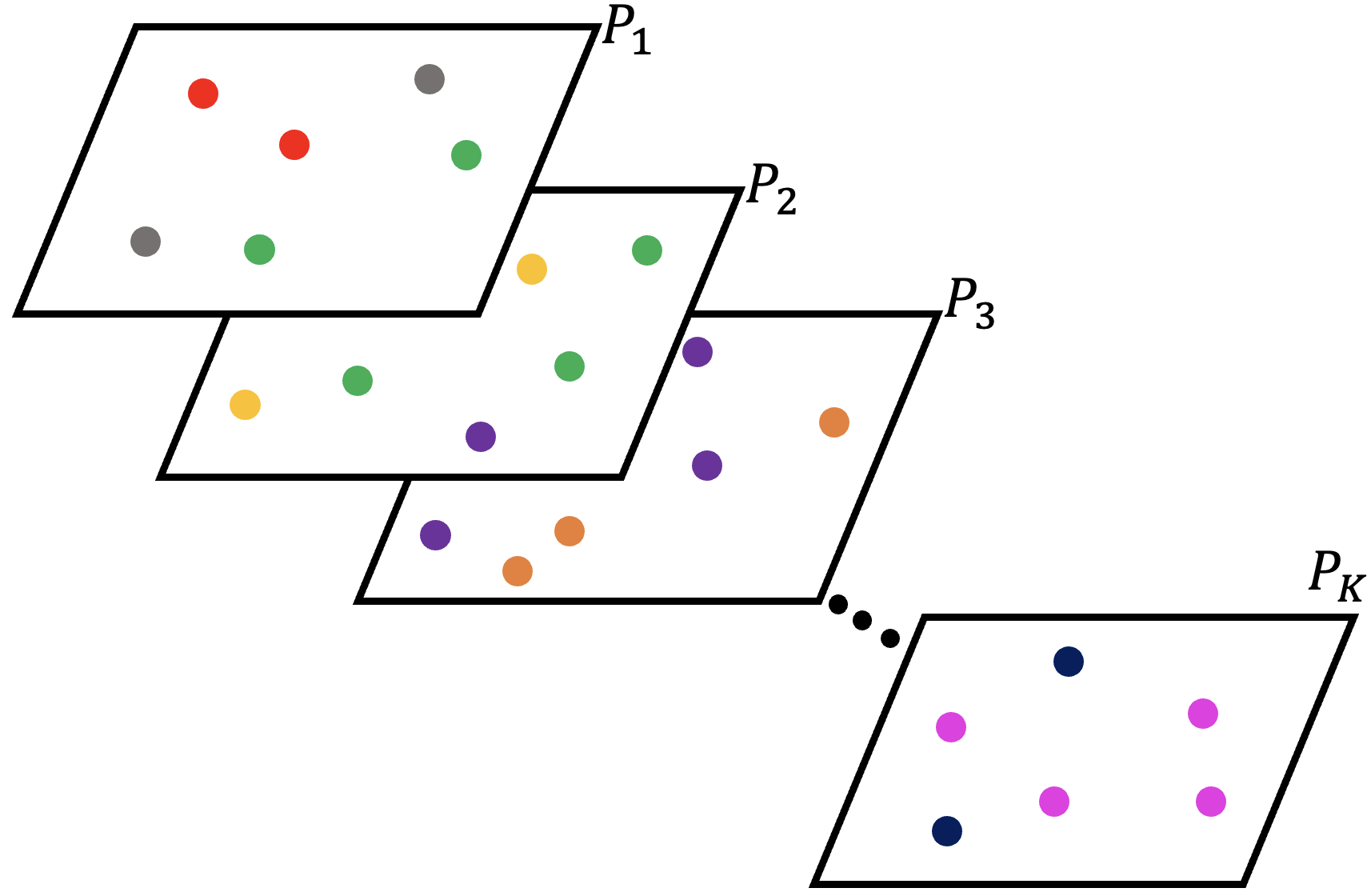}
              \caption{Partition $P$ into $K$ buckets in $\Re^2$. Two consecutive buckets have at most one color in common. \label{fig:Querydd}}
\end{figure}

\paragraph{Data Structure.}
We first consider an arbitrary permutation of the colors in $U$, i.e. $u_1, \ldots, u_m$. The order used to partition the items is induced from the permutation over the colors. Without loss of generality, we set $u_j=j$ for each $j\leq m$.
We split $P$ into $K=O(n^{1-t})$ buckets $P_1,\ldots, P_K$ such that i) each bucket contains $O(n^t)$ points, and ii) for every point $p\in P_i$ and $q\in P_{i+1}$, $u(p)\geq u(q)$.
We notice that for any pair of buckets $P_i$, $P_{i+1}$ it holds $|u(P_i)\cap u(P_{i+1})|\leq 1$, see Figure~\ref{fig:Querydd}. We slightly abuse the notation and we use $P_i$ to represent both the $i$-th bucket and the set of points in the $i$-th bucket.


For each bucket $P_i$, we take all combinatorially different (hyper)rectangles $R_i$ defined by the points $P_i$. For each such rectangle $r$, we precompute and store the entropy $\ren_\alpha(P_i\cap r)$ along with the number of points $n(P_i\cap r)=|P_i\cap r|$. In addition, we store $u^+(r)$, the color with the maximum value (with respect to the permutation of the colors)in $r\cap P_i$. Furthermore, we store $u^-(r)$, the color with the minimum value in $r\cap P_i$. Let $n^+(r)=|\{p\in r\cap P_i\mid u(p)=u^+(r)\}|$ and $n^-(r)=|\{p\in r\cap P_i\mid u(p)=u^-(r)\}|$.
Finally, for each bucket $P_i$ we construct a modified range tree $\mathcal{T}'_i$ over all $R_i$, such that given a query rectangle $R$ it returns the maximal rectangle $r\in R_i$ that lies completely inside $R$. We note that $r\cap P_i = R\cap P_i$. This can be done by representing the $d$-dimensional hyper-rectangles as $2d$-dimensional points merging the coordinates of two of their corners\new{, similarly to~\cite{esmailpour2025theoretical} (Section 4.2).}

Overall, we need $O(n\log^{2d-1} n)$ space for the modified range trees $\mathcal{T}'_i$, and $O(n^{1-t}\cdot n^{2dt})=O(n^{(2d-1)t+1})$ space to store all additional information (entropy, counts, max/min color) in each rectangle. This is because there are $O(n^{1-t})$ buckets, and in each bucket there are $O(n^{2dt})$ combinatorially different rectangles.
Overall, our data structure has $O\left(n\log^{2d-1}n + n^{(2d-1)t+1}\right)$ space.

\paragraph{Query Procedure.}
We are given a query (hyper)rectangle $R$.
We visit the buckets $P_1,\ldots P_K$ in order and compute the entropy for $R\cap(P_1\cup\ldots\cup P_i)$.
Let $\hat{H}$ be the overall entropy we have computed so far.
For each bucket $P_i$ we do the following: First we run a query using $\mathcal{T}'_i$ to find $r_i\in R_i$ that lies completely inside $R$. Then we update the entropy $\hat{H}$ considering the items in $P_i\cap r_i$. If $u^-(r_{i-1})=u^+(r_i)$ then we update the entropy $\hat{H}$ by removing $n^-(r_{i-1})$ points with color $u^-(r_{i-1})$ using Equation~\ref{eq:Renentropyupdatedelete}. Then we insert $n^-(r_{i-1})+n^+(r_i)$ points of color $u^+(r_i)$ in $\hat{H}$ using Equation~\ref{eq:Renentropyupdateinsert}. Finally, we remove $n^+(r_i)$ points of color $u^+(r_i)$ from the precomputed $\ren_\alpha(P_i\cap r_i)$ using Equation~\ref{eq:Renentropyupdatedelete} and we merge the updated $\hat{H}$ with $H(P_i\cap r_i)$ using Equation~\ref{eq:Renentropyupdate}. We note that in the last step we can merge the updated $\hat{H}$ with the updated $\ren_\alpha(P_i\cap r_i)$ because no color from the points used to compute the current $\hat{H}$ appears in the points used to compute the current $\ren_\alpha(P_i\cap r_i)$. On the other hand, if $u^-(r_{i-1})\neq u^+(r_i)$, then we merge the entropies $\hat{H}$ and $\ren_\alpha(P_i\cap r_i)$ using directly Equation~\ref{eq:Renentropyupdate}.

In each bucket $P_i$ we need $O(\log^{2d} n)$ to identify the maximal rectangle $r_i$ inside $R$. Then we need $O(1)$ time to update the current entropy $\hat{H}$. Overall, we need $O(n^{1-t}\log^{2d} n)$ time.

\paragraph{Fast Construction.}
All range trees can be computed in $O(n\log^{2d} n)$ time. Next, we focus on computing $\ren_\alpha(P_i\cap r)$ for all rectangles $r\in R_i$. We compute the other quantities $n(P_i\cap r)$, $u^-(r)$, and $u^+(r)$ with a similar way.
A straightforward way is to consider every possible rectangle $r$ and compute independently the entropy in linear time. There are $O(n^{2dt})$ rectangles so the running time is $O(n^{2dt+1})$. We propose a faster construction algorithm.

The main idea is to compute the entropy for rectangles in a specific order. In particular, we compute the entropy of rectangles that contain $c$ points after we compute the entropies for rectangles that contain $c-1$ points. Then we use Equations~\ref{eq:Renentropyupdateinsert}, \ref{eq:Renentropyupdatedelete} to update the entropy of the new rectangle without computing it from scratch.

More specifically,
let $L_d$ be the points in $P$ sorted in ascending order with respect to their $d$-th coordinate. For each color $u_k$ we construct a range tree $\mathcal{T}_k$ for range counting queries. Furthermore, we construct a range tree $\mathcal{T}$ for range counting queries (independent of color). Let $P_i$ be a bucket. Assume that we have already computed the entropy for every rectangle that contains $c-1$ points in $P_i$. We traverse all rectangles containing $c$ points: Let $p$ be any point in $P_i$. We assume that $p$ lies in the bottom hyperplane of the hyper-rectangle (with respect to $d$-th coordinate). Next we find the points that lie in the next $2d-2$ sides of the rectangle. In particular we try all possible sets of $2d-2$ points in $P_i$. We notice that each such set, along with the first point $p$, defines an open hyper-rectangle, i.e., a hyper-rectangle whose bottom hyperplane with respect to the $d$-th coordinate passes through point $p$ and there is no top hyperplane with respect to coordinate $d$. We find the top-hyperplane by running a binary search on $L_d$. For each point $q\in P_i$ we check in the binary search, let $r$ be the hyper-rectangle defined by the set of $2d$ points we have considered. Using $\mathcal{T}$, we run a range counting query on $r\cap P_i$. If $|r\cap P_i|<c$ then we continue the binary search on the larger values. If $|r\cap P_i|>c$, we continue the binary search on the smaller values. If $|r\cap P_i|=c$ then let $q\in P_i$ be the point on the top hyperplane we just checked in the binary search. We run another binary search on $L_d$ to find the hyper-rectangle $r'\subseteq r$ that contains $c-1$ points. Again, we use the range tree $\mathcal{T}$ to find the rectangle $r'$ as we run the binary search on $L_d$. We have, $\ren_\alpha(r\cap P_i)=\ren_\alpha\left((r'\cap P_i)\cup \{q\}\right)$. Let $u(q)=u_k$. Using $\mathcal{T}_k$ we count $n(r',u_k)$ the number of points in $r'$ with color $u_k$. Let $\hat{H}$ be the entropy of $\ren_\alpha(P_i\cap r')$ by removing $n(r',u_k)$ points of color $u_k$ from $P_i\cap r'$ as shown in Equation~\ref{eq:Renentropyupdatedelete}. Finally, we get the entropy $\ren_\alpha(P_i\cap r)$ by updating $\hat{H}$, inserting $n(r',u_k)+1$ points of color $u_k$, as shown in Equation~\ref{eq:Renentropyupdateinsert}.

The running time is bounded by $O(n^{(2d-1)t+1}\log^{d+1} n)$ time, because we have $O(n^{1-t})$ buckets, each rectangle in a bucket contains at most $O(n^{t})$ points so we have to check $O(n^{t})$ values of $c$, then we take $O(n^{t})$ possible points $p$, and all sets of size $2d-2$ are $O(n^{(2d-2)t})$. For each such rectangle we run two binary searches where each step takes $O(\log^{d} n)$ time to run the range counting query.

\new{\paragraph{Extension to Shannon Entropy.}
Similarly to Subsection~\ref{subsec:DS1}, the data structure can be extended straightforwardly to the range S-entropy query using the the Equations~\eqref{eq:entropyupdateinsert},~\eqref{eq:entropyupdatedelete},~\eqref{eq:entropyupdate} to update the Shannon entropy. We conclude with the next theorem.

\begin{thm}
\label{thm:d-dim-entopy}
Let $P$ be a set of $n$ points in $\Re^d$, where each point is associated with a color, and let $\alpha, t$ be two parameters such that $\alpha>1$ and $t\in [0,1]$. A data structure of $O(n\log^{2d-1}n + n^{(2d-1)t+1})$ size can be constructed in $O(n\log^{2d} n + n^{(2d-1)t+1}\log^{d+1} n)$ time, such that given a query hyper-rectangle $R$, $H(P\cap R)$ and $\ren_{\alpha}(P\cap R)$ can be computed in $O(n^{1-t}\log^{2d} n)$ time.
\end{thm}}

\section{Approximate Data Structures for S-Entropy Queries}
\label{sec:approx}
In this section we describe data structures that return the Shannon entropy in a query range, approximately. First, we present a data structure that returns an additive approximation of the Shannon entropy and next we present a data structure that returns a multiplicative approximation efficiently. Then, for $d=1$, we design a deterministic and more efficient data structure that returns an additive and multiplicative approximation of the Shannon entropy.

\subsection{Additive Approximation}
\label{subsec:ApproxAdd}
In this Subsection, we construct a data structure on $P$ such that given a query rectangle $R$ and a parameter $\Delta$, it returns a value $h$ such that $H(P\cap R)-\Delta\leq h\leq H(P\cap R)+\Delta$.
The intuition comes from the area of finding an additive approximation of the entropy of an unknown distribution in the dual access model~\cite{canonne2014testing}.

Let $D$ be a fixed distribution over a set of values (outcomes) $\out_1, \ldots, \out_N$. Each value $\out_i$ has a probability $D(\out_i)$ which is not known, such that $\sum_{i=1}^N D(\out_i)=1$. The authors in~\cite{canonne2014testing} show that if we ask $O\left(\frac{\log^2 \frac{N}{\Delta}\log N}{\Delta^2}\right)$ sample queries in the dual access model, then we can get a $\Delta$ additive-approximation of the entropy of $D$ with high probability in $O\left(\frac{\log^2 \frac{N}{\Delta}\log N}{\Delta^2}\mathcal{S}\right)$ time, where $\mathcal{S}$ is the running time to get a sample.
In the dual access model, we consider that we have a dual oracle for $D$ which is a pair of oracles $(\textsf{SAMP}_D, \textsf{EVAL}_D)$. When required, the sampling oracle $\textsf{SAMP}_D$ returns a value $\out_i$ with probability $D(\out_i)$, independently of all previous calls to any oracle. Furthermore, the evaluation oracle $\textsf{EVAL}_D$ takes as input a query element $\out_i$ and returns the probability weight $D(\out_i)$.

Next, we describe how the result above can be used in our setting. The goal in our setting is to find the entropy $H(P')$, where $P'=P\cap R$, for a query rectangle $R$. The colors in $u(P')$ define the distinct values in distribution $D$. By definition, the number of colors is bounded by $|P'|=O(n)$.
The probability weight is defined as $\frac{|P'(u_i)|}{|P'|}$.
We note that in~\cite{canonne2014testing} they assume that they know $N$, i.e., the number of values in distribution $D$. In our case, we cannot compute the number of colors $|u(P')|$ efficiently. Even though we can easily compute an $O(\log^d n)$ approximation of $|u(P')|$, it is sufficient to use the loose upper bound $|u(P')|\leq n$. This is because, without loss of generality, we can assume that there exist $n-|u(P')|$ values/colors with probability (arbitrarily close to) $0$. All the results still hold.
Next, we present our data structure to simulate the dual oracle.

\paragraph{Data structure.}
For each color $u_i\in U$ we construct a range tree $\mathcal{T}_i$ on $P(u_i)$ for range counting queries. We also construct another range tree $\mathcal{T}$ on $P$ for range counting queries, which is independent of the color.
Next, we construct a range tree $\mathcal{S}$ on $P$ for range sampling queries as described in Section~\ref{sec:prelim}.
We need $O(n\log^{d}n)$ time to construct all the range trees, while the overall space is $O(n\log^{d-1}n)$.

\paragraph{Query procedure.}
The query procedure involves the algorithm for estimating the entropy of an unknown distribution in the dual access model~\cite{canonne2014testing}.
Here, we only need to describe how to execute the oracles $\textsf{SAMP}_D$ and $\textsf{EVAL}_D$ in $P'=P\cap R$ using the data structure.
\begin{itemize}
    \item $\textsf{SAMP}_D$: Recall that $\textsf{SAMP}_D$ returns $\out_i$ with probability $D(\out_i)$. In our setting, values $\out_1,\ldots, \out_n$ correspond to colors. So, the goal is to return a color $u_i$ with probability proportional to the number of points with color $u_i$ in $P'$. Indeed, $\mathcal{S}$ returns a point $p$ uniformly at random in $P'$. Hence, the probability that a point with color $u_i$ is found is $\frac{|P'(u_i)|}{|P'|}$.
    \item $\textsf{EVAL}_D$: Recall that given a value $\out_i$, $\textsf{EVAL}_D$ returns the probability weight $D(\out_i)$. Equivalently, in our setting, given a color $u_i$, the goal is to return $\frac{|P'(u_i)|}{|P'|}$. Using $\mathcal{T}_i$ we run a counting query in the query rectangle $R$ and find $|P'(u_i)|$. Then using $\mathcal{T}$, we run a counting query in $R$ and we get $|P'|$. We divide the two quantities and return the result.
\end{itemize}
In each iteration, every oracle call $\textsf{SAMP}_D$ and $\textsf{EVAL}_D$ executes a constant number of range tree queries, so the running time is $O(\log^{d} n)$. The algorithm presented in~\cite{canonne2014testing} calls the oracles $O(\frac{\log^2\frac{n}{\Delta}\log n}{\Delta^2})$ times to guarantee the result with probability at least $1-1/n$, so the overall query time is $O\left(\frac{\log^{d+1}n \cdot \log^2\frac{n}{\Delta}}{\Delta^2}\right)$. We note that if $\Delta<\frac{1}{\sqrt{n}}$ then the query time is $\Omega(n\log n)$. However, it is trivial to compute the entropy in $P\cap R$ in $O(n\log n)$ time by traversing all points in $P\cap R$. Hence, the additive approximation is non-trivial when $\Delta\geq \frac{1}{\sqrt{n}}$. In this case, $\log^2 \frac{n}{\Delta^2}=O(\log^2 n)$.
We conclude that the query time is bounded by $O\left(\frac{\log^{d+3}n}{\Delta^2}\right)$.
We conclude with the next theorem.

\begin{thm}
\label{thm:add-approx}
Let $P$ be a set of $n$ points in $\Re^d$, where each point is associated with a color. A data structure of $O(n\log^{d-1}n)$ size can be constructed in $O(n\log^{d} n)$ time, such that given a query hyper-rectangle $R$ and a real parameter $\Delta$, a value $h$ can be computed in $O\left(\frac{\log^{d+3}n}{\Delta^2}\right)$ time, such that $H(P\cap R)-\Delta\leq h\leq H(P\cap R)+\Delta$, with high probability.
\end{thm}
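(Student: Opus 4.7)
The plan is to reduce range S-entropy queries to additive estimation of an unknown discrete distribution in the dual-access model and to simulate the two required oracles with orthogonal range trees on $P$. For a query rectangle $R$, take $D = \mathcal{D}_R$; the support elements are the colors appearing in $u(P \cap R)$ and each color $u_i$ has weight $|P'(u_i)|/|P'|$ with $P' = P \cap R$. Canonne and Rubinfeld~\cite{canonne2014testing} produce a $\Delta$-additive estimate of $H(D)$ using $O((\log^2(N/\Delta)\log N)/\Delta^2)$ dual-oracle calls to reach failure probability $1/n$; we only need to show that each oracle call costs $O(\log^d n)$ in our setting.

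First I would construct, as preprocessing, (i) for every color $u_i \in U$ a range tree $\mathcal{T}_i$ on $P(u_i)$ supporting range counting, (ii) a global range tree $\mathcal{T}$ on $P$ supporting range counting, and (iii) a sampling range tree $\mathcal{S}$ on $P$ of the form described in Section~\ref{sec:prelim}, which returns a uniformly random point of $P \cap R$ in $O(\log^d n)$ time. Since $\sum_{u_i \in U} |P(u_i)| = n$, the total space of the $\mathcal{T}_i$'s is $O(n\log^{d-1} n)$, and all three structures can be built in $O(n\log^d n)$ time, matching the space and preprocessing claims of the theorem.

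Given the query $R$, the oracles are realized as follows. $\textsf{SAMP}_D$ calls $\mathcal{S}$ for a uniform sample $p \in P'$ and returns $u(p)$; uniformity of $\mathcal{S}$ implies that color $u_i$ is returned with probability exactly $|P'(u_i)|/|P'| = \mathcal{D}_R(u_i)$, so the simulation is faithful and independent across repeated calls. $\textsf{EVAL}_D(u_i)$ issues two counting queries, one to $\mathcal{T}_i$ for $|P'(u_i)|$ and one to $\mathcal{T}$ for $|P'|$, and returns their ratio. Each oracle call therefore costs $O(\log^d n)$, and feeding these into the Canonne--Rubinfeld estimator gives a total query cost of $O(\log^{d+3} n / \Delta^2)$, as claimed.

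Two subtleties require attention. First, the estimator is stated in terms of a known support size $N$, whereas $|u(P \cap R)|$ is not explicitly available at query time; I would handle this by padding $\mathcal{D}_R$ with zero-probability colors and substituting the trivial bound $N \le n$. Zero-mass colors are never produced by $\textsf{SAMP}_D$ and contribute nothing to $H(\mathcal{D}_R)$, so correctness is preserved and the only cost is absorbing a constant factor into the $\log^2(n/\Delta)$ term. Second, when $\Delta < 1/\sqrt{n}$ the nominal bound degenerates, so I would fall back to exact computation by reporting $P \cap R$ via $\mathcal{T}$ and computing $H(P \cap R)$ directly; otherwise $\log^2(n/\Delta) = O(\log^2 n)$, so the product with the per-oracle cost collapses to the stated $O(\log^{d+3} n/\Delta^2)$. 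The main conceptual obstacle is simply verifying that the sampling primitive $\mathcal{S}$ faithfully implements $\textsf{SAMP}_D$ and that independence across oracle calls is preserved; the rest is routine bookkeeping on top of the existing range-tree primitives.
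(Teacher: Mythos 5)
Your proposal matches the paper's own proof essentially step for step: the same reduction to the Canonne--Rubinfeld dual-access estimator, the same three range-tree structures ($\mathcal{T}_i$, $\mathcal{T}$, $\mathcal{S}$) simulating $\textsf{SAMP}_D$ and $\textsf{EVAL}_D$ in $O(\log^d n)$ per call, the same padding argument replacing the unknown support size by the bound $N\le n$, and the same fallback to direct computation when $\Delta<1/\sqrt{n}$ so that $\log^2(n/\Delta)=O(\log^2 n)$. The argument is correct and no further comparison is needed.
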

This data structure can be made dynamic under arbitrary insertions and deletions of points using well known techniques~\cite{ bentley1980decomposable, erickson2011static, overmars1983design, overmars1981worst}. The update time is $O(\log^d n)$.

\subsection{Multiplicative Approximation}
\label{subsec:ApproxMult}
In this Subsection, we construct a data structure such that given a query rectangle $R$ and a parameter $\eps$, it returns a value $h$ such that $\frac{1}{1+\eps}H(P\cap R)\leq h\leq (1+\eps)H(P\cap R)$.
The intuition comes for the area of finding a multiplicative approximation of the entropy of an unknown distribution in the dual access model~\cite{guha2006streaming} and the streaming algorithms for finding a multiplicative approximation of the entropy~\cite{chakrabarti2007near}.
In particular, in this section we extend the streaming algorithm proposed in~\cite{chakrabarti2007near} to work in the query setting.

We use the notation from the previous Subsection where $D$ is an unknown distribution over a set of values $\out_1, \ldots, \out_N$.
It is known~\cite{guha2006streaming} that if we ask $O\left(\frac{\log N}{\eps^2\cdot H'}\right)$ queries in the dual access model, where $H'$ is a lower bound of the actual entropy of $D$, i.e., $H(D)\geq H'$, then we can get an $(1+\eps)$-multiplicative approximation of the entropy of $D$ with high probability, in $O\left(\frac{\log N}{\eps^2\cdot H'}\mathcal{S}\right)$ time, where $\mathcal{S}$ is the time to get a sample.
We consider that we have a dual oracle for $D$ which is a pair of oracles $(\textsf{SAMP}_D, \textsf{EVAL}_D)$, as we had in additive approximation. Similarly to the additive approximation, in our setting we do not know the number of colors in $P'=P\cap R$ or equivalently the number of values $N$ in distribution $D$. However, it is sufficient to use the upper bound $|u(P')|\leq n$ considering $n-|u(P')|$ colors with probability (arbitrarily close to) $0$.
If we use the same data structure constructed for the additive approximation, we could solve the multiplicative-approximation, as well. While this is partially true, there is a big difference between the two problems. What if the actual entropy is very small so $H'$ is also extremely small?
In this case, the factor $\frac{1}{H'}$ will be very large making the query procedure slow.

We overcome this technical difficulty by considering two cases. If $H'$ is large, say $H'\geq 0.9$, then we can compute a multiplicative approximation of the entropy efficiently applying~\cite{guha2006streaming}. On the other hand, if $H'$ is small, say $H'<0.9$, then we use the ideas from~\cite{chakrabarti2007near} to design an efficient data structure. In particular, we check if there exists a value $a_M$ with $D(a_M)>2/3$. If it does not exist then $H'$ is large so it is easy to handle. If $a_M$ exists, we write $H(D)$ as a function of $H(D\setminus\{a_M\})$ using Equation~\ref{eq:entropyupdatedelete}. In the end, if we get an additive approximation of $H(D\setminus\{a_M\})$ we argue that this is sufficient to get a multiplicative approximation of $H'$.

\paragraph{Data Structure.}
For each color $u_i$ we construct a range tree $\mathcal{T}_i$ over $P(u_i)$ as in the previous Subsection. Similarly, we construct a range tree $\mathcal{T}$ over $P$ for counting queries.
We also construct the range tree $\mathcal{S}$ for returning uniform samples in a query rectangle. In addition to $\mathcal{S}$, we also construct a variation of this range tree, denoted by $\bar{\mathcal{S}}$. Given a query rectangle $R$ and a color $u_i\in U$, $\bar{\mathcal{S}}$ returns a point from $\{p\in R\cap P\mid u(p)\neq u_i\}$ uniformly at random. In other words, $\bar{\mathcal{S}}$ is a data structure over $P$ that is used to return a point in a query rectangle uniformly at random excluding points of color $u_i$. While $\bar{\mathcal{S}}$ is an extension of $\mathcal{S}$, the low level details are more tedious and are shown in the next paragraphs.

\new{
We extend the range tree data structure for range sampling queries we showed in Section~\ref{sec:prelim}. Given a query rectangle $R$ and a color $u_j$, the goal is to return a uniform sample among the points in $(P\cap R)\setminus P(u_j)$. 
We construct a standard range tree on the points set $P$, as in Section~\ref{sec:prelim}. Using the same notation as in Section~\ref{sec:prelim}, for a $d$-level node $v$ of the range tree, we use the notation $P_v$ to denote the subset of points $P\cap \square_v$.
In each $d$-level node $v$ of the range tree, we store a hashmap $M_v$ having as keys the colors of the points stored in leaf nodes of the subtree rooted at $v$, and as values the number of leaf nodes in the subtree rooted at $v$ with color key. 
More formally, for each node $v$, we construct a hashmap $M_v$, such that for every color $u_i\in u(P_v)$, $M_v[u_i]=|P_v(u_i)|$.
For each node $v$ we also store the cardinality $c(v)=|P_v|=|P\cap \square_v|$. The modified range tree can be constructed in $O(n\log^d n)$ time and it has $O(n\log^d n)$ space because for every node $u$ the hashmap $M_v$ takes $O(|P\cap \square_v|)$ space. Given a query rectangle $R$ and a color $u_j\in U$, we get the set of canonical nodes $\mathcal{N}(R)$. For each node $v\in \mathcal{N}(R)$ we define the weight $w_v=\frac{c(v)-M_v[u_j]}{\sum_{v'\in \mathcal{N}(R)}(c(v')-M_{v'}[u_j])}$. We sample one node from $\mathcal{N}(R)$ with respect to the weights $\{w_v\mid v\in\mathcal{N}(R)\}$ using reservoir sampling.  
Let $v$ be the node that is sampled.
If $v$ is a leaf node then we return the point that is stored in node $v$. Otherwise, assume that $v$ has two children $x, y$. We move to the node $x$ with probability $\frac{c(x)-M_x[u_j]}{c(x)-M_x[u_j]+c(y)-M_y[u_j]}$ and to node $y$ with probability $\frac{c(y)-M_y[u_j]}{c(x)-M_x[u_j]+c(y)-M_y[u_j]}$. We recursively repeat this process until we reach a leaf node of the range tree.

\textit{Analysis.}
Similarly to the range tree for sampling without excluding any color, the query procedure takes $O(\log^d n)$ time.

Next, we show that the sampled point is chosen uniformly at random, i.e., with probability $\frac{1}{|(P\cap R)\setminus P(u_j)|}$.
Let $v\rightarrow v_1\rightarrow\ldots\rightarrow v_k$ be the path of nodes followed by the algorithm to sample a point $p$. Thus $p$ is stored in the leaf node $v_k$. Let $\bar{v}_1,\ldots, \bar{v}_k$ be the siblings of nodes $v_1, \ldots, v_k$, respectively. The probability that $p$ is selected is 
$$\frac{c(v)-M_v[u_j]}{\sum_{v'\in \mathcal{N}(R)}(c(v')\!\!-\!\!M_{v'}[u_j])}\cdot \frac{c(v_1)-M_{v_1}[u_j]}{c(v_1)\!\!-\!\!M_{v_1}[u_j]\!+\!c(\bar{v}_1)\!\!-\!\!M_{\bar{v}_1}[u_j]}\cdot \ldots\cdot \frac{c(v_k)-M_{v_k}[u_j]}{c(v_k)\!-\!M_{v_k}[u_j]\!+\!c(\bar{v}_k)\!-\!M_{\bar{v}_k}[u_j]}.$$
Notice that $c(v)-M_v[u_j]=c(v_1)-M_{v_1}[u_j]+c(\bar{v}_1)-M_{\bar{v}_1}[u_j]$ and $c(v_\ell)-M_{v_\ell}[u_j]=c(v_{\ell+1})-M_{v_{\ell+1}}[u_j]+c(\bar{v}_{\ell+1})-M_{\bar{v}_{\ell+1}}[u_j]$ for every $\ell\in[k-1]$. Furthermore $c(v_k)-M_{v_k}[u_j]=1$ because $v_k$ is a leaf node.
We conclude that the probability of selecting $p$ is $\frac{1}{\sum_{v'\in\mathcal{N}(R)}(c(v')-M_{v'}[u_j])}=\frac{1}{|(P\cap R)\setminus P(u_j)|}$.

Similarly to the range tree for sampling without excluding the points of any color, the data structure can be used to sample on weighted points.
Assume that each point $p\in P$ has a weight $w(p)$, which is a non-negative real number. Given a query hyper-rectangle $R$ the goal is to sample a point from $P\cap R$ with respect to their weight, i.e., a point $p\in P\cap R$ should be selected with probability $\frac{w(p)}{\sum_{p'\in P\cap R}w(p')}$. The construction is exactly the same as in the unweighted case. The only difference is that instead of storing the count $c(v)$ in each node $v$, we store $w(v)=\sum_{p'\in P\cap \square_v}w(p')$ and instead of setting $M_v[u_i]=|P_v(u_i)|$ we store $M_v[u_i]=\sum_{p'\in P_v(u_i)}w(p')$. The query time remains $O(\log^d n)$ and the correctness proof remains the same replacing $c(v)$ with $w(v)$, for each node $v$ of the range tree.
}


The complexity of the entire data structure is dominated by the complexity of $\bar{\mathcal{S}}$. Overall, it can be computed in $O(n\log^{d} n)$ time and it has $O(n\log^{d} n)$ space.

\paragraph{Query procedure.}
First, using $\mathcal{T}$ we get $N=|P\cap R|$. Using $\mathcal{S}$ we get $\frac{\log (2n)}{\log 3}$ independent random samples from $P\cap R$. Let $P_S$ be the set of returned samples. For each $p\in P_S$ with $u(p)=u_i$, we run a counting query in $\mathcal{T}_i$ to get $N_i=|P(u_i)\cap R|$. Finally, we check whether $\frac{N_i}{N}>2/3$.
If we do not find a point $p\in P_S$ (assuming $u(p)=u_i$) with $\frac{N_i}{N}> 2/3$ then we run the algorithm from~\cite{guha2006streaming}.
In particular, we set $H'=0.9$ and we
run $O\left(\frac{\log n}{\eps^2\cdot H'}\right)$ oracle queries $\textsf{SAMP}_D$ or $\textsf{EVAL}_D$, as described in~\cite{guha2006streaming}.
In the end we return the estimate $h$.
Next, we assume that the algorithm found a point with color $u_i$ satisfying $\frac{N_i}{N}>2/3$.
Using $\bar{\mathcal{S}}$ (instead of $\mathcal{S}$) we run the query procedure of the previous Subsection and we get an $\eps$-additive approximation of $H((P\setminus P(u_i))\cap R)$, i.e., the entropy of the points in $P\cap R$ excluding points of color $u_i$. Let $h'$ be the $\eps$-additive approximation we get. In the end, we return the estimate $h=\frac{N-N_i}{N}\cdot h'+\frac{N_i}{N}\log\frac{N}{N_i}+\frac{N-N_i}{N}\log\frac{N}{N-N_i}$.

\paragraph{Correctness.}
It is straightforward to see that if there exists a color $u_i$ containing more than $2/3$'s of all points in $P\cap R$ then $u_i\in u(P_S)$ with high probability.

\begin{lem}
\label{lem:ProbEntr}
Let $u_i$ be the color with $\frac{|P(u_i)\cap R|}{|P\cap R|}>2/3$, and let $B$ be the event that $u_i\in u(P_S)$. The following holds: $\Pr[B]\geq 1-1/(2n)$.
\end{lem}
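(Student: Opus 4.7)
The plan is to observe that $B^c$, the event that none of the samples in $P_S$ has color $u_i$, is an intersection of independent events of small probability, and bound it directly.

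First, I would recall the setup: $P_S$ is drawn by $k = \lceil \log(2n)/\log 3 \rceil$ independent calls to $\mathcal{S}$, each of which returns a uniform random point of $P \cap R$. Hence for any single sample $p \in P_S$, $\Pr[u(p) = u_i] = N_i/N > 2/3$, and so $\Pr[u(p) \neq u_i] < 1/3$. By independence of the samples,
\[
\Pr[B^c] \;=\; \Pr[\,\forall p \in P_S :\; u(p) \neq u_i\,] \;<\; (1/3)^{k}.
\]

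Next I would simplify $(1/3)^{\log(2n)/\log 3}$ using that logarithms in the paper are base $2$. Writing $3 = 2^{\log 3}$, we get
\[
3^{\log(2n)/\log 3} \;=\; 2^{\log(2n)} \;=\; 2n,
\]
so $\Pr[B^c] < 1/(2n)$ and therefore $\Pr[B] \geq 1 - 1/(2n)$, as required.

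The argument is essentially a one-line Chernoff-style bound, so there is no real obstacle; the only subtlety is verifying that the constant $\log(2n)/\log 3$ in the number of samples is exactly chosen so the identity $3^{\log(2n)/\log 3} = 2n$ produces the advertised high-probability guarantee. I would also briefly note that the independence used here is exactly the guarantee provided by the repeated use of $\mathcal{S}$, which returns independent uniform samples from $P \cap R$ as established in Section~\ref{sec:prelim}.
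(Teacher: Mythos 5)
Your proof is correct and follows essentially the same route as the paper's: bound the complement event that no sample has color $u_i$ by $(1/3)^{|P_S|}$ using independence of the samples, then verify that $3^{\log(2n)/\log 3}=2n$. The only difference is that you spell out the base-change arithmetic explicitly, which the paper leaves implicit.
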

\begin{proof}
Let $B_j$ be the event that the $j$-th point selected in $P_S$ does not have color $u_i$. We have $\Pr[B_j]\leq 1/3$.
Then we have $\Pr[\bigcap_j B_j]\leq\frac{1}{3^{|P_S|}}$, since the random variables $B_j$'s are independent.
We conclude that $\Pr[B]=1-\Pr[\bigcap_j B_j]\geq 1-\frac{1}{3^{|P_S|}}=1-\frac{1}{2n}$.
\end{proof}
Hence, with high probability, we make the correct decision.

\new{
The next Lemma holds by a simple convexity argument as shown in~\cite{chakrabarti2007near}.
\begin{lemC}[\hspace{-0.2mm}\cite{chakrabarti2007near}]
\label{lem:maxprobentr}
Let $D$ be a discrete distribution over $m$ values $\{\out_1,\ldots, \out_m\}$ and let $D(\out_i)>0$ for at least two indices $i$. If there is no index $j$ such that $D(\out_j)> 2/3$, then $H(D)>0.9$.
\end{lemC}
If for every color $u_i\in u(P)$ it holds $\frac{P(u_i)\cap R}{|P\cap R|}\leq \frac{2}{3}$, then by Lemma~\ref{lem:maxprobentr} it follows that $H(P\cap R)>0.9$.}

Hence, $O\left(\frac{\log n}{\eps^2}\right)$ oracle queries are sufficient to derive an $(1+\eps)$-multiplicative approximation of the correct entropy.

The interesting case is when we find a color $u_i$ such that $\frac{N_i}{N}>2/3$ and $\frac{N_i}{N}<1$ (if $\frac{N_i}{N}=1$ then $H(P\cap R)=0$). Using the results of the previous Subsection along with the new data structure $\bar{\mathcal{S}}$, we get $h'\in [H((P\setminus P(u_i))\cap R)-\eps, H((P\setminus P(u_i))\cap R)+\eps]$ with probability at least $1-1/(2n)$. We finally show that the estimate $h$ we return is a multiplicative approximation of $H(P\cap R)$.
From Equation~\ref{eq:entropyupdatedelete}, we have $H(P\cap R)=\frac{N-N_i}{N}H((P\setminus P(u_i))+\frac{N_i}{N}\log\frac{N}{N_i}+\frac{N-N_i}{N}\log\frac{N}{N-N_i}$. Since $h'\in [H((P\setminus P(u_i))\cap R)-\eps, H((P\setminus P(u_i))\cap R)+\eps]$, we get $h\in[H(P\cap R) - \eps\frac{N-N_i}{N_i}, H(P\cap R) + \eps\frac{N-N_i}{N_i}]$.
If we show that $\frac{N-N_i}{N_i}\leq H(P\cap R)$ then the result follows. By the definition of entropy we observe that $H(P\cap R)\geq\frac{N_i}{N}\log\frac{N}{N_i}+\frac{N-N_i}{N}\log\frac{N}{N-N_i}$.
\begin{lem}
\label{lem:ineqFinal}
If $1>\frac{N_i}{N}>2/3$, it holds that $\frac{N-N_i}{N_i}\leq \frac{N_i}{N}\log\frac{N}{N_i}+\frac{N-N_i}{N}\log\frac{N}{N-N_i}$.
\end{lem}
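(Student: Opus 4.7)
The plan is to reduce the inequality to a clean analytic statement about the binary entropy function, then apply a standard Taylor-series bound on the logarithm, and finish with a comparison of constants. First I would normalize by setting $x = N_i/N$, so that $(N-N_i)/N = 1-x$ and the hypothesis $1 > N_i/N > 2/3$ becomes $x \in (2/3, 1)$. Under this substitution, the right-hand side of the target inequality is exactly the binary entropy $h(x) = x\log(1/x) + (1-x)\log(1/(1-x))$ (with $\log$ in base $2$, as throughout the paper), while the left-hand side is $(1-x)/x$. So it suffices to prove
\[
 h(x) \ \geq\ \frac{1-x}{x} \qquad\text{for all } x \in (2/3, 1).
\]

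To obtain a tractable lower bound on $h(x)$, I would invoke the elementary inequality $-\ln(1-t) \geq t + t^2/2$, valid for $t \in [0,1)$ because every term of the Taylor expansion of $-\ln(1-t)$ is non-negative. Applying this with $t = x$ to bound $-\log(1-x)$, and with $t = 1-x$ (writing $-\ln x = -\ln(1-(1-x))$) to bound $-\log x$, a short simplification yields
\[
 h(x) \ \geq\ \frac{1}{\ln 2}\bigl( x\,[(1-x) + (1-x)^2/2] + (1-x)\,[x + x^2/2] \bigr) \ =\ \frac{5\,x(1-x)}{2\ln 2},
\]
using $x\,(1-x)^2/2 + (1-x)\,x^2/2 = x(1-x)/2$.

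The remaining step is routine: since $1-x > 0$, the bound $\tfrac{5\,x(1-x)}{2\ln 2} \geq \tfrac{1-x}{x}$ reduces to $x^2 \geq 2\ln 2 / 5 \approx 0.277$, and the hypothesis $x > 2/3$ already forces $x^2 > 4/9 \approx 0.444$, well above the threshold. The only mildly delicate point in the whole argument is the algebraic bookkeeping that yields the coefficient $5/2$; crucially, the approach avoids any case analysis near $x = 1$, where both sides of the target inequality vanish and $h'(x) \to -\infty$, since we have already converted the problem into a comparison of two explicit constants.
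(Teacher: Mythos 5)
Your proof is correct, and it takes a genuinely different route from the paper's. The paper works directly with $f(\alpha)=\alpha\log\frac{1}{\alpha}+(1-\alpha)\log\frac{1}{1-\alpha}-\frac{1}{\alpha}+1$, computes first and second derivatives, shows $f''\leq 0$ on the relevant interval so that $f'$ is decreasing, numerically locates a root $\beta\in(0.75,0.77)$ of $f'$, and then combines the resulting unimodality with the endpoint values $f(0.5)=0$ and $\lim_{\alpha\to 1}f(\alpha)=0$ to conclude $f\geq 0$. You instead lower-bound the binary entropy by a polynomial: the series inequality $-\ln(1-t)\geq t+t^2/2$ applied twice gives $h(x)\geq \frac{5x(1-x)}{2\ln 2}$ (your algebra here checks out, since $x(1-x)^2/2+(1-x)x^2/2=x(1-x)/2$), after which the target inequality reduces to $x^2\geq \frac{2\ln 2}{5}\approx 0.277$, comfortably implied by $x>2/3$. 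Your argument is more elementary and more robust — it needs no second derivatives, no numerical evaluation of $f'$ at sample points, and no limit analysis near $x=1$ — at the mild cost of establishing the inequality only for $x\geq\sqrt{2\ln 2/5}\approx 0.527$ rather than on all of $[0.5,1)$ as the paper's unimodality argument does; since the lemma only requires $x>2/3$, this loses nothing.
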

\begin{proof}
    Let $\alpha=\frac{N_i}{N}$. We define $f(\alpha)=\alpha\log\frac{1}{\alpha}+(1-\alpha)\log\frac{1}{1-\alpha} - \frac{1}{\alpha}+1$. We get the first and the second derivative and we have $f'(\alpha)=\frac{1}{\alpha^2}+\log\frac{1}{\alpha} -\log\frac{1}{1-\alpha}$, and $f''(\alpha)=\frac{\alpha^2-\alpha\cdot\ln 4+ \ln 4}{(\alpha-1)\alpha^3\ln 2}$.
    For $\frac{2}{3}<\alpha<1$, the denominator of $f''(\alpha)$ is always negative, while the nominator of $f''(\alpha)$ is positive. Hence $f''(\alpha)\leq 0$ and $f'(\alpha)$ is decreasing. We observe that $f'(0.75)>0$ while $f'(0.77)<0$, hence there is a unique root of $f'$ which is $\beta\in(0.75,0.77)$. Hence for $\alpha\leq \beta$ $f'(\alpha)\geq 0$ so $f(\alpha)$ is increasing, while for $\alpha> \beta$ we have $f'(\alpha)\leq 0$ so $f(\alpha)$ is decreasing.
    We observe that
    $f(0.5)=0$ and $\lim_{\alpha\rightarrow 1} f(\alpha)=0$. Notice that $0.5<\frac{2}{3}<\beta<1$, so $f(\alpha)\geq 0$ for $\alpha\in[0.5,1)$. Recall that $2/3<\alpha< 1$ so $f(\alpha)\geq 0$. The result follows.
\end{proof}
Using Lemma~\ref{lem:ineqFinal}, we conclude that $h\in[(1-\eps)H(P\cap R), (1+\eps)H(P\cap R)]$.

\paragraph{Analysis.}
We first run a counting query on $\mathcal{T}$ in $O(\log^d n)$ time. Then the set $P_S$ is constructed in $O(\log^{d+1}n)$ time, running $O(\log n)$ queries in $\mathcal{S}$. In the first case of the query procedure (no point $p$ with $\frac{N_i}{N}>2/3$) we run $O(\frac{\log n}{\eps^2})$ oracle queries so in total it runs in $O(\frac{\log^{d+1}}{\eps^2})$ time.
In the second case of the query procedure (point $p$ with $\frac{N_i}{N}>2/3$) we run the query procedure of the previous Subsection using $\bar{\mathcal{S}}$ instead of $\mathcal{S}$, so it takes $O(\frac{\log^{d+3}}{\eps^2})$ time. Overall, the query procedure takes $O(\frac{\log^{d+3}}{\eps^2})$ time.

\begin{thm}
\label{thm:mult-approx}
Let $P$ be a set of $n$ points in $\Re^d$, where each point is associated with a color. A data structure of $O(n\log^{d}n)$ size can be constructed in $O(n\log^{d} n)$ time, such that given a query hyper-rectangle $R$ and a parameter $\eps\in(0,1)$, a value $h$ can be computed in $O\!\!\left(\frac{\log^{d+3} n}{\eps^2}\!\!\right)$ time, such that $\frac{1}{1+\eps}H(P\cap R)\!\leq\!\! h\!\!\leq\! (1+\eps)H(P\!\cap \!R)$, with high probability.
\end{thm}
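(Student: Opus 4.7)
The plan is to build the data structure exactly as described in Subsection~\ref{subsec:ApproxMult}: a range tree $\mathcal{T}$ on $P$ for counting, one tree $\mathcal{T}_i$ per color, a sampling tree $\mathcal{S}$ for uniform sampling from $P\cap R$, and the augmented sampling tree $\bar{\mathcal{S}}$ that returns a uniform sample from $(P\cap R)\setminus P(u_j)$. All of these can be built in $O(n\log^d n)$ time and $O(n\log^d n)$ space, which matches the bounds in the theorem. The query procedure is the two-case algorithm already outlined: draw $\Theta(\log n)$ samples from $P\cap R$, probe each sampled color $u_i$ via $\mathcal{T}_i$ and $\mathcal{T}$ to detect whether some color satisfies $N_i/N>2/3$, and branch on the outcome.

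First I would handle the case where no dominant color is detected. Here I would invoke Lemma~\ref{lem:ProbEntr}'s contrapositive-style argument: with probability at least $1-1/(2n)$, if some color has mass $>2/3$ we would have sampled it, so conditioning on the failure to find one, we may assume the true distribution has every value of mass $\le 2/3$. A short calculation on the binary distribution $(2/3,1/3)$ shows that the Shannon entropy is then at least $0.9$. With this constant lower bound $H'$ in hand, I can simulate the dual oracle $(\textsf{SAMP}_D,\textsf{EVAL}_D)$ on the unknown distribution $\mathcal{D}_R$ using $\mathcal{S}$ together with $\mathcal{T}_i,\mathcal{T}$ (each oracle call costs $O(\log^d n)$) and feed it into Guha et al.'s $(1+\eps)$-multiplicative estimator, which needs $O(\log n/(\eps^2 H'))=O(\log n/\eps^2)$ oracle calls; total query time $O(\eps^{-2}\log^{d+1} n)$.

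Next I would handle the case where a color $u_i$ with $N_i/N>2/3$ is detected. Write $P'=P\cap R$ and use Equation~\ref{eq:entropyupdatedelete} to express
\[
H(P')=\tfrac{N-N_i}{N}H(P'\setminus P(u_i))+\tfrac{N_i}{N}\log\tfrac{N}{N_i}+\tfrac{N-N_i}{N}\log\tfrac{N}{N-N_i}.
\]
I would apply the additive-approximation subroutine from Subsection~\ref{subsec:ApproxAdd} but route its sampling oracle through $\bar{\mathcal{S}}$ so that samples are drawn uniformly from the complement $P'\setminus P(u_i)$; this yields $h'$ with $|h'-H(P'\setminus P(u_i))|\le \eps$ w.h.p., in $O(\eps^{-2}\log^{d+3} n)$ time. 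Plugging $h'$ back in produces $h$, and the additive error in $h$ is at most $\tfrac{N-N_i}{N}\cdot\eps\le \eps$ (and is in fact $\le \eps(N-N_i)/N_i$).

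The main obstacle, and the step that justifies why an \emph{additive} error on the complement gives a \emph{multiplicative} error on the whole entropy, is bounding $(N-N_i)/N_i$ by $H(P')$. For this I would prove Lemma~\ref{lem:ineqFinal}: for $2/3<\alpha<1$,
\[
\alpha\log\tfrac{1}{\alpha}+(1-\alpha)\log\tfrac{1}{1-\alpha}\ \ge\ \tfrac{1-\alpha}{\alpha},
\]
with $\alpha=N_i/N$. This is a single-variable calculus fact which I would establish by differentiating the difference, checking that its unique critical point in $(2/3,1)$ is a maximum, and verifying the boundary values at $\alpha=1/2$ (equal to $0$) and $\alpha\to 1$ (limit $0$). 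Since $H(P')$ is at least the two-term binary entropy appearing on the left-hand side, this yields $(N-N_i)/N_i\le H(P')$ and hence $|h-H(P')|\le \eps\cdot H(P')$, giving the desired $(1+\eps)$-multiplicative guarantee. Taking a union bound over the $O(1)$ failure events finishes the high-probability claim, and the space, preprocessing, and query-time bounds match the statement.
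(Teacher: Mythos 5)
Your proposal matches the paper's proof essentially step for step: the same four range trees (including the color-excluding sampler $\bar{\mathcal{S}}$), the same $\Theta(\log n)$-sample test for a color of mass $>2/3$, the same two-case split (Guha et al.\ with the constant lower bound $H'\ge 0.9$ in one branch; additive estimation of $H(P'\setminus P(u_i))$ combined via Equation~\ref{eq:entropyupdatedelete} in the other), and the same key calculus inequality $\frac{N-N_i}{N_i}\le H(P\cap R)$ of Lemma~\ref{lem:ineqFinal} to convert the additive error into a multiplicative one. The argument is correct and is the paper's argument.
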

This structure can be made dynamic under arbitrary insertions and deletions of points using well known techniques~\cite{ bentley1980decomposable, erickson2011static, overmars1983design, overmars1981worst}. The update time is $O(\log^d n)$.

\subsection{Efficient additive and multiplicative approximation}
\label{subsec:ApproxAddMult1}
Next, for $d=1$, we propose a deterministic, faster approximate data structure with query time $O(\polylog n)$ that returns an additive and multiplicative approximation of the entropy $H(P\cap R)$, given a query rectangle $R$.

Instead of using the machinery for entropy estimation on unknown distributions, we get the intuition from data structures that count the number of colors in a query region $R$.
In~\cite{gupta1995further}, the authors presented a data structure to count/report colors in a query interval for $d=1$.
In particular, they map the range color counting/reporting problem for $d=1$ to the standard range counting/reporting problem in $\Re^2$. Let $P$ be the set of $n$ colored points in $\Re^1$. Let $\bar{P}=\emptyset$ be the corresponding points in $\Re^2$ they construct. For every color $u_i\in U$, without loss of generality, let $P(u_i)=\{p_1, p_2,\ldots, p_k\}$ such that if $j<\ell$ then the $x$-coordinate of point $p_j$ is smaller than the $x$-coordinate of point $p_\ell$. For each point $p_j\in P(u_i)$, they construct the $2$-d point $\bar{p}_j=(p_j, p_{j-1})$ and they add it in $\bar{P}$. If $p_j=p_1$, then $\bar{p}_1=(p_1, -\infty)$. Given a query interval $R=[l,r]$ in $1$-d, they map it to the query rectangle $\bar{R}=[l,r]\times (-\infty, l)$. It is straightforward to see that a point of color $u_i$ exists in $R$ if and only if $\bar{R}$ contains exactly one transformed point of color $u_i$. Hence, using a range tree $\bar{\mathcal{T}}$ on $\bar{P}$ they can count (or report) the number of colors in $P\cap R$ efficiently.
While this is more than enough to count or report the colors in $P\cap R$, for the entropy we also need to know (in fact precompute) the number of points of each color $u_i$ in $P'$, along with the actual entropy in each canonical subset. Notice that a canonical subset/node in $\bar{\mathcal{T}}$ might belong to many different query rectangles $\bar{R}$ that correspond to different query intervals $R$. Even though a point of color $u_i$ appears only once in $\bar{R}\cap \bar{P}$, there can be multiple points with color $u_i$ in $R\cap P$. Hence, there is no way to know in the preprocessing phase the exact number of points of each color presented in a canonical node of $\bar{\mathcal{T}}$.
We overcome this technical difficulty by pre-computing for each canonical node $v$ in $\bar{\mathcal{T}}$, monotone pairs with approximate values of (interval, number of points), and (interval, entropy) over a sufficiently large number of intervals.
Another issue is that entropy is not monotone, so we split it into two monotone functions and we handle each of them separately until we merge them in the end to get the final estimation.

Before we start describing the data structure we prove some useful properties that we need later.
\begin{lem}
\label{lem:minEntropy}
Assume that we have a set $P'\subseteq P$ with $N=|P'|$ and $|u(P')|>2$ colors. Then the minimum entropy is encountered when we have $|u(P')|-1$ colors having exactly one point, and one color having $|P'|-|u(P')|+1$ points.
\end{lem}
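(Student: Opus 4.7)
\medskip

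\textbf{Proof plan.} I will recast the claim as a discrete optimization problem and settle it by a standard smoothing/majorization argument.

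\medskip

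First, I would simplify the objective. Let $m = |u(P')|$ and, for any feasible color distribution, write the counts as positive integers $n_1, \ldots, n_m$ with $\sum_{i=1}^m n_i = N$ and each $n_i \ge 1$ (each color in $u(P')$ occurs at least once). Using~\eqref{eq:def}, one rewrites
\[
H(P') \;=\; \log N \;-\; \frac{1}{N}\sum_{i=1}^m n_i \log n_i.
\]
Therefore minimizing $H(P')$ over all feasible count vectors is equivalent to maximizing
\[
F(n_1,\ldots,n_m) \;:=\; \sum_{i=1}^m h(n_i), \qquad h(x) := x \log x,
\]
subject to $\sum_i n_i = N$ and $n_i \in \mathbb{Z}_{\ge 1}$.

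\medskip

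Next, I would run the smoothing step. The function $h(x) = x \log x$ satisfies $h''(x) = 1/(x \ln 2) > 0$, so $h$ is strictly convex on $(0,\infty)$; consequently the discrete forward difference $\Delta h(x) := h(x+1) - h(x)$ is strictly increasing in $x$. Assume, sorting if necessary, that $n_1 \ge n_2 \ge \cdots \ge n_m$, and suppose toward contradiction that some $n_j \ge 2$ with $j \ge 2$. Define the perturbed vector $n_1' = n_1 + 1$, $n_j' = n_j - 1$, and $n_i' = n_i$ otherwise; this remains feasible. Then
\[
F(n') - F(n) \;=\; \bigl[h(n_1+1) - h(n_1)\bigr] - \bigl[h(n_j) - h(n_j-1)\bigr] \;=\; \Delta h(n_1) - \Delta h(n_j - 1).
\]
Since $n_1 \ge n_j > n_j - 1$, strict monotonicity of $\Delta h$ gives $F(n') > F(n)$, contradicting optimality. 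Hence at any minimizer of $H(P')$ we must have $n_j = 1$ for all $j \ge 2$, forcing $n_1 = N - (m-1)$, which is exactly the configuration claimed.

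\medskip

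I do not expect a serious obstacle here; the only mild subtlety is ensuring \emph{strict} inequality in the smoothing step, which is where the strict convexity of $h$ (as opposed to mere convexity) is used, and handling the corner case $m = 2$ if relevant (though the statement restricts to $m > 2$, the same argument goes through). The convention that $\frac{0}{N}\log \frac{N}{0}=0$ from~\eqref{eq:def} is not needed since every $n_i \ge 1$ throughout the argument.
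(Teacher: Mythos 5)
Your proof is correct and takes essentially the same route as the paper: both reduce to the identity $H(P')=\log N-\frac{1}{N}\sum_i n_i\log n_i$ and then apply an exchange argument that moves one point from a color with at least two points to the most populous color, with the key inequality following from convexity of $x\log x$. The only cosmetic difference is that you phrase the exchange as a contradiction at an optimizer (justified since the feasible set is finite), whereas the paper iterates the exchange from an arbitrary configuration.
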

\begin{proof}
Consider any other arbitrary instance. Let $u_i$ be the color with the maximum number of points in $P'$. We consider any other color $u_j\neq u_i$ having at least 2 points, so $|P'(u_i)|\geq |P'(u_j)|\geq 2$. We assume that we move one point from color $u_j$ to color $u_i$ and we argue that the new instance has lower entropy. If this is true, we can iteratively apply it, and whatever the initial instance is, we can create an instance as described in the lemma with lower entropy. Hence, the minimum entropy is encountered when we have $|u(P')|-1$ colors having exactly one point, and one color having all the rest $|P'|-u(P')+1$ points.

Initially, we have $$H(P')=\sum_{\ell\in u(P')}\frac{N_\ell}{N}\log\frac{N}{N_\ell}=\sum_{\ell\in u(P')}\frac{N_\ell}{N}(\log N - \log N_\ell)=\log N -\frac{1}{N}\sum_{\ell\in u(P')}N_\ell\log N_\ell.$$
The new instance has entropy $$H'=H(P')-\frac{1}{N}\left(-N_i\log N_i - N_j\log N_j + (N_i+1)\log (N_i+1) + (N_j-1)\log (N_j-1)\right).$$
Next, we show that $$H'\leq H(P')\Leftrightarrow -N_i\log N_i - N_j\log N_j + (N_i+1)\log (N_i+1) + (N_j-1)\log (N_j-1)\geq 0.$$
We define the function $$f(x)=(x+1)\log (x+1) - x\log x +(N_j-1)\log (N_j-1) - N_j\log N_j,$$
for $x\geq N_j\geq 2$.
We have $f'(x)=\log(x+1)-\log(x)\geq 0$ for $x>0$, so function $f$ is monotonically increasing for $x\geq 2$.
Since $x\geq N_j$, we have $f(x)\geq f(N_j)\geq 0$.
Hence, we proved that the new instance has lower entropy. In particular, if $N_i=N_j$ then the new instance has no higher entropy, and if $N_i>N_j$ then the new instance has strictly lower entropy.
\end{proof}

For a set of colored points $P'\subseteq P$, with $N=|P'|$, let $F(P')=N\cdot H(P')=\sum_{u_i\in u(P')}N_i\cdot \log\frac{N}{N_i}$, where $N_i$ is the number of points in $P'$ with color $u_i$.

\begin{lem}
\label{lem:monotone}
The function $F(\cdot)$ is monotonically increasing. Furthermore, $F(P')=O(n\log n)$, and the smallest non-zero value that $F(\cdot)$ can take is at least $2$.
\end{lem}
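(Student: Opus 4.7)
The plan is to reduce each of the three claims to an elementary observation.

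For monotonicity, the single-point deletion inequality $F(P' \setminus \{p\}) \leq F(P')$ is essentially what the preliminaries already recall: the cited bound $H(P_1) \geq \frac{|P_1|-1}{|P_1|} H(P_1 \setminus \{p\})$, multiplied by $|P_1|$, reads exactly $F(P_1) \geq F(P_1 \setminus \{p\})$. Iterating the one-point step over the finite set $P_2 \setminus P_1$ then extends the inequality to arbitrary $P_1 \subseteq P_2$. A self-contained alternative would be to expand $F(P') = N \log N - \sum_j N_j \log N_j$ and observe that deleting a point of color $u_i$ changes $F$ by $g(N) - g(N_i)$, where $g(x) = x \log x - (x-1)\log(x-1)$; a one-line derivative check shows $g$ is increasing, and $N \geq N_i$ closes the case.

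The upper bound follows immediately from the classical inequality that the Shannon entropy of a distribution supported on at most $k$ outcomes is bounded by $\log k$. Applying this with $k = |u(P')| \leq |P'|$ gives $F(P') = |P'|\, H(P') \leq |P'| \log |P'| \leq n \log n$.

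For the smallest non-zero value, the plan is to bootstrap from monotonicity rather than to minimise $F$ directly. If $F(P') > 0$ then $H(P') > 0$, which forces $P'$ to contain two points $p, q$ of distinct colors. The two-point subset $P'' = \{p,q\} \subseteq P'$ carries a uniform binary distribution, so $H(P'') = 1$ and $F(P'') = 2$; monotonicity then yields $F(P') \geq F(P'') = 2$.

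I do not anticipate a real obstacle: the one-point deletion inequality is the only step with any content, and it can be taken directly from the preliminaries. The small conceptual point is recognising that, once monotonicity is in hand, the lower bound on non-zero values is witnessed by a two-point subset rather than requiring a global minimisation of $F$.
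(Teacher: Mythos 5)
Your proof is correct, and for the monotonicity step it is a genuinely cleaner decomposition than the paper's. The paper proves the one-point step by \emph{insertion}: it compares $F(P'\cup\{p\})$ and $F(P')$ term by term over colors, which requires checking separately that each unchanged color's contribution grows (because $N$ grows) and that the contribution of the affected color grows, the latter via showing $x\mapsto x\log\frac{c+x}{x}$ is increasing. Your telescoping identity $F(P')-F(P'\setminus\{p\})=g(N)-g(N_i)$ with $g(x)=x\log x-(x-1)\log(x-1)$ collapses all of this into a single derivative check ($g'(x)=\log\frac{x}{x-1}>0$) plus $N\geq N_i$; your first route, via the inequality $H(P_1)\geq\frac{|P_1|-1}{|P_1|}H(P_1\setminus\{p\})$ already recalled in the preliminaries, is also legitimate but ultimately outsources the work to the cited reference. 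For the remaining two claims the paper only says they ``follow straightforwardly from monotonicity and Lemma~\ref{lem:minEntropy}''; you supply the missing content, and your choices are arguably the right ones: the $O(n\log n)$ bound needs nothing but $H\leq\log|u(P')|$ (no monotonicity or minimization required), and the lower bound of $2$ falls out of monotonicity applied to a two-point, two-color witness $\{p,q\}$ with $F(\{p,q\})=2$, rather than from the global minimization in Lemma~\ref{lem:minEntropy}. One minor point worth stating explicitly in the telescoping argument is the boundary case $N_i=1$, where the deleted point's color disappears; with the paper's convention $0\log 0=0$ the identity still reads $F(P')-F(P'\setminus\{p\})=g(N)-g(1)=g(N)\geq 0$, so nothing breaks.
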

\begin{proof}
Let $p\in P$ be a point such that $p\notin P'$. We show that $F(P'\cup\{p\})\geq F(P')$. If $u(p)\notin u(P')$ it is clear that $F(P'\cup\{p\})\geq F(P')$ because all nominators in the log factors are increasing and a new positive term is added to the sum. Next, we focus on the more interesting case where $u(p)\in u(P')$. Without loss of generality assume that $u(P')=\{u_1,\ldots, u_k\}$ and $u(p)=u_k$. We have $F(P'\cup\{p\})=\sum_{i=1}^{k-1}N_i\log\frac{N+1}{N_i} + (N_{k}+1)\log\frac{N+1}{N_k+1}$. For $i<k$, each term $N_i\log\frac{N+1}{N_i}$ in $F(P'\cup\{p\})$ is larger than the corresponding term $N_i\log\frac{N}{N_i}$ in $F(P')$ (1). Let $g(x)=x\log\frac{c+x}{x}$, for any real number $c>2$. We have $g'(x)=\frac{(c+x)\ln\frac{c+x}{x} - c}{(c+x)\ln(2)}$. Using the well known inequality $\ln a\geq 1-\frac{1}{a}$, we note that $(c+x)\ln(1+\frac{c}{x})\geq (c+x)\frac{cx}{x(c+x)}=c$ so $g'(x)\geq 0$ and $g(x)$ is monotonically increasing. Hence we have $(N_k+1)\log\frac{N+1}{N_k+1}\geq N_k\log\frac{N}{N_k}$ (2). From (1), (2), we conclude that $F(P'\cup\{p\})\geq F(P')$.

The inequalities in the end follow straightforwardly from the monotonicity of $F$ and Lemma~\ref{lem:minEntropy} (we actually show a more general result in Lemma~\ref{lem:minEntropy}).
\end{proof}

\paragraph{Data structure.}
We apply the same mapping from $P$ to  $\bar{P}$ as described above~\cite{gupta1995further} and construct a range tree $\bar{\mathcal{T}}$ on $\bar{P}$. Then we visit each canonical node $v$ of $\bar{\mathcal{T}}$. If node $v$ contains two points with the same color then we can skip it because this node will not be returned as a canonical node for any query $\bar{R}$. Let $v$ be a node such that $\bar{P}_v$ does not contain two points with the same color. Let also $x_v$ be the smallest $x$-coordinate of a point in $\bar{P}_v$. Finally, let $U_v=u(\bar{P}_v)$, and $P(U_v)=\{p\in P\mid u(p)\in U_v\}$. Notice that $P(U_v)$ is a subset of $P$ and not of $\bar{P}$.
We initialize an empty array $S_v$ of size $O(\frac{\log n}{\eps})$. Each element $S_v[i]$ stores the maximum $x$ coordinate such that $(1+\eps)^i\geq |P(U_v)\cap [x_v,x]|$.
Furthermore, we initialize an empty array $H_v$ of size $O(\frac{\log n}{\eps})$. Each element $H_v[i]$ stores the maximum $x$ coordinate such that $(1+\eps)^i\geq F(P(U_v)\cap [x_v,x])$. We notice that both functions $F(\cdot)$, and cardinality of points are monotonically increasing.
For every node of $\bar{\mathcal{T}}$ we use $O(\frac{\log n}{\eps})$ space (from Lemma~\ref{lem:monotone} there are $O(\frac{\log n}{\eps})$ possible exponents $i$ in the discrete values $(1+\eps)^i$), so in total, the space of our data structure is $O(\frac{n}{\eps}\log^2 n)$. Next, we show that the data structure can be constructed in $O(\frac{n}{\eps}\log^5 n)$ time.
\begin{lem}
\label{lem:preproc}
The data structure $\bar{\mathcal{T}}$ can be constructed in $O\left(\frac{n}{\eps}\log^5 n\right)$ time.
\end{lem}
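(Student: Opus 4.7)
The plan is to evaluate the arrays $S_v$ and $H_v$ at each node $v$ of $\bar{\mathcal{T}}$ by binary searching over the $x$-coordinates, using color-restricted counting and $F$-evaluation as the inner primitive. As preprocessing, for each color $c\in U$ store a sorted list $L_c$ of the $x$-coordinates of $P(c)$, built in $O(n\log n)$ total time, and construct the $2$-dimensional range tree $\bar{\mathcal{T}}$ on $\bar{P}$ in $O(n\log n)$ time. A standard accounting argument for a $2$-dimensional range tree yields $\sum_v |\bar{P}_v| = O(n\log^2 n)$, where the sum runs over all nodes $v$ of $\bar{\mathcal{T}}$ (primary plus secondary trees).

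At each surviving node $v$, traverse $\bar{P}_v$ in $O(|\bar{P}_v|)$ time to read off $U_v$ and $x_v$; since $\bar{P}_v$ has no repeated colors we have $|U_v|=|\bar{P}_v|$. By Lemma~\ref{lem:monotone}, $F(P(U_v)\cap [x_v,x])$ is monotonically nondecreasing in $x$ and bounded by $O(n\log n)$, and the count $|P(U_v)\cap [x_v,x]|$ is trivially monotone and bounded by $n$, so each of $S_v$ and $H_v$ has $O(\log n/\eps)$ meaningful entries. For each entry $i$, binary search over the $O(n)$ distinct $x$-coordinates of $P$ for the largest $x$ such that the corresponding quantity is at most $(1+\eps)^i$. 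A single probe at $x$ is computed from
\[
    |P(U_v) \cap [x_v, x]| = \sum_{c \in U_v} |L_c \cap [x_v, x]|,
\]
and analogously the $F$-value, by doing one $O(\log n)$-time binary search into each $L_c$, for $O(|U_v|\log n)$ per probe. With $O(\log n)$ probes per binary search and $O(\log n/\eps)$ values of $i$, filling both arrays at $v$ costs $O(|U_v|\log^3 n/\eps)$.

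Summing over all nodes $v$ and using $\sum_v |U_v| \le \sum_v |\bar{P}_v| = O(n\log^2 n)$ yields total construction time $O(n\log^5 n/\eps)$, as claimed. The main obstacle is the bookkeeping around the two-layer structure of the $2$D range tree (to get the size bound $\sum_v|\bar{P}_v|=O(n\log^2 n)$) and the verification that the binary search for $H_v$ is well defined; both are resolved by the standard range-tree accounting together with the monotonicity of $F$ provided by Lemma~\ref{lem:monotone}.
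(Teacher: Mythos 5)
Your proposal is correct and follows essentially the same route as the paper: binary search over the $x$-coordinates for each of the $O(\log n/\eps)$ thresholds at each color-distinct node, with each probe evaluated by per-color binary searches (your sorted lists $L_c$ play the role of the paper's trees $T_{\mathbf{u}}$), and the total charged via the standard range-tree bound $\sum_v|\bar{P}_v|=O(n\log^2 n)$, which is exactly the paper's level-by-level accounting in aggregated form. The only cosmetic difference is that the paper computes each successive entry incrementally by searching only to the right of the previous threshold, whereas you search from scratch each time; both give $O(n\log^5 n/\eps)$.
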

\begin{proof}
The structure of $\bar{\mathcal{T}}$ can be constructed in $O(n\log^2 n)$ time. For each color $\mathbf{u}\in u(P)$, we construct a $1$d binary search tree $T_\mathbf{u}$. In total, it takes $O(n\log n)$ time. These auxiliary trees are useful for the construction of our main data structure.
A $2$d range tree consists of one search binary tree with respect to $x$-coordinate and for each node in this tree there is a pointer to another tree based on the $y$ coordinates. Hence, it is a $2$-level structure. Recall that we need to compute the values in tables $S_v$, $H_v$ for each node $v$ in the $2$-level trees. For each tree in the second level we do the following. We visit the nodes level by level. Assume that we have already computed $S_v[i]$ and $H_v[i]$. In order to compute the next value in $H_v$ (or $S_v$), we run a binary search on the $x$-coordinates of $P$ that are larger than $H_v[i]$ (or $S_v[i]$). Let $x'$ be the $x$-coordinate value we check. We visit all colors $u$ stored in the leaf nodes of the subtree with root $v$ and we run another binary search on $T_u$ to get the total number of points of color $u$ in the range $[x_u,x']$. In that way we check whether the interval $[x_u,x']$ satisfies the definition of $H_v[i+1]$ (or $S_v[i+1]$). Based on this decision we continue the binary search on the $x$-coordinates of $P$.
Using the data structures $T_\mathbf{u}$ to run counting queries when needed, in each level we spend time $O(\frac{\log n}{\eps}(\sum_{z\in \mathcal{L}} \log n_z)\log n)=O(\frac{n\log^3 n}{\eps})$, where $\mathcal{L}$ is the set of leaf nodes of the current $2$-level tree and $n_z$ is the number of points with color equal to the color of point stored in $z$.
Notice that we run this algorithm only for the nodes of the tree that do not contain points with the same colors. The tree has $O(\log n)$ levels so for each $2$-level tree we spend $O(\frac{n\log^4 n}{\eps})$ time.
We finally notice that the $1$-level tree in $\bar{\mathcal{T}}$ has $O(\log n)$ levels and two nodes of the same level do not ``contain'' any point in common. Hence, the overall running time to compute all values $S_v[i], H_v[i]$ is $O(\frac{n\log^5 n}{\eps})$.  
\end{proof}

\paragraph{Query procedure.}
Given a query interval $R=[a,b]$, we run a query in $\bar{\mathcal{T}}$ using the query range $\bar{R}$.
Let $V=\{v_1, \ldots, v_k\}$ be the set of $k=O(\log^2 n)$ returned canonical nodes.
For each node $v\in V$ we run a binary search in array $S_v$ and a binary search in $H_v$ with key $b$.
Let $\ell_{v}^S$ be the minimum index such that $b\leq S_v[\ell_v^S]$ and $\ell_{v}^H$ be the minimum index such that $b\leq H_v[\ell_v^H]$.
From their definitions, it holds that $|P(U_v)\cap R|\leq (1+\eps)^{\ell_v^S}\leq (1+\eps)|P(U_v)\cap R|$, and 
$F(P(U_v)\cap R)\leq (1+\eps)^{\ell_v^H}\leq (1+\eps)F(P(U_v)\cap R)$.
Hence, we can approximate the entropy of $P(U_v)\cap R$, defining $\mathcal{H}_v=\frac{(1+\eps)^{\ell_v^H}}{(1+\eps)^{\ell_v^S-1}}$.
We find the overall entropy by merging together pairs of canonical nodes. Notice that we can do it easily using Equation~\ref{eq:entropyupdate} because all colors are different between any pair of nodes in $V$.
For example, we apply Equation~\ref{eq:entropyupdate} for two nodes $v, w\in V$
as follows:
\newcommand*{\Scale}[2][4]{\scalebox{#1}{$#2$}}%
\[\Scale[1.2]{ \frac{(1+\eps)^{\ell_v^S}\mathcal{H}_v+(1+\eps)^{\ell_w^S}\mathcal{H}_w+(1+\eps)^{\ell_v^S}\log\left(\frac{(1+\eps)^{\ell_v^S}+(1+\eps)^{\ell_w^S}}{(1+\eps)^{\ell_v^S-1}}\right) + (1+\eps)^{\ell_w^S}\log\left(\frac{(1+\eps)^{\ell_v^S}+(1+\eps)^{\ell_w^S}}{(1+\eps)^{\ell_w^S-1}}\right)}{(1+\eps)^{\ell_v^S-1}+(1+\eps)^{\ell_w^S-1}}.}\]
In the end we compute the overall entropy $\mathcal{H}$.

\paragraph{Correctness and analysis.}
The next Lemma shows that $\mathcal{H}_v$ is a good approximation of $H(P(U_v)\cap R)$.
\begin{lem}
\label{lem:helper}
It holds that $H(P(U_v)\cap R)\leq \mathcal{H}_v\leq (1+\eps)^2H(P(U_v)\cap R)$.
\end{lem}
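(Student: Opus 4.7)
The plan is to write $\mathcal{H}_v$ as a ratio of two discretized quantities and sandwich it between $H(P(U_v)\cap R)$ and $(1+\eps)^2 H(P(U_v)\cap R)$ by invoking the two-sided approximation guarantees supplied by the tables $H_v$ and $S_v$. The key observation is that by construction $F(P')=|P'|\cdot H(P')$, so the true entropy on the set $P(U_v)\cap R$ is
\[
H(P(U_v)\cap R)\;=\;\frac{F(P(U_v)\cap R)}{|P(U_v)\cap R|},
\]
and $\mathcal{H}_v$ is an approximation of exactly the same ratio, where the numerator is replaced by $(1+\eps)^{\ell_v^H}$ and the denominator is replaced by $(1+\eps)^{\ell_v^S-1}$.

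First I would unpack the definitions of $\ell_v^S$ and $\ell_v^H$ into the bounds
\[
F(P(U_v)\cap R)\;\leq\;(1+\eps)^{\ell_v^H}\;\leq\;(1+\eps)\,F(P(U_v)\cap R),
\]
\[
\tfrac{1}{1+\eps}\,|P(U_v)\cap R|\;\leq\;(1+\eps)^{\ell_v^S-1}\;\leq\;|P(U_v)\cap R|,
\]
where the second chain follows by dividing the given inequality $|P(U_v)\cap R|\leq(1+\eps)^{\ell_v^S}\leq(1+\eps)|P(U_v)\cap R|$ by $(1+\eps)$.

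For the lower bound on $\mathcal{H}_v$, I would combine the lower bound on the numerator with the upper bound on the denominator to get
\[
\mathcal{H}_v\;\geq\;\frac{F(P(U_v)\cap R)}{|P(U_v)\cap R|}\;=\;H(P(U_v)\cap R).
\]
For the upper bound I would combine the upper bound on the numerator with the lower bound on the denominator to get
\[
\mathcal{H}_v\;\leq\;\frac{(1+\eps)\,F(P(U_v)\cap R)}{|P(U_v)\cap R|/(1+\eps)}\;=\;(1+\eps)^2\,H(P(U_v)\cap R),
\]
which concludes the proof.

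There is essentially no obstacle here; the lemma is a routine consequence of the discretization. The only subtlety to flag is that the ``shift by one'' in the denominator $(1+\eps)^{\ell_v^S-1}$ (as opposed to $(1+\eps)^{\ell_v^S}$) is precisely what guarantees a lower bound on $(1+\eps)^{\ell_v^S-1}$ in terms of $|P(U_v)\cap R|/(1+\eps)$, and hence is what makes the multiplicative error factor come out to $(1+\eps)^2$ rather than blowing up.
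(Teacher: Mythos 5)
Your proof is correct and follows exactly the same route as the paper's: unpack the definitions of $\ell_v^S$ and $\ell_v^H$, divide the $(1+\eps)^{\ell_v^S}$ bounds by $(1+\eps)$ to bound the denominator $(1+\eps)^{\ell_v^S-1}$, and combine numerator and denominator bounds using $F(P')=|P'|\cdot H(P')$. Nothing is missing.
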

\begin{proof}
We have $\mathcal{H}_v=\frac{(1+\eps)^{\ell_v^H}}{(1+\eps)^{\ell_v^S-1}}$. From their definitions, we have that $|P(U_v)\cap R|\leq (1+\eps)^{\ell_v^S}\leq (1+\eps)|P(U_v)\cap R|$, and 
$F(P(U_v)\cap R)\leq (1+\eps)^{\ell_v^H}\leq (1+\eps)F(P(U_v)\cap R)$. It also holds that $(1+\eps)^{\ell_v^S-1}\leq |P(U_v)\cap R|$ and $(1+\eps)^{\ell_v^S-1}\geq \frac{|P(U_v)\cap R|}{(1+\eps)}$.
Hence $\mathcal{H}_v\leq \frac{(1+\eps)F(P(U_v)\cap R)}{|P(U_v)\cap R|/(1+\eps)}\leq (1+\eps)^2H(P(U_v)\cap R)$. Furthermore, $\mathcal{H}_v\geq \frac{F(P(U_v)\cap R)}{|P(U_v)\cap R|}=H(P(U_v)\cap R)$.
\end{proof}

The next Lemma shows the correctness of our procedure.
\begin{lem}
\label{lem:error1d}
If we set $\eps\leftarrow \frac{\eps}{4\cdot c\cdot\log\log n}$, it holds that $H(P\cap R)\leq \mathcal{H}\leq (1+\eps)H(P\cap R)+\eps$, for a constant $c>0$.
\end{lem}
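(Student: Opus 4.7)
I would bound how the approximation errors at individual canonical nodes propagate through the $O(\log^2 n)$ pairwise merges that produce $\mathcal{H}$. Lemma~\ref{lem:helper} already furnishes the per-node guarantee $H(P(U_v)\cap R)\leq \mathcal{H}_v\leq(1+\eps)^2 H(P(U_v)\cap R)$, and the definition of $\ell_v^S$ gives $n_v\leq (1+\eps)^{\ell_v^S}\leq (1+\eps)n_v$ for $n_v=|P(U_v)\cap R|$. The useful identity to record at the start is
\[
(1+\eps)^{\ell_v^S}\cdot\mathcal{H}_v \;=\; (1+\eps)\cdot(1+\eps)^{\ell_v^H}\;\in\;[(1+\eps)F_v,\,(1+\eps)^2 F_v],
\]
where $F_v=n_v H(P(U_v)\cap R)$. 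The $(1+\eps)$-overshoot on the left is what will allow the lower bound $H(P\cap R)\leq\mathcal{H}$ to survive the denominator blow-up inside each merge.

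The main work is a single-merge estimate. Suppose two disjoint-color groups carry approximations $(N_j,\mathcal{H}_j)$ with $n_j\leq N_j\leq(1+\gamma)n_j$, $N_j\mathcal{H}_j\geq (1+\eps)F_j$, and $N_j\mathcal{H}_j\leq C\cdot F_j+\delta(n_1+n_2)$. Plugging into Equation~\eqref{eq:entropyupdate}, I would split the numerator into the weighted-entropy part $N_1\mathcal{H}_1+N_2\mathcal{H}_2$ and the binary-entropy part $N_1\log\frac{N_1+N_2}{N_1}+N_2\log\frac{N_1+N_2}{N_2}$. For the weighted part, dividing by $N_1+N_2\in[n_1+n_2,(1+\gamma)(n_1+n_2)]$ yields a lower bound of $(1+\eps)(F_1+F_2)/((1+\gamma)(n_1+n_2))$, and the $(1+\eps)$ numerator factor cancels the $(1+\gamma)$ denominator factor (when $\gamma\leq\eps$), recovering at least the true weighted part. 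For the binary part, the inequality $\log\frac{N_1+N_2}{N_j}\leq\log\frac{n_1+n_2}{n_j}+\log(1+\gamma)$, multiplied by $N_j\leq(1+\gamma)n_j$, gives the true binary part scaled by $(1+\gamma)$ plus an additive $(1+\gamma)\log(1+\gamma)=O(\gamma)$. Combining the two parts, $\mathcal{H}_{12}$ enjoys a lower bound $\geq H_{12}-O(\gamma)$ and an upper bound $\leq(1+O(\gamma))H_{12}+O(\delta+\gamma)$; the small additive underestimate is folded into the additive $\eps$ slack of the final statement.

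I would then iterate this single-merge estimate over a balanced binary merging tree on the $k=O(\log^2 n)$ canonical nodes of $\bar{\mathcal{T}}$, whose depth is $t=O(\log k)=O(\log\log n)$. A straightforward induction on the level shows that after $t$ merges the multiplicative distortion is $(1+O(\eps))^t$ and the additive distortion is $O(\eps\cdot t)$, both equal to $O(\eps\log\log n)$. Rescaling $\eps\leftarrow\eps/(4c\log\log n)$ for a suitably chosen absolute constant $c>0$ then converts $(1+O(\eps))^{O(\log\log n)}\leq 1+\eps$ and $O(\eps\log\log n)\leq\eps$, producing the claimed two-sided bound $H(P\cap R)\leq\mathcal{H}\leq(1+\eps)H(P\cap R)+\eps$.

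The hard part will be the single-merge estimate: one must track a multiplicative factor and an additive slack simultaneously and verify that the $(1+\eps)$ coupling between $N_v\mathcal{H}_v$ and $F_v$ really absorbs the $(1+\gamma)$ denominator blow-up, so that the lower bound does not leak a $(1+\eps)^{-t}$ factor over the depth of the tree. A secondary but essential point is the choice of merging order: the $k=O(\log^2 n)$ canonical nodes must be combined via a balanced binary tree of depth $O(\log\log n)$, not sequentially, since a sequential merge would compound errors through $O(\log^2 n)$ steps and force $\eps$ to be rescaled by $\log^2 n$ rather than $\log\log n$.
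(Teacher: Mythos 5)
Your proposal is correct and follows essentially the same route as the paper's proof: a per-node guarantee from Lemma~\ref{lem:helper}, a single-merge error estimate for Equation~\eqref{eq:entropyupdate} tracking a multiplicative factor and an additive slack, propagation through a balanced binary merge tree of depth $O(\log\log n)$ over the $O(\log^2 n)$ canonical nodes, and the final rescaling $\eps\leftarrow\eps/(4c\log\log n)$. Your observation that the one-exponent shift between $(1+\eps)^{\ell_v^S}$ in the numerator and $(1+\eps)^{\ell_v^S-1}$ in the denominator is what preserves the lower bound $\mathcal{H}\geq H(P\cap R)$ is exactly the mechanism the paper invokes (if more tersely).
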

\begin{proof}
We assume that we take the union of two nodes $v, w\in V$ using Equation~\ref{eq:entropyupdate}. We can use this equation because nodes $v, w$ do not contain points with similar colors. Let $H_1=H(P(U_v)\cap R)$, $H_2=H(P(U_w)\cap R)$, $N_1=|P(U_v)\cap R|$, and $N_2=|P(U_2)\cap R|$.
We have
\[\Scale[1.1]{ \mathcal{H}_{v,w}=\frac{(1+\eps)^{\ell_v^S}\mathcal{H}_v+(1+\eps)^{\ell_w^S}\mathcal{H}_w+(1+\eps)^{\ell_v^S}\log\left(\frac{(1+\eps)^{\ell_v^S}+(1+\eps)^{\ell_w^S}}{(1+\eps)^{\ell_v^S-1}}\right) + (1+\eps)^{\ell_w^S}\log\left(\frac{(1+\eps)^{\ell_v^S}+(1+\eps)^{\ell_w^S}}{(1+\eps)^{\ell_w^S-1}}\right)}{(1+\eps)^{\ell_v^S-1}+(1+\eps)^{\ell_w^S-1}}.}\]
Using Lemma~\ref{lem:helper}, we get
\[ \Scale[1.1]{\mathcal{H}_{v,w}\leq \frac{(1+\eps)^4N_1H_1+(1+\eps)^4N_2H_2+(1+\eps)^2N_1\log\left((1+\eps)^2\frac{N_1+N_2}{N_1}\right)+(1+\eps)^2N_2\log\left((1+\eps)^2\frac{N_1+N_2}{N_2}\right)}{N_1+N_2}}\]
and we conclude that
$$\mathcal{H}_{v,w}\leq (1+\eps)^4H((P(U_v)\cup P(U_w))\cap R)+(1+\eps)^2\log(1+\eps)^2.$$

Similarly if we have computed $\mathcal{H}_{x,y}$ for two other nodes $x,y\in V$, then
$$\mathcal{H}_{x,y}\leq (1+\eps)^4H((P(U_x)\cup P(U_y))\cap R)+(1+\eps)^2\log(1+\eps)^2.$$

If we compute their union, we get
$$\mathcal{H}_{v,w,x,y}\leq (1+\eps)^6H((P(U_v)\cup P(U_w)\cup P(U_x)\cup P(U_y))\cap R)+[(1+\eps)^4+(1+\eps)^2]\log(1+\eps)^2.$$

At the end of this process, we have $$\mathcal{H}\geq H(P\cap R)$$ because all intermediate estimations of entropy are larger than the actual entropy. For a constant $c$, it also holds that
$$\mathcal{H}\leq (1+\eps)^{c\log(\log n)}H(P\cap R)+\sum_{j=1}^{c\log(\log n)/2}(1+\eps)^{2j}\log(1+\eps)^2.$$
This quantity can be bounded by
$$\mathcal{H}\leq (1+\eps)^{c\log(\log n)}H(P\cap R)+c\log(\log n)(1+\eps)^{c\log(\log n)}\log(1+\eps).$$
We have the factor $\log(\log n)$ because $|V|=O(\log^2 n)$ so the number of levels of recurrence is $O(\log(\log n))$.

Next, we show that if we set $\eps\leftarrow\frac{\eps}{4\cdot c\log(\log n)}$, then $\mathcal{H}\leq (1+\eps)H(P\cap R)+\eps$.

We have $$\left(1+\frac{\eps/4}{c\log(\log n)}\right)^{c\log(\log n)}\leq e^{\eps/4}\leq 1+\eps.$$
The first inequality holds because of the well known inequality $(1+x/n)^n\leq e^x$. The second inequality is always true for $\eps\in (0,1)$.
Then we have
$$(1+\eps)c\log(\log n)\log\left(1+\frac{\eps}{4\cdot c\log(\log n)}\right)\leq 2c\log(\log n)\log\left(1+\frac{\eps}{4\cdot c\log(\log n)}\right).$$
Next, we show that this quantity is at most $\eps$.
Let $L=c\log( \log n)$ and
let $$f(x)=x-2L\log\left(1+\frac{x}{4L}\right)$$
be a real function for $x\in[0,1]$.
We have
$$f'(x)=1-\frac{2L}{L\ln(16)+x\ln(2)}.$$
We observe that $\ln(16)\approx 2.77$ and $x\ln(2)\geq 0$ so $f'(x)\geq 0$ and $f$ is monotonically increasing. So $f(x)\geq f(0)=0$. Hence, for any $\eps\in[0,1]$ we have
$$\eps-2L\log\left(1+\frac{\eps}{4L}\right)\geq 0.$$

We conclude with
\begin{align*} \mathcal{H}\leq (1+\eps)H(P\cap R)+\eps.\tag*{\qed}\end{align*}
\renewcommand{\qed}{}    
\end{proof}

We need $O(\log^2 n)$ time to get $V$ from $\bar{\mathcal{T}}$. Then, we run binary search for each node $v\in V$ so we spend $O(\log^2 n \log\frac{\log n\log \log n}{\eps})=O(\log^2 n \log\frac{\log n}{\eps})$ time. We merge and update the overall entropy in time $O(|V|)$, so in total the query time is $O(\log^2 n \log\frac{\log n}{\eps})$.

\begin{thm}
\label{thm:mult-add-approx}
Let $P$ be a set of $n$ points in $\Re^1$, where each point is associated with a color, and let $\eps\in(0,1)$ be a parameter. A data structure of $O(\frac{n}{\eps}\log^2 n)$ size can be constructed in $O(\frac{n}{\eps}\log^5 n)$ time, such that given a query \new{interval} $R$, a value $h$ can be computed in $O\left(\log^2 n \log\frac{\log n}{\eps}\right)$ time, such that $H(P\cap R)\leq h\leq (1+\eps)H(P\cap R)+\eps$.
\end{thm}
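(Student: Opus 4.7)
The plan is to assemble the theorem from the construction and three correctness ingredients already laid out: the $2$d mapping of~\cite{gupta1995further}, the monotonicity of $F$ and cardinality from Lemma~\ref{lem:monotone}, and the per-node approximation from Lemma~\ref{lem:helper}, together with the telescoped error analysis of Lemma~\ref{lem:error1d}. First I would rescale the input parameter $\eps \leftarrow \eps/(4c\log\log n)$ so that the $O(\log\log n)$ levels of merging only blow up the final error to the promised $(1+\eps)H(P\cap R)+\eps$; all stated bounds are expressed in the original $\eps$, and this rescaling changes asymptotic costs only by polylogarithmic factors that fit into the claimed complexities.

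Next I would argue the space and preprocessing bounds. The range tree $\bar{\mathcal T}$ on $\bar P$ has $O(n\log n)$ nodes, and at each node where the colors stored in the subtree are distinct (the only nodes that can be canonical for some query $\bar R$) we store the two arrays $S_v$ and $H_v$. By Lemma~\ref{lem:monotone}, both $|P(U_v)\cap[x_v,x]|$ and $F(P(U_v)\cap[x_v,x])$ are monotone in $x$ and bounded by $O(n\log n)$ with minimum nonzero value at least $2$, so each array has $O(\log n/\eps)$ entries corresponding to geometrically increasing thresholds. Summed over all nodes this yields the $O((n/\eps)\log^2 n)$ space bound. The $O((n/\eps)\log^5 n)$ construction time follows directly from Lemma~\ref{lem:preproc}.

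For the query, given $R=[a,b]$ I would perform a standard range-tree search in $\bar{\mathcal T}$ on the query $\bar R = [a,b]\times(-\infty,a)$ to obtain the set $V$ of $O(\log^2 n)$ canonical nodes; by the mapping of~\cite{gupta1995further}, the color classes of $\{P(U_v)\cap R\}_{v\in V}$ partition the colors appearing in $P\cap R$. For each $v\in V$ I would run two binary searches on $S_v$ and $H_v$ with key $b$ to locate indices $\ell_v^S$ and $\ell_v^H$, giving $(1+\eps)$-approximations of $|P(U_v)\cap R|$ and $F(P(U_v)\cap R)$, hence the per-node entropy estimate $\mathcal H_v$ whose correctness is Lemma~\ref{lem:helper}. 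Two binary searches per node cost $O(\log(\log n/\eps))$ each, so this stage takes $O(\log^2 n\,\log(\log n/\eps))$ time.

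Finally, because the color sets $U_v$ are pairwise disjoint across $v\in V$, I can merge the $\mathcal H_v$'s in any order using the exact union formula~\eqref{eq:entropyupdate} applied to the approximate cardinalities $(1+\eps)^{\ell_v^S-1}$ and approximate entropies $\mathcal H_v$. Lemma~\ref{lem:error1d} shows that after the $O(\log\log n)$ pairwise merges the cumulative error yields $H(P\cap R)\leq \mathcal H\leq (1+\eps)H(P\cap R)+\eps$ under the rescaling of the first step; merging takes $O(|V|)=O(\log^2 n)$ time, so the total query time is $O(\log^2 n\,\log(\log n/\eps))$, completing the theorem. The main obstacle, already handled by Lemmas~\ref{lem:minEntropy}--\ref{lem:error1d}, is that Shannon entropy itself is not monotone, which is why we must approximate the two monotone quantities $F$ and cardinality separately and then carefully bound the multiplicative-plus-additive error compounded over the merge tree.
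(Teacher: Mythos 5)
Your proposal is correct and follows essentially the same route as the paper: the same mapping to $\bar{P}$, the same arrays $S_v, H_v$ of geometric thresholds justified by Lemma~\ref{lem:monotone}, construction via Lemma~\ref{lem:preproc}, per-node estimates via Lemma~\ref{lem:helper}, and the merge with the rescaled $\eps\leftarrow\eps/(4c\log\log n)$ analyzed in Lemma~\ref{lem:error1d}. The only (shared) looseness is that the rescaling nominally introduces a $\log\log n$ factor into the array sizes, which the paper likewise absorbs into the stated bounds.
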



\section{Approximate Data Structures for R-Entropy Queries}
In this section we describe data structures that return the \renyi entropy in a query range, approximately. First, we present a (randomized) data structure that returns an additive approximation of the \renyi entropy. Then, for $d=1$, we design a deterministic and faster data structure that returns an additive approximation. Finally, we present a data structure that returns a multiplicative approximation of the \renyi entropy.

\subsection{Additive Approximation for R-Entropy Queries}
\label{subsec:RenApprox1}
In this Subsection, we construct a data structure on $P$ such that given a query rectangle $R$, a parameter $\alpha>1$, and a parameter $\Delta>0$, it returns a value $h$ such that $\ren_{\alpha}(P\cap R)-\Delta\leq h\leq \ren_{\alpha}(P\cap R)+\Delta$. We will use ideas from the area of finding an additive approximation of the \renyi entropy of an unknown distribution in the samples-only model (access only to random samples; only $\textsf{SAMP}_D$ oracles) or the dual access model (access to random samples, and probability mass of a value; access to both $\textsf{SAMP}_D, \textsf{EVAL}_D$ oracles).

We use the notation from the previous Subsections where $D$ is an unknown distribution over a set of values $\out_1,\ldots, \out_N$. It is known~\cite{acharya2016estimating, obremski2017renyi} that if we get $O(\frac{n^{1-1/\alpha}}{\Delta^2}\log N)$ samples from the unknown distribution $D$, then we can get a $\Delta$ additive approximation of the \renyi entropy of order $\alpha$ of $D$ with high probability in $O(\frac{N^{1-1/\alpha}}{\Delta^2}\log N)$ time, for integer values of $\alpha>1$.
Using, ideas from~\cite{alon1996space, thorup2004tabulation, estfrmomstr}, we can extend this result to any real value of $\alpha>1$ and $\Delta\in(0,1)$, getting $O\left(\max\left\{1, \frac{1}{(\alpha-1)^2}\right\}\cdot\frac{\alpha\cdot N^{1-1/\alpha}}{\Delta^2}\log N\right)$ samples. In particular, if $\alpha\in (1,2]$ then we get 
$O\left(\frac{1}{(\alpha-1)^2}\frac{\alpha\cdot N^{1-1/\alpha}}{\Delta^2}\log N\right)$ samples, while if $\alpha>2$, we get
$O\left(\frac{\alpha\cdot N^{1-1/\alpha}}{\Delta^2}\log N\right)$ samples.

Even though the number of samples is sublinear on $N$, it is not $O(\polylog(N))$. 
A natural question to ask is whether this complexity can be improved in the dual access model, i.e., whether less queries can be performed in the dual access model to get an additive approximation, as we had in the Shannon entropy. Interestingly, in~\cite{caferov2015optimal} the authors studied the additive approximation of the \renyi entropy of an unknown distribution $D$ in the dual access model. They first prove a lower bound, showing that $\Omega(\frac{N^{1-1/\alpha}}{2^\Delta})$ queries in the dual access model are necessary to get an additive approximation. Hence, unlike in the Shannon entropy, the dual access model does not help to perform $\polylog(N)$ queries. Furthermore, in~\cite{caferov2015optimal} they give an algorithm that returns a $\Delta$ additive approximation of the \renyi entropy of order $\alpha$ that performs $O(\frac{N^{1-1/\alpha}}{(1-2^{(1-\alpha)\Delta})^2}\log N)$ queries in the dual access model in $O(\frac{N^{1-1/\alpha}}{(1-2^{(1-\alpha)\Delta})^2}\log N)$ time, for any real value of $\alpha>1$.
We note that the number of queries in the dual access model does not dominate the number of samples in the samples-only model and vice versa. For example, if $\alpha=2$ and $\Delta=0.01$ then $\frac{N^{1-1/\alpha}}{(1-2^{(1-\alpha)\Delta})^2}\log N > \frac{\alpha\cdot N^{1-1/\alpha}}{\Delta^2}\log N$, while if $\alpha=3$ and $\Delta=0.01$, then $\frac{N^{1-1/\alpha}}{(1-2^{(1-\alpha)\Delta})^2}\log N < \frac{\alpha\cdot N^{1-1/\alpha}}{\Delta^2}\log N$. Hence, we will design a data structure that gets the best of the two.

The idea in all the estimation algorithms above is the same: They first use ideas from the AMS sketch~\cite{alon1996space, thorup2004tabulation} to get a multiplicative approximation of the $\alpha$-th frequency moment $\sum_{i=1}^N (D(\out_i))^\alpha$, which leads to an additive approximation for the \renyi entropy.

Next, we show the data structure we use to get an additive approximation for R-entropy queries in our setting. As we had in the Shannon entropy, in our setting we do not know the number of colors in $P'=P\cap R$, which is equivalent to the number of values $N$ in distribution $D$. However, it is sufficient to use the upper bound $|u(P')|\leq n$.

\paragraph{Data structure.} 
For each color $u_i\in U$ we construct a range tree $\mathcal{T}_i$ over $P(u_i)$ for range counting queries. Similarly, we construct a range tree $\mathcal{T}$ over $P$ for counting queries. These range trees will be used for the $\textsf{EVAL}_D$ oracle in the dual access model.
We also construct the range tree $\mathcal{S}$ for returning uniform samples in a query rectangle. This tree will be used for the $\textsf{SAMP}_D$ oracles in the dual access model or the samples-only model.
Overall, the data structure has $O(n\log^{d-1} n)$ size and can be constructed in $O(n\log^d n)$ time.

\paragraph{Query procedure.}
We are given a hyper-rectangle $R$ and a parameter $\alpha$.
If $\frac{1}{(1-2^{(1-\alpha)\Delta})^2} \geq  \max\{1,\frac{1}{(\alpha-1)^2}\}\cdot\frac{\alpha}{\Delta^2}$, then we use $\mathcal{S}$ to get $O(\max\{1,\frac{1}{(\alpha-1)^2}\}\cdot\frac{\alpha\cdot N^{1-1/\alpha}}{\Delta^2}\log n)$ random samples from $P\cap R$. Then the algorithm from~\cite{acharya2016estimating} is executed.
If $\frac{1}{(1-2^{(1-\alpha)\Delta})^2} < \max\{1,\frac{1}{(\alpha-1)^2}\}\cdot\frac{\alpha}{\Delta^2}$, then we mimic the algorithm from~\cite{caferov2015optimal} on $P\cap R$ in the dual access model. When a random sample is required (oracle $\textsf{SAMP}_D$) we use $\mathcal{S}$. When the probability of a color $u_i$ is required (oracle $\textsf{EVAL}_D$) in $P\cap R$, we use $\mathcal{T}_i$ to get $|u(P\cap R)|$ and $\mathcal{T}$ to get $|P\cap R|$ and we set the probability of color $u_i$ to be $\frac{|u(P\cap R)|}{|P\cap R|}$.

\paragraph{Correctness.} The correctness follows from~\cite{acharya2016estimating, caferov2015optimal} estimating the \renyi entropy in the samples-only model and the dual access model.

\paragraph{Analysis.} 
In the first case, the query procedure runs $O(\max\{1,\frac{1}{(\alpha-1)^2}\}\cdot\frac{\alpha\cdot n^{1-1/\alpha}}{\Delta^2}\log n)$ queries to $\mathcal{S}$, where each query takes $O(\log^d n)$ time. In the second case, the query procedure runs $O(\frac{n^{1-1/\alpha}}{(1-2^{(1-\alpha)\Delta})^2}\log n)$ queries in $\mathcal{T}_i$, where each query takes $O(\log^{d} n)$ time.
Overall the query time is $$O\left(\min\left\{\max\left\{1,\frac{1}{(\alpha-1)^2}\right\}\cdot\frac{\alpha}{\Delta^2}, \frac{1}{(1-2^{(1-\alpha)\Delta})^2}\right\}\cdot n^{1-1/\alpha}\cdot \log^{d+1} n\right).$$ If $\alpha$ is an integer number then $O(\frac{n^{1-1/\alpha}}{\Delta^2}\log n)$ samples are only required in the first case, so the overall query time can be improved to $O\left(\min\left\{\frac{1}{\Delta^2}, \frac{1}{(1-2^{(1-\alpha)\Delta})^2}\right\}\cdot n^{1-1/\alpha}\log^{d+1} n\right)$.
We conclude with the next theorem.

\begin{thm}
\label{thm:Renadd-approx}
Let $P$ be a set of $n$ points in $\Re^d$, where each point is associated with a color. A data structure of $O(n\log^{d-1}n)$ size can be constructed in $O(n\log^{d} n)$ time, such that given a query hyper-rectangle $R$, a real parameter $\alpha>1$ and a real parameter $\Delta$, a value $h$ can be computed such that $\ren_{\alpha}(P\cap R)-\Delta\leq h\leq \ren_{\alpha}(P\cap R)+\Delta$, with high probability. The query time is  $O\left(\min\left\{\frac{1}{(\alpha-1)^2}\cdot\frac{\alpha}{\Delta^2}, \frac{1}{(1-2^{(1-\alpha)\Delta})^2}\right\}\cdot n^{1-1/\alpha}\log^{d+1} n\right)$ if $\alpha\in (1,2]$, and $O\left(\min\left\{\frac{\alpha}{\Delta^2}, \frac{1}{(1-2^{(1-\alpha)\Delta})^2}\right\}\cdot n^{1-1/\alpha}\log^{d+1} n\right)$ if $\alpha>2$.
\end{thm}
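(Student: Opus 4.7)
The plan is to reduce the query problem to two black-box entropy-estimation routines over unknown distributions, and implement the oracles they require by standard colored range trees over $P$. Specifically, for a query rectangle $R$ the ``unknown distribution'' $\mathcal{D}_R$ over $u(P\cap R)$ assigns to color $u_i$ the mass $|P(u_i)\cap R|/|P\cap R|$, and $\ren_\alpha(P\cap R)=\ren_\alpha(\mathcal{D}_R)$ by definition. I can therefore instantiate either the samples-only estimator of~\cite{acharya2016estimating} (extended to non-integer $\alpha$ via the AMS-style frequency-moment sketches of~\cite{alon1996space,thorup2004tabulation,estfrmomstr}) or the dual-access estimator of~\cite{caferov2015optimal}, whichever advertises the smaller sample complexity for the given $(\alpha,\Delta)$. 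The unknown support size $N$ is upper-bounded by $n$ (pad with dummy colors of mass $0$), which is all the cited bounds require.

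For the preprocessing I would build exactly the three structures used in Section~\ref{subsec:ApproxAdd}: a range tree $\mathcal{T}$ over $P$ for cardinality queries, a range tree $\mathcal{T}_i$ over $P(u_i)$ for each color $u_i\in U$ so that $|P(u_i)\cap R|$ is returned in $O(\log^d n)$ time, and a sampling range tree $\mathcal{S}$ that returns a uniform point of $P\cap R$ in $O(\log^d n)$ time. Since $\sum_i |P(u_i)|=n$, the collection of $\mathcal{T}_i$'s has total size $O(n\log^{d-1} n)$, so the overall space is $O(n\log^{d-1}n)$ and the build time is $O(n\log^d n)$, matching the theorem.

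For the query, I would first evaluate the two stated sample-complexity expressions and run whichever is smaller. In the samples-only branch I draw the prescribed number of independent uniform points from $P\cap R$ using $\mathcal{S}$, treating their colors as samples of $\mathcal{D}_R$, and feed them to~\cite{acharya2016estimating}; one round of sampling costs $O(\log^d n)$. In the dual-access branch I simulate $\textsf{SAMP}_{\mathcal{D}_R}$ with $\mathcal{S}$ exactly as above and implement $\textsf{EVAL}_{\mathcal{D}_R}(u_i)$ by two counting queries: $\mathcal{T}_i$ for $|P(u_i)\cap R|$ and $\mathcal{T}$ for $|P\cap R|$, returning their ratio; again each oracle invocation costs $O(\log^d n)$. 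Multiplying the oracle count by $\log^d n$ (and a final $\log n$ factor to boost the constant-probability guarantee to w.h.p.) yields the stated query time in both regimes.

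The main obstacle is really just the bookkeeping to certify that the simulated oracles behave exactly as the abstract algorithms expect: $\textsf{SAMP}$ must be uniform over $P\cap R$ and independent across calls (which is exactly what $\mathcal{S}$ provides), and $\textsf{EVAL}$ must return the true mass $|P(u_i)\cap R|/|P\cap R|$ (which is what $\mathcal{T}_i$ and $\mathcal{T}$ compute exactly). Once these simulation correctness statements are in place, correctness of the additive $\Delta$-approximation and the high-probability guarantee are immediate from the analyses of~\cite{acharya2016estimating} and~\cite{caferov2015optimal}, and the cost analysis above gives the two cases $\alpha\in(1,2]$ and $\alpha>2$ exactly as stated.
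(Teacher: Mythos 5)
Your proposal matches the paper's proof essentially step for step: the same three range-tree structures ($\mathcal{T}$, the per-color $\mathcal{T}_i$'s, and the sampling tree $\mathcal{S}$), the same simulation of $\textsf{SAMP}$ and $\textsf{EVAL}$ for the distribution $\mathcal{D}_R$, the same padding to support size $n$, and the same case split that runs whichever of the samples-only estimator of~\cite{acharya2016estimating} or the dual-access estimator of~\cite{caferov2015optimal} has the smaller complexity for the given $(\alpha,\Delta)$. The space, preprocessing, and query-time accounting are also identical, so this is the paper's argument.
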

This data structure can be made dynamic under arbitrary insertions and deletions of points using well known techniques~\cite{ bentley1980decomposable, erickson2011static, overmars1983design, overmars1981worst}. The update time is $O(\log^d n)$.


\subsection{Faster Additive Approximation for $d=1$}
\label{subsec:RenApprox2}
Next, for $d=1$, we propose a deterministic, faster approximate data structure with query time $O(\polylog n)$ that returns an additive approximation of the \renyi entropy $\ren_{\alpha}(P\cap R)$, given a query rectangle $R$. The additive approximation term will be $\eps\cdot\frac{\alpha+1}{\alpha-1}$.

Instead of using the machinery for entropy estimation on unknown distributions, we get the intuition from data structures that count the number of colors in a query region $R$, as we did for the Shannon entropy.
Again, we consider the mapping $\bar{P}\subset \Re^2$ of $P\subset \Re$ as shown in~\cite{gupta1995further} and described in Subsection~\ref{subsec:ApproxAddMult1}. Recall that having a range tree $\bar{\mathcal{T}}$ on $\bar{P}$ allows us to count or report the number of colors in $P\cap R$ efficiently.
While this is more than enough to count or report the colors in $P\cap R$, for the \renyi entropy we also need to know (in fact precompute) the number of points of each color $u_i$ in $P'=P\cap R$, along with the actual \renyi entropy in each canonical subset. Notice that a canonical subset/node in $\bar{\mathcal{T}}$ might belong to many different query rectangles $\bar{R}$ that correspond to different query intervals $R$. Even though a point of color $u_i$ appears only once in $\bar{R}\cap \bar{P}$, there can be multiple points with color $u_i$ in $R\cap P$. Hence, there is no way to know in the preprocessing phase the exact number of points of each color presented in a canonical node of $\bar{\mathcal{T}}$. Furthermore, the \renyi entropy is not monotone.
We overcome the technical difficulties by pre-computing for each canonical node $v$ in $\bar{\mathcal{T}}$, monotone pairs with approximate values of (interval, number of points), and (interval, sum of number of points of each color to the power of $\alpha$) over a sufficiently large number of intervals.

Before we start describing the data structure we prove some useful properties that we need later.

For a set of colored points $P'\subseteq P$, with $N=|P'|$, let $G(P')=\sum_{u_i\in u(P')}N_i^\alpha$, where $N_i$ is the number of points in $P'$ with color $u_i$.
\begin{lem}
\label{lem:RenGfunc}
    The function $G(\cdot)$ is monotonically increasing. Furthermore, $G(P')=O(n^{\alpha+1})$, and the smallest value that $G(\cdot)$ can take if $u(P')>1$ is at least $2$.
\end{lem}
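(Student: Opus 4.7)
The plan is to mirror the proof of Lemma~\ref{lem:monotone}, adapting each step to the simpler functional form $G(P')=\sum_{u_i\in u(P')} N_i^{\alpha}$.

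First I would establish monotonicity by adding a single point $p\notin P'$ and comparing $G(P'\cup\{p\})$ with $G(P')$. If $u(p)\notin u(P')$, then $G(P'\cup\{p\})=G(P')+1^{\alpha}>G(P')$, since a new color with a single point contributes $1$. If $u(p)=u_k\in u(P')$ with count $N_k\geq 1$, then $G(P'\cup\{p\})-G(P')=(N_k+1)^{\alpha}-N_k^{\alpha}$, which is strictly positive because the function $x\mapsto x^{\alpha}$ is strictly increasing for $\alpha>1$ on positive reals. Iterating over any sequence of insertions that turns an arbitrary subset of $P'$ into a superset gives the full monotonicity statement.

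Next I would verify the upper bound. Using $N_i\leq n$ for each $i$, I have $N_i^{\alpha}=N_i\cdot N_i^{\alpha-1}\leq n^{\alpha-1}\cdot N_i$, and summing over all colors in $u(P')$ yields
\[
G(P')\;\leq\;n^{\alpha-1}\sum_{u_i\in u(P')} N_i\;=\;n^{\alpha-1}\cdot|P'|\;\leq\;n^{\alpha},
\]
so certainly $G(P')=O(n^{\alpha+1})$ (in fact, the tighter bound $n^{\alpha}$ holds). For the lower bound, if $|u(P')|>1$ there are at least two distinct colors, each appearing at least once, so $G(P')\geq 1^{\alpha}+1^{\alpha}=2$, which is attained exactly when $|u(P')|=2$ and each color has a single representative.

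I do not expect any real obstacle: the only place where care is needed is the strict inequality for the monotonicity step, which reduces to the elementary fact that $x^{\alpha}$ is strictly increasing on $[1,\infty)$ when $\alpha>1$. All other steps are direct calculations using $N_i\geq 1$ for $u_i\in u(P')$ and $\sum_i N_i=|P'|\leq n$.
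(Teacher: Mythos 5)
Your proof is correct and follows essentially the same route as the paper: the monotonicity argument (a new color adds $1^{\alpha}$, an existing color's term grows from $N_k^{\alpha}$ to $(N_k+1)^{\alpha}$) is identical, and you simply spell out the two boundary claims that the paper dismisses as following "straightforwardly from the monotonicity of $G$." As a minor bonus, your direct computation actually yields the sharper bound $G(P')\leq n^{\alpha}$, which of course implies the stated $O(n^{\alpha+1})$.
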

\begin{proof}
    Let $p\in P$ be a point such that $p\notin P'$. We show that $G(P'\cup\{p\})\geq G(P')$.
    If $u(p)\notin u(P')$ then $G(P'\cup \{p\})=G(P')+1^\alpha>G(P')$. If $u(p)\in u(P')$, let $u(p)=u_j$.
    Then $G(P'\cup \{p\})=\sum_{u_i\in u(P')\setminus u_j}N_i^\alpha + (N_j+1)^\alpha>\sum_{u_i\in u(P')\setminus u_j}N_i^\alpha + N_j^\alpha=G(P')$.
    
    The inequalities in the end follow straightforwardly from the monotonicity of $G$.
\end{proof}

\paragraph{Data structure.}
We apply the same mapping from $P$ to  $\bar{P}$ as described above~\cite{gupta1995further} and construct a range tree $\bar{\mathcal{T}}$ on $\bar{P}$. Then we visit each canonical node $v$ of $\bar{\mathcal{T}}$. If node $v$ contains two points with the same color then we can skip it because this node will not be returned as a canonical node for any query $\bar{R}$. Let $v$ be a node such that $\bar{P}_v$ does not contain two points with the same color. Let also $x_v$ be the smallest $x$-coordinate of a point in $\bar{P}_v$. Finally, let $U_v=u(\bar{P}_v)$, and $P(U_v)=\{p\in P\mid u(p)\in U_v\}$. Notice that $P(U_v)$ is a subset of $P$ and not of $\bar{P}$.
We initialize an empty array $S_v$ of size $O(\frac{\log n}{\eps})$. Each element $S_v[i]$ stores the maximum $x$ coordinate such that $(1+\eps)^i\geq |P(U_v)\cap [x_v,x]|$.
Furthermore, we initialize an empty array $H_v$ of size $O(\frac{\alpha\log n}{\eps})$. Each element $H_v[i]$ stores the maximum $x$ coordinate such that $(1+\eps)^i\geq G(P(U_v)\cap [x_v,x])$. We notice that both functions $G(\cdot)$, and cardinality of points are monotonically increasing.
For every node of $\bar{\mathcal{T}}$ we use $O(\frac{\alpha\log n}{\eps})$ space (from Lemma~\ref{lem:RenGfunc} there are $O(\frac{\alpha \cdot \log n}{\eps})$ possible exponents $i$ in the discrete values $(1+\eps)^i$), so in total, the space of our data structure is $O(\frac{\alpha\cdot n}{\eps}\log^2 n)$. 
Using the proof of Lemma~\ref{lem:preproc}, the data structure can be constructed in $O(\frac{\alpha \cdot n}{\eps}\log^5 n)$ time.

\paragraph{Query procedure.}
Given a query interval $R=[a,b]$, we run a query in $\bar{\mathcal{T}}$ using the query range $\bar{R}$.
Let $V=\{v_1, \ldots, v_k\}$ be the set of $k=O(\log^2 n)$ returned canonical nodes.
For each node $v\in V$ we run a binary search in array $S_v$ and a binary search in $H_v$ with key $b$.
Let $\ell_{v}^S$ be the minimum index such that $b\leq S_v[\ell_v^S]$ and $\ell_{v}^H$ be the minimum index such that $b\leq H_v[\ell_v^H]$.
From their definitions, it holds that $|P(U_v)\cap R|\leq (1+\eps)^{\ell_v^S}\leq (1+\eps)|P(U_v)\cap R|$, and 
$G(P(U_v)\cap R)\leq (1+\eps)^{\ell_v^H}\leq (1+\eps)G(P(U_v)\cap R)$.
We return $\mathcal{H}=\frac{1}{\alpha-1}\log\left(\frac{\left(\sum_{v_i\in V}(1+\eps)^{\ell_{v_i}^S}\right)^{\alpha}}{\sum_{v_i\in V}(1+\eps)^{\ell_{v_i}^H-1}}\right).$

\paragraph{Correctness and analysis.}
\begin{lem}
\label{lem:Renprop-add}
    It holds that $$|P\cap R|\leq \sum_{v_i\in V}(1+\eps)^{\ell_{v_i}^S}\leq (1+\eps)|P\cap R|$$ and $$\sum_{u_i\in u(P\cap R)}|P(u_i)\cap R|^\alpha \geq \sum_{v_i\in V}(1+\eps)^{\ell_{v_i}^H-1}\geq \frac{1}{1+\eps}\sum_{u_i\in u(P\cap R)}|P(u_i)\cap R|^\alpha.$$
\end{lem}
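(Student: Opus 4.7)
\textbf{Proof proposal for Lemma~\ref{lem:Renprop-add}.}

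The plan is to first establish the structural property that the canonical nodes returned by the query on $\bar{\mathcal{T}}$ induce a partition of the colors appearing in $P \cap R$, and then transfer the pointwise approximation guarantees given by the arrays $S_v$ and $H_v$ into global bounds by summing over this partition. Concretely, since $\bar{R} = [a,b] \times (-\infty, a)$, the transformed points in $\bar{P} \cap \bar{R}$ are in bijection with $u(P \cap R)$: each such point is the unique representative $\bar{p}_j$ whose $x$-coordinate $p_j$ is the leftmost occurrence in $[a,b]$ of its color. Because $V$ is the canonical decomposition of $\bar{R}$ in $\bar{\mathcal{T}}$, the sets $\{\bar{P}_v\}_{v \in V}$ partition $\bar{P} \cap \bar{R}$, and hence the color sets $\{U_v\}_{v \in V}$ partition $u(P \cap R)$. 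Moreover, by the definition of $x_v$ as the smallest $x$-coordinate among the leftmost-in-$R$ occurrences of colors in $U_v$, no point of $P(U_v)$ in $R$ has $x$-coordinate in $[a, x_v)$. Thus $P(U_v) \cap R = P(U_v) \cap [x_v, b]$ for every $v \in V$.

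Next I would translate the definitions of $\ell_v^S$ and $\ell_v^H$ into per-node sandwich bounds. Since both $|P(U_v)\cap [x_v,x]|$ and $G(P(U_v) \cap [x_v, x])$ are monotone nondecreasing in $x$, and $\ell_v^S$ is chosen minimally with $b \leq S_v[\ell_v^S]$, the definition of $S_v$ gives $|P(U_v)\cap R| \leq (1+\eps)^{\ell_v^S}$, while minimality forces $(1+\eps)^{\ell_v^S-1} < |P(U_v)\cap R|$, i.e.\ $(1+\eps)^{\ell_v^S} < (1+\eps)\cdot |P(U_v)\cap R|$. The analogous argument with $H_v$ yields
\[
(1+\eps)^{\ell_v^H - 1} \;<\; G(P(U_v) \cap R) \;\leq\; (1+\eps)^{\ell_v^H} \;=\; (1+\eps)\cdot (1+\eps)^{\ell_v^H - 1}.
\]

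The first inequality of the lemma now follows by summing the bound $|P(U_v)\cap R| \leq (1+\eps)^{\ell_v^S} \leq (1+\eps) |P(U_v)\cap R|$ over $v \in V$ and using the partition identity $\sum_{v \in V} |P(U_v) \cap R| = |P \cap R|$. For the second inequality, I would use the partition again to write
\[
\sum_{v \in V} G(P(U_v) \cap R) \;=\; \sum_{v \in V}\sum_{u_i \in U_v} |P(u_i) \cap R|^\alpha \;=\; \sum_{u_i \in u(P \cap R)} |P(u_i) \cap R|^\alpha,
\]
and then sum the per-node sandwich inequality for $G$; the upper half gives $\sum_{u_i} |P(u_i)\cap R|^\alpha \geq \sum_v (1+\eps)^{\ell_v^H - 1}$, and the lower half gives $(1+\eps)\sum_v (1+\eps)^{\ell_v^H - 1} \geq \sum_{u_i}|P(u_i)\cap R|^\alpha$, i.e.\ $\sum_v (1+\eps)^{\ell_v^H - 1} \geq \frac{1}{1+\eps}\sum_{u_i}|P(u_i)\cap R|^\alpha$.

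The only subtle step is the structural claim that the color sets $\{U_v\}$ partition $u(P \cap R)$ and the reduction $P(U_v)\cap R = P(U_v)\cap [x_v,b]$; once these are in place, the rest is a routine summation of geometric-scale approximations. Everything else — monotonicity of $S_v, H_v$, and the definitions of $\ell_v^S, \ell_v^H$ — is immediate from the construction.
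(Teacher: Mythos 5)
Your proposal is correct and follows essentially the same route as the paper: derive the per-node sandwich bounds $|P(U_v)\cap R|\leq(1+\eps)^{\ell_v^S}\leq(1+\eps)|P(U_v)\cap R|$ and the analogous bounds for $G$, then sum over $V$ using the fact that the color sets $U_v$ partition $u(P\cap R)$. The only difference is that you explicitly justify the structural facts (the partition of colors across canonical nodes and the identity $P(U_v)\cap R=P(U_v)\cap[x_v,b]$) that the paper treats as immediate from the definitions, which is a welcome addition rather than a divergence.
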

\begin{proof}
    We first focus on the first inequality. Let $v_i\in V$. By definition, we had that $|P(U_{v_i})\cap R|\leq (1+\eps)^{\ell_{v_i}^S}\leq (1+\eps)|P(U_{v_i})\cap R|$. We take the sum over the canonical nodes in $V$ and we get $\sum_{v_i\in V}|P(U_{v_i})\cap R|\leq \sum_{v_i\in V}(1+\eps)^{\ell_{v_i}^S}\leq (1+\eps)\sum_{v_i\in V}|P(U_{v_i})\cap R|$. We note that $\sum_{v_i\in V}|P(U_{v_i})\cap R|=|P\cap R|$ because no color is shared between two different nodes in $V$. Hence, the first inequality follows.

    Next, we focus on the second inequality. Let $v_i\in V$. By definition, we had that $G(P(U_{v_i})\cap R)\leq (1+\eps)^{\ell_{v_i}^H}\leq (1+\eps)G(P(U_{v_i})\cap R)$. Hence, it also follows that $G(P(U_{v_i})\cap R)\geq (1+\eps)^{\ell_{v_i}^H-1}\geq \frac{1}{1+\eps}G(P(U_{v_i})\cap R)$. We take the sum over the canonical nodes in $V$ and we get $\sum_{v_i\in V}G(P(U_{v_i})\cap R)\geq \sum_{v_i\in V}(1+\eps)^{\ell_{v_i}^H-1}\geq \frac{1}{1+\eps}\sum_{v_i\in V}G(P(U_{v_i})\cap R)$. Recall that $G(P(U_{v_i})\cap R)=\sum_{u_j\in U_{v_i}}|P(u_j)\cap R|^\alpha$, so $\sum_{v_i\in V}G(P(U_{v_i})\cap R)=\sum_{u_i\in u(P\cap R)}|P(u_i)\cap R|^\alpha$, since no color is shared between two different nodes in $V$.
    The second inequality follows.
\end{proof}
The next Lemma shows the correctness of our procedure.
\begin{lem}
   If we set $\eps\leftarrow \eps/2$, it holds that $\ren_{\alpha}(P\cap R)\leq \mathcal{H}\leq \ren_{\alpha}(P\cap R)+\eps\cdot\frac{\alpha+1}{\alpha-1}$. 
\end{lem}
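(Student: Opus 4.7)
The plan is to compare the ratio inside the $\log$ in $\mathcal{H}$ with the ratio inside the $\log$ in $\ren_\alpha(P\cap R)$ and then absorb the error with the substitution $\eps \leftarrow \eps/2$. Let $N = |P\cap R|$ and $S = \sum_{u_i\in u(P\cap R)}|P(u_i)\cap R|^\alpha$, so that $\ren_\alpha(P\cap R) = \frac{1}{\alpha-1}\log\frac{N^\alpha}{S}$. Similarly, let $N' = \sum_{v_i \in V}(1+\eps)^{\ell_{v_i}^S}$ and $S' = \sum_{v_i \in V}(1+\eps)^{\ell_{v_i}^H - 1}$, so that $\mathcal{H} = \frac{1}{\alpha-1}\log\frac{(N')^\alpha}{S'}$. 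By Lemma~\ref{lem:Renprop-add}, we have $N \leq N' \leq (1+\eps)N$ and $S/(1+\eps) \leq S' \leq S$.

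For the lower bound $\ren_\alpha(P\cap R) \leq \mathcal{H}$, I would just combine $N' \geq N$ with $S' \leq S$, which gives $\frac{(N')^\alpha}{S'} \geq \frac{N^\alpha}{S}$; since $\alpha>1$, taking $\frac{1}{\alpha-1}\log(\cdot)$ preserves the inequality.

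For the upper bound, I would use the other sides of the two bounds to write
\[
\frac{(N')^\alpha}{S'} \;\leq\; \frac{((1+\eps)N)^\alpha}{S/(1+\eps)} \;=\; \frac{N^\alpha}{S}\,(1+\eps)^{\alpha+1},
\]
which yields $\mathcal{H} \leq \ren_\alpha(P\cap R) + \frac{\alpha+1}{\alpha-1}\log(1+\eps)$. The only remaining step is to check that, after rescaling $\eps \leftarrow \eps/2$ at the start of the construction, the extra term $\frac{\alpha+1}{\alpha-1}\log(1+\eps/2)$ is at most $\eps \cdot \frac{\alpha+1}{\alpha-1}$. This reduces to showing $\log_2(1+\eps/2) \leq \eps$, which follows from the standard inequality $\ln(1+x) \leq x$ applied with $x = \eps/2$: indeed $\log_2(1+\eps/2) \leq \frac{\eps/2}{\ln 2} = \frac{\eps}{2\ln 2} \leq \eps$ since $2\ln 2 > 1$.

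No step here looks like a real obstacle; the calculation is essentially bookkeeping once Lemma~\ref{lem:Renprop-add} is in place. The only subtlety is making sure that the $(1+\eps)^{\alpha+1}$ factor (which combines the $(1+\eps)^\alpha$ from the numerator with the $(1+\eps)$ from inverting the denominator bound) is where the factor $\frac{\alpha+1}{\alpha-1}$ in the approximation error comes from, and that the final scaling of $\eps$ is chosen to kill the $\ln 2$ factor coming from the change of base between $\log_2$ and $\ln$.
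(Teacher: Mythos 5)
Your proposal is correct and follows essentially the same route as the paper: both directions are obtained by plugging the two-sided bounds of Lemma~\ref{lem:Renprop-add} into the numerator and denominator of the ratio inside the logarithm, yielding the exact $(1+\eps/2)^{\alpha+1}$ factor and hence the $\frac{\alpha+1}{\alpha-1}\log(1+\eps/2)$ additive term, which is then bounded by $\eps\cdot\frac{\alpha+1}{\alpha-1}$ via $\log(1+\eps/2)\leq\eps$. Your justification of that last inequality (via $\ln(1+x)\leq x$ and $2\ln 2>1$) is in fact slightly more explicit than the paper's.
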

\begin{proof}
From Lemma~\ref{lem:Renprop-add}, we have
\begin{align*}
\mathcal{H}&=\frac{1}{\alpha-1}\log\left(\frac{\left(\sum_{v_i\in V}(1+\eps/2)^{\ell_{v_i}^S}\right)^{\alpha}}{\sum_{v_i\in V}(1+\eps/2)^{\ell_{v_i}^H-1}}\right)\geq \frac{1}{\alpha-1}\log\left(\frac{|P\cap R|^\alpha}{\sum_{u_i\in u(P\cap R)}|P(u_i)\cap R|^\alpha}\right)\\
&=\frac{1}{\alpha-1}\log\left(\frac{1}{\sum_{u_i\in u(P\cap R)}\frac{|P(u_i)\cap R|^\alpha}{|P\cap R|^\alpha}}\right)=\ren_{\alpha}(P\cap R).
\end{align*}
From Lemma~\ref{lem:Renprop-add}, we also have,
\begin{align*}
    \mathcal{H}&\leq \frac{1}{\alpha-1}\log\left(\frac{(1+\eps/2)^\alpha|P\cap R|^\alpha}{\frac{1}{1+\eps/2}\sum_{u_i\in u(P\cap R)}|P(u_i)\cap R|^\alpha}\right)=\ren_{\alpha}(P\cap R)+\frac{\alpha+1}{\alpha-1}\log(1+\eps/2)\\
    &\leq \ren_{\alpha}(P\cap R)+\eps\cdot \frac{\alpha+1}{\alpha-1}.
\end{align*}
The last inequality holds because $\log(1+\eps/2)\leq \eps$ for $\eps\geq 0$.
\end{proof}
We need $O(\log^2 n)$ time to get $V$ from $\bar{\mathcal{T}}$. Then, we run binary search for each node $v\in V$ so we spend $O(\log^2 n \log\frac{\alpha\cdot\log n}{\eps})$ time. We merge and update the overall entropy in time $O(|V|)$, so in total the query time is $O(\log^2 n \log\frac{\alpha\cdot\log n}{\eps})$.

\begin{thm}
\label{thm:Renmult-add-approx}
Let $P$ be a set of $n$ points in $\Re^1$, where each point is associated with a color, let $\alpha>1$ be a parameter and let $\eps\in(0,1)$. A data structure of $O(\frac{\alpha\cdot n}{\eps}\log^2 n)$ size can be constructed in $O(\frac{\alpha\cdot n}{\eps}\log^5 n)$ time, such that given a query \new{interval} $R$, a value $h$ can be computed in $O\left(\log^2 n \log\frac{\alpha\cdot \log n}{\eps}\right)$ time, such that $\ren_{\alpha}(P\cap R)\leq h\leq \ren_{\alpha}(P\cap R)+\eps\cdot \frac{\alpha+1}{\alpha-1}$.
\end{thm}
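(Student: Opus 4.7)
The plan is to assemble the ingredients already developed in this subsection: the lifted point set $\bar{P}$ and range tree $\bar{\mathcal{T}}$ of Gupta et al., the arrays $S_v$ and $H_v$ stored at each valid canonical node, Lemma~\ref{lem:RenGfunc} (which bounds the range of $G$ and hence the length of $H_v$), Lemma~\ref{lem:Renprop-add} (which converts the per-node multiplicative guarantees into global relative-error bounds on the numerator and denominator of $\ren_\alpha(P\cap R)$), and finally a short algebraic step that turns two $(1\pm\eps/2)$ multiplicative errors into a single additive error on $\ren_\alpha$.

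For the size, the base range tree $\bar{\mathcal{T}}$ has $O(n\log n)$ nodes. By Lemma~\ref{lem:RenGfunc}, $G(\cdot)$ takes values in $[1,O(n^{\alpha+1})]$, so indexing by powers of $1+\eps$ gives $O(\alpha\log n/\eps)$ entries in $H_v$, while $S_v$ has $O(\log n/\eps)$ entries; summing over nodes yields $O(\tfrac{\alpha n}{\eps}\log^2 n)$. For the construction, I would reuse the nested-binary-search procedure of Lemma~\ref{lem:preproc}, now filling the longer $H_v$ tables; each entry is computed by a binary search on $x$-coordinates in which each probe calls per-color counting trees $T_u$ in $O(\log n)$ time, and the same accounting across the two levels of the range tree gives $O(\tfrac{\alpha n}{\eps}\log^5 n)$ total.

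For correctness, set the internal precision to $\eps/2$. Given $R=[a,b]$, I would first retrieve the $O(\log^2 n)$ canonical nodes $V$ covering $\bar{R}$; since a point of color $u_i$ appears in $\bar{R}\cap\bar{P}$ iff $u_i\in u(P\cap R)$ and then exactly once, the sets $\{U_v\}_{v\in V}$ partition $u(P\cap R)$. Binary searches with key $b$ produce the indices $\ell_v^S,\ell_v^H$ satisfying the per-node guarantees stated in the query procedure. Lemma~\ref{lem:Renprop-add} then gives
\[
|P\cap R|\leq\sum_{v\in V}(1+\eps/2)^{\ell_v^S}\leq(1+\eps/2)\,|P\cap R|,
\]
\[
\frac{1}{1+\eps/2}\!\!\sum_{u_i\in u(P\cap R)}\!\!|P(u_i)\cap R|^{\alpha}\;\leq\;\sum_{v\in V}(1+\eps/2)^{\ell_v^H-1}\;\leq\!\!\sum_{u_i\in u(P\cap R)}\!\!|P(u_i)\cap R|^{\alpha}.
\]
Substituting into $\mathcal{H}=\frac{1}{\alpha-1}\log\!\bigl(\frac{(\sum_v(1+\eps/2)^{\ell_v^S})^{\alpha}}{\sum_v(1+\eps/2)^{\ell_v^H-1}}\bigr)$ sandwiches $\mathcal{H}$ between $\ren_\alpha(P\cap R)$ and $\ren_\alpha(P\cap R)+\frac{\alpha+1}{\alpha-1}\log(1+\eps/2)$, which is at most $\ren_\alpha(P\cap R)+\eps\cdot\frac{\alpha+1}{\alpha-1}$ since $\log(1+\eps/2)\leq\eps$ for $\eps\in(0,1)$.

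For the query time, locating $V$ costs $O(\log^2 n)$, each of the $|V|=O(\log^2 n)$ binary searches in $S_v$ or $H_v$ costs $O(\log(\alpha\log n/\eps))$, and the final aggregation of $\mathcal{H}$ is linear in $|V|$, giving $O(\log^2 n\,\log(\alpha\log n/\eps))$. The only genuinely non-mechanical step is the algebraic substitution above: it is where the $\alpha$-th power in the numerator and the single power in the denominator conspire to produce the factor $\frac{\alpha+1}{\alpha-1}$, and it is also the step that dictates why using precision $\eps/2$ (rather than $\eps$) inside the construction suffices to achieve the advertised additive error.
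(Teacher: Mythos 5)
Your proposal is correct and follows the paper's proof essentially verbatim: the same lifted structure $\bar{\mathcal{T}}$ with arrays $S_v$ and $H_v$, the same use of Lemma~\ref{lem:RenGfunc} for the space bound and Lemma~\ref{lem:preproc} for construction, the same estimator $\mathcal{H}$ with internal precision $\eps/2$, and the same algebraic sandwich via Lemma~\ref{lem:Renprop-add} yielding the additive error $\frac{\alpha+1}{\alpha-1}\log(1+\eps/2)\leq\eps\cdot\frac{\alpha+1}{\alpha-1}$.
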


\subsection{Multiplicative Approximation}
\label{subsec:Rmult}
While the problem of estimating the \renyi entropy has been studied in the samples-only model and the dual access model, to the best of our knowledge there is no known multiplicative approximation for every $\alpha>1$. Interestingly, by taking advantage of the properties of the geometric space, we are able to return a multiplicative $(1+\eps)$-approximation of the \renyi entropy in the query setting for any $\alpha>1$.
Our high level idea is the following. Harvey et al.~\cite{harvey2008sketching} show a multiplicative approximation of the \renyi entropy in the streaming setting for $\alpha\in (1,2]$. While in the streaming setting their algorithm does not work for $\alpha>2$, (they only give a lower bound on the number of samples they get when $\alpha>2$), we show that in our query setting, we can extend it to every $\alpha>1$.
First, similarly to the multiplicative approximation for the Shannon entropy, we decide if the \renyi entropy $\ren_{\alpha}(P\cap R)$ is sufficiently large by checking whether there exists a color $u_i\in u(P\cap R)$ that contains more than $2/3$ of the points in $P\cap R$. If no such color exists then $\ren_\alpha(P\cap R)$ is sufficiently large and an additive approximation using Theorem~\ref{thm:Renmult-add-approx} can be used to derive a multiplicative approximation. On the other hand, if such a color $u_i$ exists, we use a technical lemma from~\cite{harvey2008sketching} that shows that a multiplicative approximation (by a sufficiently small approximation factor) of $t-1$ suffices to get a multiplicative approximation of $\log(t)$. Notice that in our case the value $t$ is the inverse of the $\alpha$-th moment of the distribution in $P\cap R$. In order to compute a multiplicative approximation of $t-1$, using the results in~\cite{harvey2008sketching}, it suffices to compute a a multiplicative approximation of $\gamma_1=1-\left(\frac{|P(u_i)\cap R|}{|P\cap R|}\right)^\alpha$ and a multiplicative approximation of $\gamma_2=\sum_{u_j\in u(P\cap R)\setminus\{u_i\}}\left(\frac{|P(u_j)\cap R|}{|P\cap R|}\right)^\alpha$. We approximate $\gamma_2$ using a data structure for estimating the $\alpha$-th frequency moment in the query setting as shown in the next paragraph. Interestingly, in our setting, after we have identified the color $u_i$ the value $\gamma_1$ can be computed exactly using two range trees. In contrast, in~\cite{harvey2008sketching} they get a multiplicative approximation of $\gamma_1$ in the streaming setting only for $\alpha\in (1,2]$.

\paragraph{Data structure for $\alpha$-th frequency moment.}
Before we start describing our data structure for returning a multiplicative approximation in the query setting, we show an efficient way to compute the $\alpha$-th frequency moment in the query setting. This is an important tool that we are going to use in the design of our data structure, later. 
Using the results from~\cite{alon1996space, thorup2004tabulation}, as described in~\cite{estfrmomstr}, we can get a multiplicative approximation of the $\alpha$-th frequency moment in the samples-only model. Hence, using the range tree for range sampling in our model we can directly get the following useful result. 
\begin{lem}
    \label{lem:moments}
    Given a set of $n$ weighted points $P\subset \Re^d$, there exists a data structure of $O(n\log^{d-1} n)$ space that is constructed in $O(n\log^d n)$ time, such that given a query rectangle $R$, a parameter $\alpha>1$ and a parameter $\eps\in(0,1)$, it returns a value $h$ in $O(\frac{\alpha \cdot n^{1-1/\alpha}}{\eps^2}\log^{d+1}n)$ time, such that 
    $\sum_{u_j\in u(P\cap R)}\left(\frac{|P(u_j)\cap R|}{|P\cap R|}\right)^\alpha\leq h \leq (1+\eps)\sum_{u_j\in u(P\cap R)}\left(\frac{|P(u_j)\cap R|}{|P\cap R|}\right)^\alpha$, with high probability. 
\end{lem}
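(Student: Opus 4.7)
The plan is to reduce the problem of estimating the $\alpha$-th frequency moment inside a query rectangle to the well-studied problem of estimating the $\alpha$-th frequency moment of an unknown distribution in the samples-only model, and to implement the required random samples via a range tree. Concretely, I would preprocess $P$ by building the range tree $\mathcal{S}$ described in Section~\ref{sec:prelim} that supports uniform sampling inside any query hyper-rectangle. This takes $O(n\log^d n)$ preprocessing time and $O(n\log^{d-1} n)$ space, and each sample from $P\cap R$ is returned in $O(\log^d n)$ time. Observe that if we draw a uniform sample $p\in P\cap R$ and read off its color $u(p)$, then color $u_j$ is returned with probability exactly $\frac{|P(u_j)\cap R|}{|P\cap R|}$. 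Hence $\mathcal{S}$ implements precisely the $\textsf{SAMP}_D$ oracle for the (unknown) discrete distribution $D$ over the colors $u(P\cap R)$ whose $\alpha$-th frequency moment $\sum_{u_j}(D(u_j))^\alpha$ we wish to approximate.

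Given this oracle, I would invoke the AMS-style estimator of~\cite{alon1996space} together with its extension to arbitrary real $\alpha>1$ via tabulation hashing and the samples-only reduction of~\cite{thorup2004tabulation, estfrmomstr}. That algorithm produces, using $k=O\!\left(\frac{\alpha\cdot N^{1-1/\alpha}}{\eps^2}\right)$ samples from $D$, an estimator whose value lies in $[(1-\eps)\sum_{u_j}(D(u_j))^\alpha,(1+\eps)\sum_{u_j}(D(u_j))^\alpha]$ with constant probability, where $N$ is the support size of $D$. As in the additive approximation algorithms of Subsections~\ref{subsec:ApproxAdd} and~\ref{subsec:RenApprox1}, we do not know $|u(P\cap R)|$ explicitly, but we may use the loose bound $N\leq n$ because any ``phantom'' colors contribute zero mass and do not change the estimator. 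To boost the success probability to $1-1/n^{\Omega(1)}$, I would use the standard median-of-$O(\log n)$ independent copies trick, which multiplies the sample complexity by $\log n$. After running the estimator we get a one-sided $(1+\eps)$-approximation by rescaling (replace the two-sided estimate by its product with $(1+\eps)$), which matches the statement of the lemma.

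The total query cost is the number of samples times the time per sample, i.e.
\[
O\!\left(\frac{\alpha\cdot n^{1-1/\alpha}}{\eps^2}\log n\right)\cdot O(\log^d n)=O\!\left(\frac{\alpha\cdot n^{1-1/\alpha}}{\eps^2}\log^{d+1} n\right),
\]
matching the bound stated in Lemma~\ref{lem:moments}. Space and preprocessing are dominated by the range tree $\mathcal{S}$, so they are $O(n\log^{d-1}n)$ and $O(n\log^d n)$ respectively.

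The main technical obstacle is justifying the sample complexity of the AMS estimator for \emph{arbitrary} real $\alpha>1$ rather than just integer $\alpha$. For integer $\alpha$ this is the classical~\cite{alon1996space} bound. For general $\alpha>1$ one must either use the sampling-based reformulation from~\cite{estfrmomstr}, which expresses the $\alpha$-th moment as an expectation that can be unbiasedly estimated from a single uniform sample using a suitable re-weighting, or invoke the tabulation-hashing machinery of~\cite{thorup2004tabulation}. Either way, the argument carries over verbatim to our setting because the only access to $D$ that is required is independent draws, which our range tree supplies in $O(\log^d n)$ time per draw; no additional geometric structure is needed. This is exactly why the samples-only model (rather than the dual access model) is the right abstraction to invoke here.
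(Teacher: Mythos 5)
Your proposal is correct and follows essentially the same route as the paper: the paper proves this lemma by citing the samples-only frequency-moment estimators of~\cite{alon1996space, thorup2004tabulation, estfrmomstr} and observing that the range tree $\mathcal{S}$ for range sampling implements the required sampling oracle, with the $\log^{d+1} n$ factor in the query time coming from the $O(\log^d n)$ cost per sample times the $O(\log n)$ amplification for high probability. Your additional remarks on the support-size bound $N\leq n$ and the one-sided rescaling are consistent with how the paper handles these points elsewhere.
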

Using the modified range tree $\bar{\mathcal{S}}$ to perform sampling excluding the points of a color, we can also get the next result.
\begin{lem}
    \label{lem:moments2}
    Given a set of $n$ weighted points $P\subset \Re^d$, there exists a data structure of $O(n\log^{d} n)$ space that is constructed in $O(n\log^d n)$ time, such that given a query rectangle $R$, a color $u_i\in U$, a parameter $\alpha>1$ and a parameter $\eps\in(0,1)$, it returns a value $h$ in $O(\frac{\alpha \cdot n^{1-1/\alpha}}{\eps^2}\log^{d+1}n)$ time, such that 
    $\sum_{u_j\in u(P\cap R)\setminus\{u_i\}}\left(\frac{|P(u_j)\cap R|}{|P\cap R|}\right)^\alpha\leq h \leq (1+\eps)\sum_{u_j\in u(P\cap R)\setminus\{u_i\}}\left(\frac{|P(u_j)\cap R|}{|P\cap R|}\right)^\alpha$, with high probability. 
\end{lem}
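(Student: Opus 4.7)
The plan is to mimic the construction used for Lemma~\ref{lem:moments}, but replace the standard range tree for uniform sampling $\mathcal{S}$ with the modified range tree $\bar{\mathcal{S}}$ introduced in Subsection~\ref{subsec:ApproxMult}. Recall that, given a query rectangle $R$ and a color $u_i\in U$, $\bar{\mathcal{S}}$ returns a uniform sample from $(P\cap R)\setminus P(u_i)$ in $O(\log^d n)$ time, and uses $O(n\log^d n)$ space and preprocessing time. In addition, I would construct the standard range tree $\mathcal{T}$ on $P$ for counting queries and, for each color $u_j\in U$, the range tree $\mathcal{T}_j$ on $P(u_j)$ for counting queries. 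All these structures together still fit within $O(n\log^d n)$ space and construction time, matching the bounds of the lemma.

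At query time, first use $\mathcal{T}$ and $\mathcal{T}_i$ to compute $N=|P\cap R|$ and $N_i=|P(u_i)\cap R|$ exactly in $O(\log^d n)$ time, and set $N'=N-N_i$. Then draw $O(\alpha\cdot (N')^{1-1/\alpha}/\eps^2\cdot \log n)$ independent uniform samples from $(P\cap R)\setminus P(u_i)$ using $\bar{\mathcal{S}}$, and feed these to the AMS-type $\alpha$-th frequency moment estimator (Alon--Matias--Szegedy combined with the Thorup--Zhang analysis, as used in~\cite{estfrmomstr}) to obtain a value $\hat{h}'$ that is, with high probability, an $(1+\eps)$-multiplicative approximation of the $\alpha$-th frequency moment of the distribution on $\{u_j:u_j\neq u_i\}$ with masses $|P(u_j)\cap R|/N'$.

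To convert this estimate to the one required by the lemma, observe that
\begin{equation*}
\sum_{u_j\in u(P\cap R)\setminus\{u_i\}}\left(\frac{|P(u_j)\cap R|}{N}\right)^{\!\alpha} = \left(\frac{N'}{N}\right)^{\!\alpha}\sum_{u_j\in u(P\cap R)\setminus\{u_i\}}\left(\frac{|P(u_j)\cap R|}{N'}\right)^{\!\alpha},
\end{equation*}
so returning $h=(N'/N)^{\alpha}\cdot \hat{h}'$ gives the desired $(1+\eps)$-multiplicative approximation, since the factor $(N'/N)^{\alpha}$ is computed exactly. The query cost is dominated by the sampling phase: each sample costs $O(\log^d n)$ time, and we draw $O(\alpha\cdot n^{1-1/\alpha}/\eps^2\cdot \log n)$ of them, for a total of $O(\alpha\cdot n^{1-1/\alpha}\log^{d+1} n/\eps^2)$ time, matching the claimed bound.

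The main point to verify is that the samples produced by $\bar{\mathcal{S}}$ are unbiased and mutually independent uniform samples from $(P\cap R)\setminus P(u_i)$, so that the guarantees of the AMS-type estimator apply verbatim. This follows directly from the construction of $\bar{\mathcal{S}}$ given in Subsection~\ref{subsec:ApproxMult}: each call returns every eligible point with probability $1/|(P\cap R)\setminus P(u_i)|$, independently of previous calls. Hence the reduction to the black-box samples-only model of~\cite{estfrmomstr} is immediate, and the correctness and high-probability success bound are inherited from that estimator (amplifying by the standard median-of-$O(\log n)$-independent-runs trick, already included in the $\log n$ factor in the sample count).
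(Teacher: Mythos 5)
Your proposal is correct and follows exactly the route the paper intends: the paper justifies Lemma~\ref{lem:moments2} only by the remark that one replaces the sampling structure $\mathcal{S}$ of Lemma~\ref{lem:moments} with the modified range tree $\bar{\mathcal{S}}$ that excludes color $u_i$, which is precisely what you do. Your explicit rescaling by $(N'/N)^{\alpha}$, computed exactly from the counting trees $\mathcal{T}$ and $\mathcal{T}_i$, is a worthwhile detail the paper leaves implicit, since the sampler's natural normalization is by $N'=|P\cap R|-|P(u_i)\cap R|$ while the lemma's statement normalizes by $|P\cap R|$.
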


\paragraph{Data structure.}
For each color $u_i\in U$ we construct a range tree $\mathcal{T}_i$ over $P(u_i)$ for counting queries as in Subsection~\ref{subsec:ApproxMult}. Similarly, we construct a range tree $\mathcal{T}$ over $P$ for counting queries.
We also construct the range tree $\mathcal{S}$ for returning uniform samples in a query rectangle. We also construct the variation of the range tree, denoted by $\bar{\mathcal{S}}$, that returns a sample uniformly at random, excluding the points of a color $u_j\in U$, as described in Subsection~\ref{subsec:ApproxMult}.
Finally, we construct the data structure from Lemma~\ref{lem:moments2}, for approximating the $\alpha$-th frequency moment.

Overall, the proposed data structure can be computed in $O(n\log^d n)$ time and it uses $O(n\log^d n)$ space.

\paragraph{Query procedure.}
We first explore whether there exists a color $u_i\in U$ such that $|P(u_i)\cap R|\geq \frac{2}{3}|P\cap R|$, as we did in Subsection~\ref{subsec:ApproxMult}.
Using $\mathcal{T}$ we compute $N=|P\cap R|$. Using $\mathcal{S}$ we get $\frac{\log (2n)}{\log 3}$ independent random samples from $P\cap R$. Let $P_S$ be the set of returned samples. For each $p\in P_S$ with $u(p)=u_i$, we run a counting query in $\mathcal{T}_i$ to get $N_i=|P(u_i)\cap R|$. Finally, we check whether $\frac{N_i}{N}>2/3$.

If we do not find a point $p\in P_S$ (assuming $u(p)=u_i$) with $\frac{N_i}{N}> 2/3$ then we run the additive approximation query from Theorem~\ref{thm:Renadd-approx}, for $\Delta=\log(\frac{3}{2})\eps$ to get the additive estimator $h_{\textsf{add}}$. We return $h=h_{\textsf{add}}$.

Next, we assume that the algorithm found a point with color $u_i$ satisfying $\frac{N_i}{N}>2/3$.
We set $h_1=1-\left(\frac{N_i}{N}\right)^\alpha$.
Let $\eps_0 = \eps/C_1$, for a constant $C_1$ as shown in Lemma 5.7 of~\cite{harvey2008sketching}, and let $\eps_1=\eps_0/3$. 
Then, for simplicity, we distinguish between $\alpha\leq 2$ and $\alpha>2$. For $\alpha\in(1,2]$ (resp. $\alpha>2$), we set $\eps_2=(\alpha-1)\eps_1/C_2$ (resp. $\eps_2=\eps_1/C_2$), where $C_2$ is a sufficiently large constant as shown in~\cite{harvey2008sketching}, and we use the data structure from Lemma~\ref{lem:moments2} to compute an $(1+\eps_2)$ multiplicative approximation of the $\alpha$-th frequency moment in $P\cap R$ excluding the points with color $u_i$. Let $h'$ be this estimator.
We set $h_2=h'\cdot\frac{(N-N_i)^\alpha}{N^\alpha}$, and $\bar{h}=h_1-h_2$.
We also use the data structure from Lemma~\ref{lem:moments} to compute an $(1+\eps_1)$ multiplicative approximation of the $\alpha$-th frequency moment in $P\cap R$ (without excluding any color). Let $\hat{h}$ be this estimator. We return $h=\frac{1}{\alpha-1}\log\left(\frac{\bar{h}}{\hat{h}}+1\right)$.

\paragraph{Correctness.} We show the correctness by proving the following lemma. 
\begin{lem}
    It holds that $\frac{1}{1+\eps}\ren_{\alpha}(P\cap R)\leq h \leq (1+\eps)\ren_{\alpha}(P\cap R)$, with high probability.
\end{lem}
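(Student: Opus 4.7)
The proof splits according to which branch of the query procedure is taken, and the first thing to establish is that the branch is chosen correctly with high probability. The detection step is identical to the one in Subsection~\ref{subsec:ApproxMult}, so the analysis of Lemma~\ref{lem:ProbEntr} carries over verbatim: if some color $u_i$ has $|P(u_i)\cap R|>\tfrac{2}{3}|P\cap R|$, then $u_i\in u(P_S)$ with probability at least $1-1/(2n)$. I will condition on this detection event as well as on the successful execution of the randomized estimators invoked later, and remove the conditioning by a union bound at the end.

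In Case 1 (no dominating color), the plan is to lower bound the true entropy and then convert the additive guarantee of Theorem~\ref{thm:Renadd-approx} into a multiplicative one. Since $\ren_{\alpha}$ is non-increasing in $\alpha$, we have $\ren_{\alpha}(P\cap R)\ge \ren_{\infty}(P\cap R)=-\log\max_j \tfrac{|P(u_j)\cap R|}{|P\cap R|}\ge \log(3/2)$. With $\Delta=\log(3/2)\eps$, the estimator $h_{\textsf{add}}$ satisfies $|h_{\textsf{add}}-\ren_{\alpha}(P\cap R)|\le \eps\log(3/2)\le \eps\,\ren_{\alpha}(P\cap R)$, whence $h=h_{\textsf{add}}\in[(1-\eps)\ren_{\alpha}(P\cap R),(1+\eps)\ren_{\alpha}(P\cap R)]\subseteq[\tfrac{1}{1+\eps}\ren_{\alpha}(P\cap R),(1+\eps)\ren_{\alpha}(P\cap R)]$.

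In Case 2 (dominating color $u_i$ detected), write $p_j=|P(u_j)\cap R|/|P\cap R|$ and $t=1/\sum_j p_j^{\alpha}$, so that $\ren_{\alpha}(P\cap R)=\frac{1}{\alpha-1}\log(1+(t-1))$. The plan is to show that $\bar h/\hat h$ is a $(1+\eps_0)$-multiplicative approximation of $t-1=(1-\sum_j p_j^{\alpha})/\sum_j p_j^{\alpha}$, and then invoke Lemma~5.7 of~\cite{harvey2008sketching} to pass through the logarithm. The quantity $h_1=1-p_i^{\alpha}$ is computed exactly from $N$ and $N_i$ via the range trees $\mathcal{T}$ and $\mathcal{T}_i$; by Lemma~\ref{lem:moments2}, $h_2$ satisfies $h_2\in[\sum_{j\ne i}p_j^{\alpha},(1+\eps_2)\sum_{j\ne i}p_j^{\alpha}]$; and by Lemma~\ref{lem:moments}, $\hat h\in[\sum_j p_j^{\alpha},(1+\eps_1)\sum_j p_j^{\alpha}]$. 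Subtracting gives $\bar h=h_1-h_2\in[(1-\sum_j p_j^{\alpha})-\eps_2\sum_{j\ne i}p_j^{\alpha},\,1-\sum_j p_j^{\alpha}]$. With the choice $\eps_1=\eps_0/3$ and $\eps_2=(\alpha-1)\eps_1/C_2$ for $\alpha\in(1,2]$ (or $\eps_2=\eps_1/C_2$ for $\alpha>2$) for the constant $C_2$ coming from~\cite{harvey2008sketching}, the ratio $\bar h/\hat h$ is a $(1+\eps_0)$-multiplicative approximation of $t-1$, and the hypothesis of Lemma~5.7 of~\cite{harvey2008sketching} is met, so $h=\frac{1}{\alpha-1}\log\bigl(1+\bar h/\hat h\bigr)$ is a $(1+C_1\eps_0)=(1+\eps)$-multiplicative approximation of $\ren_{\alpha}(P\cap R)$.

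The main obstacle is the propagation of multiplicative errors through the subtraction $h_1-h_2$: when $p_i$ is close to $1$, both $h_1$ and $h_2$ tend to $0$, and their difference $1-\sum_j p_j^{\alpha}$ can be much smaller than either summand, so a naive plug-in would amplify the relative error. This is precisely the regime that Lemma~5.7 of~\cite{harvey2008sketching} is tailored for, and the crucial quantitative input that lets the argument go through for every $\alpha>1$ (rather than only $\alpha\in(1,2]$ as in the streaming setting) is that in our query setting $h_1$ is computed \emph{exactly}, so the only error in $\bar h$ comes from the small term $h_2$, which is already controlled multiplicatively by Lemma~\ref{lem:moments2}. A final union bound over the $O(1)$ randomized subroutines (detection, $\hat h$, $h_2$) transfers each high-probability guarantee into the stated high-probability multiplicative approximation for $h$.
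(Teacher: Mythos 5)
Your overall strategy is the same as the paper's: detect a dominating color via Lemma~\ref{lem:ProbEntr}, handle the no-dominating-color case by lower-bounding $\ren_\alpha(P\cap R)\ge\log(3/2)$ and converting the additive guarantee of Theorem~\ref{thm:Renadd-approx} into a multiplicative one, and otherwise approximate $t-1$ with $t=1/\sum_j p_j^\alpha$ and pass through Lemma~5.7 of~\cite{harvey2008sketching}. Case~1 and the $\alpha\in(1,2]$ branch of Case~2 (where you defer, as the paper does, to the subtraction analysis of~\cite{harvey2008sketching} with $\eps_2=(\alpha-1)\eps_1/C_2$) are fine.

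The gap is in the $\alpha>2$ branch of Case~2, which is precisely the regime where the paper goes beyond~\cite{harvey2008sketching} and therefore cannot simply cite it. You correctly compute $\bar h\in\bigl[(1-\sum_j p_j^{\alpha})-\eps_2\sum_{j\ne i}p_j^{\alpha},\,1-\sum_j p_j^{\alpha}\bigr]$, but then assert that choosing $\eps_2=\eps_1/C_2$ makes $\bar h$ a $(1+\eps_1)$-relative approximation of $1-\sum_j p_j^\alpha$. That requires the absolute error $\eps_2\sum_{j\ne i}p_j^{\alpha}$ to be at most $\eps_1\bigl(1-\sum_j p_j^\alpha\bigr)$, i.e.\ that $\sum_{j\ne i}p_j^{\alpha}$ is at most a constant multiple of $1-\sum_j p_j^\alpha$; your remark that ``$h_1$ is exact, so the only error comes from the small term $h_2$'' identifies the right phenomenon but does not establish that $h_2$'s target is in fact small relative to the difference. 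The paper supplies the missing inequality explicitly: since $p_i>2/3$ forces $p_j\le 1/3$ for $j\ne i$, one has $p_j^{\alpha}\le p_j\cdot(1/3)^{\alpha-1}\le p_j/3$ for $\alpha\ge 2$, hence
\[
\frac{\sum_{j\ne i}p_j^{\alpha}}{1-p_i^{\alpha}}\;\le\;\frac{\sum_{j\ne i}p_j^{\alpha}}{1-p_i}\;\le\;\frac{\tfrac{1}{3}\sum_{j\ne i}p_j}{1-p_i}\;=\;\frac{1}{3},
\]
which gives $1-\sum_j p_j^\alpha\ge\tfrac{2}{3}(1-p_i^\alpha)\ge 2\sum_{j\ne i}p_j^\alpha$ and makes the constant $C_2$ work. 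Without this (or an equivalent) bound, the subtraction step for $\alpha>2$ is unjustified. As a very minor secondary point, in Case~1 an additive error of $\eps\,\ren_\alpha$ yields the lower bound $(1-\eps)\ren_\alpha$, which is slightly weaker than the stated $\tfrac{1}{1+\eps}\ren_\alpha$; this is a harmless constant rescaling that the paper also elides.
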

\begin{proof}
Using the proof of Lemma~\ref{lem:ProbEntr} we correctly decide whether there exists a color $u_i\in U$ such that $\frac{N_i}{N}>2/3$, with high probability. 

If there is no color $u_i$ with $\frac{N_i}{N}>2/3$, then $\ren_{\alpha}(P\cap R)\geq \log\frac{1}{\max_{u_j\in u(P\cap R)}N_j/N}\geq \log(3/2)$. Therefore, the additive $\log(3/2)\cdot \eps$ approximation $h_{\textsf{add}}$ returns a multiplicative $(1+\eps)$ approximation.

Next, we assume that there exists a color $u_i$ satisfying $\frac{N_i}{N}>2/3$.
In~\cite{harvey2008sketching}, (Lemma 5.7) the authors show that for any real number $t>4/9$, it suffices to have  multiplicative $(1+\eps_0)$-approximation to $t-1$, to compute a multiplicative $(1+\eps)$ approximation to $\log(t)$.
In our proof we set $t=\frac{1}{\sum_{u_j\in u(P\cap R)}\left(\frac{N_j}{N}\right)^\alpha}>1$. 
If we show that $\frac{1}{1+\eps_0}(t-1)\leq \frac{\bar{h}}{\hat{h}}\leq (1+\eps_0)(t-1)$, then the result follows.

We note that,
$$t-1=\frac{1}{\sum_{u_j\in u(P\cap R)}\left(\frac{N_j}{N}\right)^\alpha}-1=\frac{1-\sum_{u_j\in u(P\cap R)}\left(\frac{N_j}{N}\right)^\alpha}{\sum_{u_j\in u(P\cap R)}\left(\frac{N_j}{N}\right)^\alpha}.$$
From Lemma~\ref{lem:moments} and definition of $\hat{h}$, we have $\sum_{u_j\in u(P\cap R)}\!\!\left(\frac{N_j}{N}\right)^\alpha\!\!\!\leq \!\hat{h}\!\leq \!\!(1+\eps_1)\!\sum_{u_j\in u(P\cap R)}\!\!\left(\frac{N_j}{N}\right)^\alpha$. Hence, we have a good estimation of the denominator. Next we focus on the nominator $1-\sum_{u_j\in u(P\cap R)}\left(\frac{N_j}{N}\right)^\alpha$.
We consider two cases, $\alpha\in(1,2]$ and $\alpha>2$.
We can re-write it as $1-\left(\frac{N_i}{N}\right)^\alpha - \sum_{u_j\in u(P\cap R)\setminus\{u_i\}}\left(\frac{N_j}{N}\right)^\alpha$.

In~\cite{harvey2008sketching}, they consider the case where $\alpha\in(1,2]$. They show that if we compute a $(1+\eps_2)$ multiplicative approximation of $1-\left(\frac{N_i}{N}\right)^\alpha$, denoted by $\beta_1$, and a $(1+\eps_2)$ multiplicative approximation of $\sum_{u_j\in u(P\cap R)\setminus\{u_i\}}\left(\frac{N_j}{N}\right)^\alpha$, denoted by $\beta_2$, then $\beta_1-\beta_2$ is a $(1+\eps_1)$ multiplicative approximation of $1-\sum_{u_j\in u(P\cap R)}\left(\frac{N_j}{N}\right)^\alpha$.
Recall that $h_1=1-\left(\frac{N_i}{N}\right)^\alpha$ so this is an exact estimator of $1-\left(\frac{N_i}{N}\right)^\alpha$. We show that $h_2$ is a $(1+\eps_2)$ multiplicative approximation of $\sum_{u_j\in u(P\cap R)\setminus\{u_i\}}\left(\frac{N_j}{N}\right)^\alpha$. By Lemma~\ref{lem:moments2}, and by the definition of $h'$ we have that $\sum_{u_j\in u(P\cap R)\setminus\{u_1\}}\left(\frac{N_j}{N-N_i}\right)^\alpha\leq h'\leq (1+\eps_2)\sum_{u_j\in u(P\cap R)\setminus\{u_1\}}\left(\frac{N_j}{N-N_i}\right)^\alpha$. Notice that $h_2=h'\cdot\frac{(N-N_i)^\alpha}{N^\alpha}$. So, $\sum_{u_j\in u(P\cap R)\setminus\{u_1\}}\left(\frac{N_j}{N}\right)^\alpha\leq h_2\leq (1+\eps_2)\sum_{u_j\in u(P\cap R)\setminus\{u_1\}}\left(\frac{N_j}{N}\right)^\alpha$. Hence, we have that
$\frac{1}{1+\eps_1}\left(1-\sum_{u_j\in u(P\cap R)}\left(\frac{N_j}{N}\right)^\alpha\right)\leq \bar{h}\leq (1+\eps_1)\left(1-\sum_{u_j\in u(P\cap R)}\left(\frac{N_j}{N}\right)^\alpha\right)$.
Next, we have $$\frac{\bar{h}}{\hat{h}}\leq \frac{(1+\eps_1)\left(1-\sum_{u_j\in u(P\cap R)}\left(\frac{N_j}{N}\right)^\alpha\right)}{\sum_{u_j\in u(P\cap R)}\left(\frac{N_j}{N}\right)^\alpha}=(1+\eps_1)(t-1)\leq (1+\eps_0)(t-1),$$
and
$$\frac{\bar{h}}{\hat{h}}\geq\frac{\frac{1}{1+\eps_1}\left(1-\sum_{u_j\in u(P\cap R)}\left(\frac{N_j}{N}\right)^\alpha\right)}{(1+\eps_1)\sum_{u_j\in u(P\cap R)}\left(\frac{N_j}{N}\right)^\alpha}=\frac{1}{(1+\eps_1)^2}(t-1)\geq \frac{1}{1+\eps_0}(t-1).$$
Hence, we conclude $\frac{1}{1+\eps_0}(t-1)\leq \frac{\bar{h}}{\hat{h}}\leq (1+\eps_0)(t-1)$, and the result follows.

Next, we show that the analysis also holds for $\alpha>2$. Recall that $\eps_2=\eps_1/C_2$.
For any $x\in(0,1/3]$, we have $\frac{x^\alpha}{x}\leq \left(\frac{1}{3}\right)^{\alpha-1}\leq 1-\frac{2}{3}$. Hence,
$$\frac{\sum_{u_j\in u(P\cap R)\setminus\{u_i\}}\left(\frac{N_j}{N}\right)^\alpha}{1-\left(\frac{N_i}{N}\right)^\alpha}\leq \frac{\sum_{u_j\in u(P\cap R)\setminus\{u_i\}}\left(\frac{N_j}{N}\right)^\alpha}{1-\frac{N_i}{N}}\leq \frac{\sum_{u_j\in u(P\cap R)\setminus\{u_i\}}\frac{N_j}{N}(1-2/3)}{1-\frac{N_i}{N}}=1-\frac{2}{3}.$$
This implies that if we compute a multiplicative $(1+\eps_2)$-approximation to $1-\left(\frac{N_i}{N}\right)^\alpha$ and a multiplicative $(1+\eps_2)$-approximation to $\sum_{u_j\in u(P\cap R)\setminus\{u_i\}}\left(\frac{N_j}{N}\right)^\alpha$, we can compute a multiplicative $(1+\eps_1)$-approximation to $1-\sum_{u_j\in u(P\cap R)}\left(\frac{N_j}{N}\right)^\alpha$. The result follows by repeating the same analysis as for $\alpha\in(1,2]$.
\end{proof}

\paragraph{Analysis.}
We compute $P_S$ and identify whether there exists color $u_i$ with $N_i/N>2/3$ in $O(\log^{d+1} n)$ time. If there is no color $u_i$ with $N_i/N<2/3$ then the additive approximation query from Theorem~\ref{thm:Renadd-approx} runs in $O\left(\frac{\alpha}{(\alpha-1)^2\eps^2}\cdot n^{1-1/\alpha}\log^{d+1} n\right)$ time if $\alpha\in (1,2]$, and $O\left(\frac{\alpha}{\eps^2}\cdot n^{1-1/\alpha}\log^{d+1} n\right)$ time if $\alpha>2$. If there is a color $u_i$ with $N_i/N>2/3$ then we run a query from Lemma~\ref{lem:moments} and a query from Lemma~\ref{lem:moments2} in $O\left(\frac{\alpha\cdot n^{1-1/\alpha}}{\eps^2}\log^{d+1}n\right)$ time. In total, the query procedure takes $O\left(\frac{\alpha}{(\alpha-1)^2\eps^2}\cdot n^{1-1/\alpha}\log^{d+1} n\right)$ time if $\alpha\in (1,2]$, and $O\left(\frac{\alpha}{\eps^2}\cdot n^{1-1/\alpha}\log^{d+1} n\right)$ time if $\alpha>2$.

\begin{thm}
\label{thm:Renmult-approx}
Let $P$ be a set of $n$ points in $\Re^d$, where each point is associated with a color. A data structure of $O(n\log^{d}n)$ size can be constructed in $O(n\log^{d} n)$ time, such that given a query hyper-rectangle $R$, a real parameter $\alpha>1$ and a real parameter $\eps\in(0,1)$, a value $h$ can be computed such that $\frac{1}{1+\eps}\ren_{\alpha}(P\cap R)\leq h\leq (1+\eps)\ren_{\alpha}(P\cap R)$, with high probability. The query time is $O\left(\frac{\alpha}{(\alpha-1)^2\eps^2}\cdot n^{1-1/\alpha}\log^{d+1} n\right)$ time if $\alpha\in (1,2]$, and $O\left(\frac{\alpha}{\eps^2}\cdot n^{1-1/\alpha}\log^{d+1} n\right)$ time if $\alpha>2$.
\end{thm}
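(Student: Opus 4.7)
The plan is to follow the blueprint laid out just before the theorem statement: build a composite structure that lets us either apply the faster additive estimator of Theorem~\ref{thm:Renadd-approx} (when the entropy is bounded away from $0$) or run a streaming-style estimator inspired by~\cite{harvey2008sketching} (when one color dominates). Concretely, I would store one range tree $\mathcal{T}_i$ per color for counting, a global counting tree $\mathcal{T}$, the uniform sampling tree $\mathcal{S}$ from Section~\ref{sec:prelim}, its color-excluding variant $\bar{\mathcal{S}}$ from Subsection~\ref{subsec:ApproxMult}, plus the frequency-moment data structure of Lemma~\ref{lem:moments2}. All these can be built in $O(n\log^d n)$ time and $O(n\log^d n)$ space.

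For the query, first draw $\Theta(\log n)$ uniform samples from $P\cap R$ via $\mathcal{S}$ and evaluate their color frequencies via $\mathcal{T}$ and the $\mathcal{T}_i$'s, in order to detect whether some color $u_i$ satisfies $N_i/N>2/3$. By the argument of Lemma~\ref{lem:ProbEntr}, this decision is correct with high probability. If no such color exists, then every probability is at most $2/3$, so $\ren_\alpha(P\cap R)\ge\log(3/2)$; invoking Theorem~\ref{thm:Renadd-approx} with $\Delta=\log(3/2)\,\eps$ produces an additive estimate that is automatically a multiplicative $(1+\eps)$-approximation, and its query time matches the stated bound.

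If a heavy color $u_i$ is found, set $t=1/\sum_j (N_j/N)^\alpha$, so $\ren_\alpha=\tfrac{1}{\alpha-1}\log t$. Following the reduction of Lemma~5.7 of~\cite{harvey2008sketching}, it suffices to produce an $(1+\eps_0)$-multiplicative approximation of $t-1$ with $\eps_0=\eps/C_1$; writing $t-1=(1-\sum_j(N_j/N)^\alpha)/\sum_j(N_j/N)^\alpha$, estimate the denominator by invoking Lemma~\ref{lem:moments} to get $\hat h$, and the numerator by decomposing it as $h_1-h_2$ where $h_1=1-(N_i/N)^\alpha$ is computed exactly using $\mathcal{T}$ and $\mathcal{T}_i$, and $h_2$ is obtained by feeding the output of Lemma~\ref{lem:moments2} (the $\alpha$-th moment over $P\cap R$ excluding $u_i$) scaled by $(N-N_i)^\alpha/N^\alpha$. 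Return $h=\tfrac{1}{\alpha-1}\log\!\bigl(\bar h/\hat h+1\bigr)$.

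The main obstacle is the error analysis for $\bar h/\hat h$ in the regime $\alpha>2$. The analysis in~\cite{harvey2008sketching} only propagates multiplicative approximations of $h_1$ and $h_2$ into a multiplicative approximation of $h_1-h_2$ when the ratio $h_2/h_1$ is bounded away from $1$; for $\alpha\in(1,2]$ they absorb the subtractive loss by setting $\eps_2=(\alpha-1)\eps_1/C_2$, which is precisely the source of the $1/(\alpha-1)^2$ factor in the query time. For $\alpha>2$, I would close the gap by using the heavy-color condition $N_i/N\ge 2/3$ to show that $h_2/h_1\le 1-2/3$: for any $x\in(0,1/3]$ we have $x^\alpha/x\le (1/3)^{\alpha-1}\le 1/3$, so $\sum_{j\ne i}(N_j/N)^\alpha\le \tfrac{1}{3}\sum_{j\ne i}N_j/N=\tfrac{1}{3}(1-N_i/N)\le\tfrac{1}{3}(1-(N_i/N)^\alpha)$, and therefore plain $\eps_2=\eps_1/C_2$ suffices, yielding the advertised $\alpha/\eps^2$ dependence without an $(\alpha-1)^{-2}$ penalty. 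Combining the time for sampling, the two invocations of Lemmas~\ref{lem:moments} and~\ref{lem:moments2}, and the case-splitting cost gives the stated query complexity in both ranges of $\alpha$.
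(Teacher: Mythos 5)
Your proposal is correct and follows essentially the same route as the paper: the same composite data structure, the same heavy-color detection via $O(\log n)$ samples, the same fallback to the additive estimator with $\Delta=\log(3/2)\eps$ when no color dominates, and the same reduction via Lemma~5.7 of~\cite{harvey2008sketching} with the exact computation of $h_1$ and the two frequency-moment estimators in the heavy-color case. Your argument for the $\alpha>2$ regime (bounding $h_2/h_1\le 1/3$ using $N_i/N>2/3$ so that $\eps_2=\eps_1/C_2$ suffices) is precisely the paper's argument, and the error and time accounting match.
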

This data structure can be made dynamic under arbitrary insertions and deletions of points using well known techniques~\cite{ bentley1980decomposable, erickson2011static, overmars1983design, overmars1981worst}. The update time is $O(\log^d n)$.
\section{Partitioning Using the (Expected) Shannon Entropy}
\label{sec:partition}
The new data structures can be used to accelerate some known partitioning algorithms with respect to the (expected) Shannon entropy.
\newcommand{\DS}{\textsf{DS}}
Let $\DS$ be one of our new data structures over $n$ items that can be constructed in $O(P(n))$ time, has $O(S(n))$ space, and given a query range $R$, returns a value $h$ in $O(Q(n))$ time such that $\frac{1}{\alpha}H-\beta \leq h\leq \alpha\cdot H+\beta$, where $H$ is the Shannon entropy of the items in $R$, and $\alpha\geq 1$, $\beta\geq 0$ two error thresholds.
On the other hand, the straightforward way to compute the (expected) entropy without using any data structure has preprocessing time $O(1)$, query time $O(n)$ and it returns the exact Shannon entropy in a query range.

\newcommand{\MaxPart}{\textsf{MaxPart}}
\newcommand{\SumPart}{\textsf{SumPart}}
\new{In most cases, we use the expected entropy to partition the dataset, as this is standard in entropy-based partitioning and clustering algorithms. Aside from being a useful quantity that bounds both the uncertainty and the size of a bucket, it is also monotone.} All our data structures can work for both the Shannon entropy and expected Shannon entropy quantity almost verbatim.
We define two optimization problems. Let $\MaxPart$ be the problem of constructing a partitioning with $k$ buckets that maximizes/minimizes the maximum (expected) entropy in a bucket. Let $\SumPart$ be the problem of constructing a partitioning with $k$ buckets that maximizes/minimizes the sum of (expected) entropies over the buckets.
For simplicity, in order to compare the running times, we skip the $\log(n)$ factors from the running times. \new{We use $\O(\cdot)$ to hide $\polylog (n)$ factors from the running time.}

\newcommand{\Err}{\textsf{Error}}
\newcommand{\DP}{\textsf{DP}}

\paragraph{Partitioning for $d=1$.}
We can easily solve $\MaxPart$ using dynamic programming: $\DP[i,j]=\min_{\ell<i}\max\{\DP[i-\ell,j-1], \Err[i-\ell+1,i])\}$, where $\DP[i,j]$ is the minimum max entropy of the first $i$ items using $j$ buckets, and $\Err[i,j]$ is the expected entropy among the items $i$ and $j$.
Since $\Err$ is monotone, we can find the optimum $\DP[i,j]$ running a binary search on $\ell$, i.e., we do not need to visit all indexes $\ell<i$ one by one to find the optimum. Without using any data structure the running time to find $\DP[n,k]$ is $\O(kn^2)$. Using $\DS$, the running time for partitioning is $\O(P(n)+knQ(n))$. If we use the data structure from Section~\ref{subsec:DS1} for $t=0.5$, then the running time is $\O\left(kn\sqrt{n}\right)=o(kn^2)$.

Next we consider approximation algorithms for the $\MaxPart$ and $\SumPart$ problems.

It is easy to observe that the maximum value and the minimum non-zero value of the optimum solution of $\MaxPart$ are bounded polynomially on $n$. Let $[l_M, r_M]$ be the range of the optimum values.
We discretize the range $[l_M, r_M]$ by a multiplicative factor $(1+\eps)$. We run a binary search on the discrete values. For each value $e\in[l_M, r_M]$ we consider, we construct a new bucket by running another binary search on the input items, trying to expand the bucket until its expected entropy is at most $e$. We repeat the same for all buckets and we decide if we should increase or decrease the error $e$ in the next iteration. In the end, the solution we find is within an $(1+\eps)$ factor far from the max expected entropy in the optimum partitioning.
Without using any data structure, we need $\O(n\log\frac{1}{\eps})$ time to construct the partitioning. If we use $\DS$ we need time $\O\left(P(n)+kQ(n)\log\frac{1}{\eps}\right)$. If we use the data structure in Subsection~\ref{subsec:ApproxMult} we have partition time $\O\left(n+\frac{k}{\eps^2}\log\frac{1}{\eps}\right)=o\left(n\log\frac{1}{\eps}\right)$. If we allow a $\Delta$ additive approximation in addition to the $(1+\eps)$ multiplicative approximation, we can use the data structure in Subsection~\ref{subsec:ApproxAdd} having partition time $\O\left(n+\frac{k}{\Delta^2}\log\frac{1}{\eps}\right)=o\left(n\log\frac{1}{\eps}\right)$.

Next, we focus on the $\SumPart$ problem.
It is known from~\cite{guha2006approximation} (Theorems 5, 6) that if the error function is monotone (such as the expected entropy) then we can get a partitioning with $(1+\eps)$-multiplicative approximation in $\O\left(P(n)+\frac{k^3}{\eps^2}Q(n)\right)$ time.
Hence, the straightforward solution without using a data structure returns an $(1+\eps)$-approximation of the optimum partitioning in $\O\left(\frac{k^3}{\eps^2}n\right)$ time.
If we use the data structure from Subsection~\ref{subsec:ApproxMult} we have running time $\O\left(n+\frac{k^3}{\eps^4}\right)=o\left(\frac{k^3}{\eps^2}n\right)$ with multiplicative error $(1+\eps)^2$. If we set $\eps\leftarrow \eps/3$ then in the same asymptotic running time we have error $(1+\eps)$.
If we also allow $\Delta\cdot n$ additive approximation, we can use the additive approximation $\DS$ from Subsection~\ref{subsec:ApproxAdd}. The running time will be $\O\left(n+\frac{k^3}{\eps^2\Delta^2}\right)=o\left(\frac{k^3}{\eps^2}n\right)$.

\paragraph{Partitioning for $d>1$.}
Partitioning and constructing histograms in high dimensions is usually a challenging task, since most of the known algorithms with theoretical guarantees are very expensive~\cite{cormode2011synopses}. However, there is a practical method with some conditional error guarantees, that works very well in any constant dimension $d$ and it has been used in a few papers~\cite{baltrunas2006multi, liang2022janusaqp, liang2021combining}.
The idea is to construct a tree having a rectangle containing all points in the root. In each iteration of the algorithm, we choose to split (on the median in each coordinate or find the best split) the (leaf) node with the minimum/maximum (expected) entropy.
As stated in previous papers, let make the assumption that an optimum algorithm for either $\MaxPart$ or $\SumPart$ is an algorithm that always chooses to split the leaf node with the smallest/largest expected entropy.
Using the straightforward solution without data structures, we can construct an ``optimum'' partitioning in $O(kn)$ time by visiting all points in every newly generated rectangle.
Using $\DS$, the running time of the algorithm is $O(P(n)+kQ(n))$.
In order to get an optimum solution we use $\DS$ from Subsection~\ref{subsec:DSd}. The overall running time is $O(n^{(2d-1)t+1}+kn^{1-t})$. This is minimized for $n^{(2d-1)t+1}=kn^{1-t}\Leftrightarrow t=t^*=\frac{\log k}{2d\log n}$, so the overall running time is $O(kn^{1-t^*})=o(kn)$.
If we allow $(1+\eps)$-multiplicative approximation we can use the $\DS$ from Subsection~\ref{subsec:ApproxMult}. The running time will be $\O\left(n+\frac{k}{\eps^2}\right)=o(kn)$.
If we allow a $\Delta$-additive approximation, then we can use the $\DS$ from Subsection~\ref{subsec:ApproxAdd} with running time $\O\left(n+\frac{k}{\Delta^2}\right)=o(kn)$.
\section{Conclusion}
\label{sec:conclusion}
In this work, we presented efficient data structures for computing (exactly and approximately) the Shannon and \renyi entropy of the points in a rectangular query in sub-linear time.
Using our new data structures we can accelerate partitioning algorithms for columnar compression (Example~\ref{ex1}) and histogram construction (Example~\ref{ex2}). Furthermore, we can accelerate the exploration of high uncertainty regions for data cleaning (Example~\ref{ex3}).

There are multiple interesting open problems derived from this work. i) Our approximate data structures are dynamic
but our exact data structures are static. Is it possible to design dynamic data structures for returning the exact entropy? 
ii) There remains a gap between the proposed lower and upper bounds of our exact data structures, and closing this gap is an interesting open problem.
iii) Can we extend the faster deterministic approximation data structures from Subsection~\ref{subsec:ApproxAddMult1} and Subsection~\ref{subsec:RenApprox2} in higher dimensions?

\bibliographystyle{alphaurl}
\bibliography{ref}

\newcommand{\etalchar}[1]{$^{#1}$}
\begin{thebibliography}{DBVKOS08}

\bibitem[ACRW23]{afshani2023range}
Peyman Afshani, Pingan Cheng, Aniket~Basu Roy, and Zhewei Wei.
\newblock On range summary queries.
\newblock In {\em 50th International Colloquium on Automata, Languages, and Programming (ICALP 2023)}, pages 7--1, 2023.

\bibitem[AE{\etalchar{+}}99]{agarwal1999geometric}
Pankaj~K Agarwal, Jeff Erickson, et~al.
\newblock Geometric range searching and its relatives.
\newblock {\em Contemporary Mathematics}, 223(1):56, 1999.

\bibitem[AFK{\etalchar{+}}24]{abboud2024new}
Amir Abboud, Nick Fischer, Zander Kelley, Shachar Lovett, and Raghu Meka.
\newblock New graph decompositions and combinatorial boolean matrix multiplication algorithms.
\newblock In {\em Proceedings of the 56th Annual ACM Symposium on Theory of Computing}, pages 935--943, 2024.

\bibitem[Aga17]{agarwal2017range}
Pankaj~K Agarwal.
\newblock Range searching.
\newblock In {\em Handbook of discrete and computational geometry}, pages 1057--1092. Chapman and Hall/CRC, 2017.

\bibitem[AKSS16]{agarwal2016range}
Pankaj~K Agarwal, Nirman Kumar, Stavros Sintos, and Subhash Suri.
\newblock Range-max queries on uncertain data.
\newblock In {\em Proceedings of the 35th ACM SIGMOD-SIGACT-SIGAI Symposium on Principles of Database Systems}, pages 465--476, 2016.

\bibitem[AKSS18]{agarwal2018range}
Pankaj~K Agarwal, Nirman Kumar, Stavros Sintos, and Subhash Suri.
\newblock Range-max queries on uncertain data.
\newblock {\em Journal of Computer and System Sciences}, 94:118--134, 2018.

\bibitem[AMS96]{alon1996space}
Noga Alon, Yossi Matias, and Mario Szegedy.
\newblock The space complexity of approximating the frequency moments.
\newblock In {\em Proceedings of the twenty-eighth annual ACM symposium on Theory of computing}, pages 20--29, 1996.

\bibitem[AOST16]{acharya2016estimating}
Jayadev Acharya, Alon Orlitsky, Ananda~Theertha Suresh, and Himanshu Tyagi.
\newblock Estimating r{\'e}nyi entropy of discrete distributions.
\newblock {\em IEEE Transactions on Information Theory}, 63(1):38--56, 2016.

\bibitem[AP19]{afshani2019independent}
Peyman Afshani and Jeff~M Phillips.
\newblock Independent range sampling, revisited again.
\newblock In {\em 35th International Symposium on Computational Geometry (SoCG 2019)}. Schloss Dagstuhl-Leibniz-Zentrum fuer Informatik, 2019.

\bibitem[AW17]{afshani2017independent}
Peyman Afshani and Zhewei Wei.
\newblock Independent range sampling, revisited.
\newblock In {\em 25th Annual European Symposium on Algorithms (ESA 2017)}. Schloss Dagstuhl-Leibniz-Zentrum fuer Informatik, 2017.

\bibitem[BDKR02]{batu2002complexity}
Tuǧkan Batu, Sanjoy Dasgupta, Ravi Kumar, and Ronitt Rubinfeld.
\newblock The complexity of approximating entropy.
\newblock In {\em Proceedings of the thiry-fourth annual ACM symposium on Theory of computing}, pages 678--687, 2002.

\bibitem[Ben78]{bentley1978decomposable}
Jon~Louis Bentley.
\newblock Decomposable searching problems.
\newblock Technical report, 1978.

\bibitem[BG06]{bhuvanagiri2006estimating}
Lakshminath Bhuvanagiri and Sumit Ganguly.
\newblock Estimating entropy over data streams.
\newblock In {\em Algorithms--ESA 2006: 14th Annual European Symposium, Zurich, Switzerland, September 11-13, 2006. Proceedings 14}, pages 148--159. Springer, 2006.

\bibitem[BGWSB19]{ben2019clustering}
Irad Ben-Gal, Shahar Weinstock, Gonen Singer, and Nicholas Bambos.
\newblock Clustering users by their mobility behavioral patterns.
\newblock {\em ACM Transactions on Knowledge Discovery from Data (TKDD)}, 13(4):1--28, 2019.

\bibitem[BKOS97]{berg1997computational}
Mark~de Berg, Marc~van Kreveld, Mark Overmars, and Otfried Schwarzkopf.
\newblock Computational geometry.
\newblock In {\em Computational geometry}, pages 1--17. Springer, 1997.

\bibitem[BLC02]{barbara2002coolcat}
Daniel Barbar{\'a}, Yi~Li, and Julia Couto.
\newblock Coolcat: an entropy-based algorithm for categorical clustering.
\newblock In {\em Proceedings of the eleventh international conference on Information and knowledge management}, pages 582--589, 2002.

\bibitem[BMB06]{baltrunas2006multi}
Linas Baltrunas, Arturas Mazeika, and Michael Bohlen.
\newblock Multi-dimensional histograms with tight bounds for the error.
\newblock In {\em 2006 10th International Database Engineering and Applications Symposium (IDEAS'06)}, pages 105--112. IEEE, 2006.

\bibitem[BPP12]{bosyk2012collision}
Gustavo~Mart{\'\i}n Bosyk, M~Portesi, and A~Plastino.
\newblock Collision entropy and optimal uncertainty.
\newblock {\em Physical Review A—Atomic, Molecular, and Optical Physics}, 85(1):012108, 2012.

\bibitem[BS80]{bentley1980decomposable}
Jon~Louis Bentley and James~B Saxe.
\newblock Decomposable searching problems i. static-to-dynamic transformation.
\newblock {\em Journal of Algorithms}, 1(4):301--358, 1980.

\bibitem[CBP11]{cruz2011entropy}
Juan~David Cruz, C{\'e}cile Bothorel, and Fran{\c{c}}ois Poulet.
\newblock Entropy based community detection in augmented social networks.
\newblock In {\em 2011 International Conference on computational aspects of social networks (CASoN)}, pages 163--168. IEEE, 2011.

\bibitem[CC13]{clifford2013simple}
Peter Clifford and Ioana Cosma.
\newblock A simple sketching algorithm for entropy estimation over streaming data.
\newblock In {\em Artificial Intelligence and Statistics}, pages 196--206. PMLR, 2013.

\bibitem[CCJ16]{chao2016phylogenetic}
Anne Chao, Chun-Huo Chiu, and Lou Jost.
\newblock Phylogenetic diversity measures and their decomposition: a framework based on hill numbers.
\newblock {\em Biodiversity Conservation and Phylogenetic Systematics}, 14:141--172, 2016.

\bibitem[CCM07]{chakrabarti2007near}
Amit Chakrabarti, Graham Cormode, and Andrew McGregor.
\newblock A near-optimal algorithm for computing the entropy of a stream.
\newblock In {\em SODA}, volume~7, pages 328--335. Citeseer, 2007.

\bibitem[CDBM06]{chakrabarti2006estimating}
Amit Chakrabarti, Khanh Do~Ba, and S~Muthukrishnan.
\newblock Estimating entropy and entropy norm on data streams.
\newblock {\em Internet Mathematics}, 3(1):63--78, 2006.

\bibitem[CDL{\etalchar{+}}14]{Chan2014}
Timothy~M. Chan, Stephane Durocher, Kasper~Green Larsen, Jason Morrison, and Bryan~T. Wilkinson.
\newblock Linear-space data structures for range mode query in arrays.
\newblock {\em Theor. Comp. Sys.}, 55(4):719–741, 2014.

\bibitem[CGH{\etalchar{+}}11]{cormode2011synopses}
Graham Cormode, Minos Garofalakis, Peter~J Haas, Chris Jermaine, et~al.
\newblock Synopses for massive data: Samples, histograms, wavelets, sketches.
\newblock {\em Foundations and Trends{\textregistered} in Databases}, 4(1--3):1--294, 2011.

\bibitem[CHN20]{chan2020further}
Timothy~M Chan, Qizheng He, and Yakov Nekrich.
\newblock Further results on colored range searching.
\newblock {\em Proc. Sympos. Computational Geometry (SoCG)}, 2020.

\bibitem[CJ15]{chao2015estimating}
Anne Chao and Lou Jost.
\newblock Estimating diversity and entropy profiles via discovery rates of new species.
\newblock {\em Methods in Ecology and Evolution}, 6(8):873--882, 2015.

\bibitem[CKOS15]{caferov2015optimal}
Cafer Caferov, Bar{\i}{\c{s}} Kaya, Ryan O’Donnell, and AC~Say.
\newblock Optimal bounds for estimating entropy with pmf queries.
\newblock In {\em International Symposium on Mathematical Foundations of Computer Science}, pages 187--198. Springer, 2015.

\bibitem[CMI{\etalchar{+}}15]{chu2015katara}
Xu~Chu, John Morcos, Ihab~F Ilyas, Mourad Ouzzani, Paolo Papotti, Nan Tang, and Yin Ye.
\newblock Katara: A data cleaning system powered by knowledge bases and crowdsourcing.
\newblock In {\em Proceedings of the 2015 ACM SIGMOD International Conference on Management of Data}, pages 1247--1261, 2015.

\bibitem[CR14]{canonne2014testing}
Cl{\'e}ment Canonne and Ronitt Rubinfeld.
\newblock Testing probability distributions underlying aggregated data.
\newblock In {\em International Colloquium on Automata, Languages, and Programming}, pages 283--295. Springer, 2014.

\bibitem[DBVKOS08]{de1997computational}
M.~De~Berg, M.~Van~Kreveld, M.~Overmars, and O.~C. Schwarzkopf.
\newblock {\em Computational Geometry: Algorithms and Applications}.
\newblock Springer, 3rd edition, 2008.

\bibitem[DSW12]{davoodi2012two}
Pooya Davoodi, Michiel Smid, and Freek~van Walderveen.
\newblock Two-dimensional range diameter queries.
\newblock In {\em Latin American Symposium on Theoretical Informatics}, pages 219--230. Springer, 2012.

\bibitem[EGRS25]{esmailpour2025theoretical}
Aryan Esmailpour, Sainyam Galhotra, Rahul Raychaudhury, and Stavros Sintos.
\newblock A theoretical framework for distribution-aware dataset search.
\newblock {\em Proceedings of the ACM on Management of Data}, 3(2):1--26, 2025.

\bibitem[Eri]{erickson2011static}
J.~Erickson.
\newblock Static-to-dynamic transformations.
\newblock Lecture notes.

\bibitem[ES06]{efraimidis2006weighted}
Pavlos~S Efraimidis and Paul~G Spirakis.
\newblock Weighted random sampling with a reservoir.
\newblock {\em Information processing letters}, 97(5):181--185, 2006.

\bibitem[est15]{estfrmomstr}
{Estimating Frequency Moments of Streams}.
\newblock \url{https://courses.cs.duke.edu/fall15/compsci590.4/slides/lec7.pdf}, 2015.
\newblock [Online; accessed 26-Sep-2025].

\bibitem[GJRS18]{gupta2018computational}
Prosenjit Gupta, Ravi Janardan, Saladi Rahul, and Michiel Smid.
\newblock Computational geometry: Generalized (or colored) intersection searching.
\newblock In {\em Handbook of Data Structures and Applications}, pages 1043--1058. Chapman and Hall/CRC, 2018.

\bibitem[GJS95]{gupta1995further}
Prosenjit Gupta, Ravi Janardan, and Michiel Smid.
\newblock Further results on generalized intersection searching problems: counting, reporting, and dynamization.
\newblock {\em Journal of Algorithms}, 19(2):282--317, 1995.

\bibitem[GKS06]{guha2006approximation}
Sudipto Guha, Nick Koudas, and Kyuseok Shim.
\newblock Approximation and streaming algorithms for histogram construction problems.
\newblock {\em ACM Transactions on Database Systems (TODS)}, 31(1):396--438, 2006.

\bibitem[GLP19]{goldstein_set_intersection}
Isaac Goldstein, Moshe Lewenstein, and Ely Porat.
\newblock {On the Hardness of Set Disjointness and Set Intersection with Bounded Universe}.
\newblock In {\em 30th International Symposium on Algorithms and Computation (ISAAC 2019)}, volume 149, pages 7:1--7:22, 2019.
\newblock \href {https://doi.org/10.4230/LIPIcs.ISAAC.2019.7} {\path{doi:10.4230/LIPIcs.ISAAC.2019.7}}.

\bibitem[GMV06]{guha2006streaming}
Sudipto Guha, Andrew McGregor, and Suresh Venkatasubramanian.
\newblock Streaming and sublinear approximation of entropy and information distances.
\newblock In {\em Proceedings of the seventeenth annual ACM-SIAM symposium on Discrete algorithm}, pages 733--742, 2006.

\bibitem[HM24]{hansert2024partition}
Patrick Hansert and Sebastian Michel.
\newblock Partition, don't sort! compression boosters for cloud data ingestion pipelines.
\newblock {\em Proceedings of the VLDB Endowment}, 17(11):3456--3469, 2024.

\bibitem[HNO08]{harvey2008sketching}
Nicholas~JA Harvey, Jelani Nelson, and Krzysztof Onak.
\newblock Sketching and streaming entropy via approximation theory.
\newblock In {\em 2008 49th Annual IEEE Symposium on Foundations of Computer Science}, pages 489--498. IEEE, 2008.

\bibitem[HQT14]{hu2014independent}
Xiaocheng Hu, Miao Qiao, and Yufei Tao.
\newblock Independent range sampling.
\newblock In {\em Proceedings of the 33rd ACM SIGMOD-SIGACT-SIGART symposium on Principles of database systems}, pages 246--255, 2014.

\bibitem[JK14]{jizba2014multifractal}
Petr Jizba and Jan Korbel.
\newblock Multifractal diffusion entropy analysis: Optimal bin width of probability histograms.
\newblock {\em Physica A: Statistical Mechanics and its Applications}, 413:438--458, 2014.

\bibitem[KRS09]{konig2009operational}
Robert Konig, Renato Renner, and Christian Schaffner.
\newblock The operational meaning of min-and max-entropy.
\newblock {\em IEEE Transactions on Information theory}, 55(9):4337--4347, 2009.

\bibitem[KRSV07]{kaplan2007counting}
Haim Kaplan, Natan Rubin, Micha Sharir, and Elad Verbin.
\newblock Counting colors in boxes.
\newblock In {\em 18th Annual ACM-SIAM Symposium on Discrete Algorithms, SODA 2007}, pages 785--794. Association for Computing Machinery, 2007.

\bibitem[KS24]{krishnan2024range}
Sanjay Krishnan and Stavros Sintos.
\newblock Range entropy queries and partitioning.
\newblock In {\em 27th International Conference on Database Theory (ICDT 2024)}, 2024.

\bibitem[Lee02]{lee2002fast}
Lillian Lee.
\newblock Fast context-free grammar parsing requires fast boolean matrix multiplication.
\newblock {\em Journal of the ACM (JACM)}, 49(1):1--15, 2002.

\bibitem[LMO04]{li2004entropy}
Tao Li, Sheng Ma, and Mitsunori Ogihara.
\newblock Entropy-based criterion in categorical clustering.
\newblock In {\em Proceedings of the twenty-first international conference on Machine learning}, page~68, 2004.

\bibitem[LSK23]{liang2022janusaqp}
Xi~Liang, Stavros Sintos, and Sanjay Krishnan.
\newblock Janus{AQP}: Efficient partition tree maintenance for dynamic approximate query processing.
\newblock In {\em 2023 IEEE 39th International Conference on Data Engineering (ICDE)}, pages 572--584. IEEE, 2023.

\bibitem[LSSK21]{liang2021combining}
Xi~Liang, Stavros Sintos, Zechao Shang, and Sanjay Krishnan.
\newblock Combining aggregation and sampling (nearly) optimally for approximate query processing.
\newblock In {\em Proceedings of the 2021 International Conference on Management of Data}, pages 1129--1141, 2021.

\bibitem[Lue78]{lueker1978data}
George~S Lueker.
\newblock A data structure for orthogonal range queries.
\newblock In {\em 19th Annual Symposium on Foundations of Computer Science (sfcs 1978)}, pages 28--34. IEEE, 1978.

\bibitem[LZ11]{li2011new}
Ping Li and Cun-Hui Zhang.
\newblock A new algorithm for compressed counting with applications in shannon entropy estimation in dynamic data.
\newblock In {\em Proceedings of the 24th Annual Conference on Learning Theory}, pages 477--496. JMLR Workshop and Conference Proceedings, 2011.

\bibitem[Mar20]{martinez2020parallel}
Andres~Lopez Martinez.
\newblock Parallel minimum cuts: An improved crew pram algorithm.
\newblock {\em Master's thesis. KTH, School of Electrical Engineering and Computer Science (EECS)}, 2020.

\bibitem[MHK{\etalchar{+}}07]{markl2007consistent}
Volker Markl, Peter~J Haas, Marcel Kutsch, Nimrod Megiddo, Utkarsh Srivastava, and Tam~Minh Tran.
\newblock Consistent selectivity estimation via maximum entropy.
\newblock {\em The VLDB journal}, 16(1):55--76, 2007.

\bibitem[Nek14]{nekrich2014efficient}
Yakov Nekrich.
\newblock Efficient range searching for categorical and plain data.
\newblock {\em ACM Transactions on Database Systems (TODS)}, 39(1):1--21, 2014.

\bibitem[OS17]{obremski2017renyi}
Maciej Obremski and Maciej Skorski.
\newblock Renyi entropy estimation revisited.
\newblock In {\em Approximation, Randomization, and Combinatorial Optimization. Algorithms and Techniques-20th International Workshop, APPROX 2017 and 21st International Workshop, RANDOM 2017}, pages 20--1, 2017.

\bibitem[Ove83]{overmars1983design}
Mark~H Overmars.
\newblock {\em The design of dynamic data structures}, volume 156.
\newblock Springer Science \& Business Media, 1983.

\bibitem[OvL81]{overmars1981worst}
Mark~H Overmars and Jan van Leeuwen.
\newblock Worst-case optimal insertion and deletion methods for decomposable searching problems.
\newblock {\em Information Processing Letters}, 12(4):168--173, 1981.

\bibitem[PR10]{patrascu2010distance}
Mihai Patrascu and Liam Roditty.
\newblock Distance oracles beyond the thorup-zwick bound.
\newblock In {\em 2010 IEEE 51st Annual Symposium on Foundations of Computer Science}, pages 815--823. IEEE, 2010.

\bibitem[Rah17]{rahul2017approximate}
Saladi Rahul.
\newblock Approximate range counting revisited.
\newblock In {\em 33rd International Symposium on Computational Geometry (SoCG 2017)}. Schloss Dagstuhl-Leibniz-Zentrum fuer Informatik, 2017.

\bibitem[RBGR10]{rahul2010range}
Saladi Rahul, Haritha Bellam, Prosenjit Gupta, and Krishnan Rajan.
\newblock Range aggregate structures for colored geometric objects.
\newblock In {\em CCCG}, pages 249--252, 2010.

\bibitem[RGR09]{rahul2009data}
Saladi Rahul, Prosenjit Gupta, and KS~Rajan.
\newblock Data structures for range aggregation by categories.
\newblock In {\em CCCG}, pages 133--136, 2009.

\bibitem[RJ12]{rahul2012algorithms}
Saladi Rahul and Ravi Janardan.
\newblock Algorithms for range-skyline queries.
\newblock In {\em Proceedings of the 20th International Conference on Advances in Geographic Information Systems}, pages 526--529, 2012.

\bibitem[Sat94]{satta1994tree}
Giorgio Satta.
\newblock Tree-adjoining grammar parsing and boolean matrix multiplication.
\newblock {\em Computational linguistics}, 20(2):173--191, 1994.

\bibitem[Tao22]{tao2022algorithmic}
Yufei Tao.
\newblock Algorithmic techniques for independent query sampling.
\newblock In {\em Proceedings of the 41st ACM SIGMOD-SIGACT-SIGAI Symposium on Principles of Database Systems}, pages 129--138, 2022.

\bibitem[TCS13]{to2013entropy}
Hien To, Kuorong Chiang, and Cyrus Shahabi.
\newblock Entropy-based histograms for selectivity estimation.
\newblock In {\em Proceedings of the 22nd ACM international conference on Information \& Knowledge Management}, pages 1939--1948, 2013.

\bibitem[TZ04]{thorup2004tabulation}
Mikkel Thorup and Yin Zhang.
\newblock Tabulation based 4-universal hashing with applications to second moment estimation.
\newblock In {\em SODA}, volume~4, pages 615--624, 2004.

\bibitem[Vad12]{vadhan2012pseudorandomness}
Salil~P Vadhan.
\newblock Pseudorandomness.
\newblock {\em Foundations and Trends{\textregistered} in Theoretical Computer Science}, 7(1--3):1--336, 2012.

\bibitem[WCLY15]{wang2015spatial}
Lu~Wang, Robert Christensen, Feifei Li, and Ke~Yi.
\newblock Spatial online sampling and aggregation.
\newblock {\em Proceedings of the VLDB Endowment}, 9(3):84--95, 2015.

\bibitem[WXXZ24]{williams2024new}
Virginia~Vassilevska Williams, Yinzhan Xu, Zixuan Xu, and Renfei Zhou.
\newblock New bounds for matrix multiplication: from alpha to omega.
\newblock In {\em Proceedings of the 2024 Annual ACM-SIAM Symposium on Discrete Algorithms (SODA)}, pages 3792--3835. SIAM, 2024.

\bibitem[XPML21]{xie2021spatial}
Dong Xie, Jeff~M Phillips, Michael Matheny, and Feifei Li.
\newblock Spatial independent range sampling.
\newblock In {\em Proceedings of the 2021 International Conference on Management of Data}, pages 2023--2035, 2021.

\bibitem[YKLK20]{yu2020multiple}
Ki-Soon Yu, Sung-Hyun Kim, Dae-Woon Lim, and Young-Sik Kim.
\newblock A multiple r{\'e}nyi entropy based intrusion detection system for connected vehicles.
\newblock {\em Entropy}, 22(2):186, 2020.

\bibitem[Yu18]{yu2018improved}
Huacheng Yu.
\newblock An improved combinatorial algorithm for boolean matrix multiplication.
\newblock {\em Information and Computation}, 261:240--247, 2018.

\bibitem[YYC{\etalchar{+}}24]{yu2024renyi}
Haoran Yu, Wenchuan Yang, Baojiang Cui, Runqi Sui, and Xuedong Wu.
\newblock Renyi entropy-driven network traffic anomaly detection with dynamic threshold.
\newblock {\em Cybersecurity}, 7(1):64, 2024.

\end{thebibliography}

\newpage
\appendix

\new{
\section{Updating the \renyi entropy}
\label{sec:appndx:renyi}
\begin{lem}
    Let $P_1, P_2\subset P$ such that $u(P_1)\cap u(P_2)=\emptyset$. It holds that,
\begin{equation}
\ren_{\alpha}(P_1\cup P_2)=\frac{1}{\alpha-1}\log\left(\frac{(|P_1|+|P_2|)^{\alpha}}{|P_1|^{\alpha}\cdot 2^{(1-\alpha)\ren_{\alpha}(P_1)}+|P_2|^{\alpha}\cdot 2^{(1-\alpha)\ren_{\alpha}(P_2)}}\right).  
\end{equation}
\end{lem}
\begin{proof}

We have,
\begin{align*}
    \frac{1}{\alpha-1}&\log\left(\frac{(|P_1|+|P_2|)^{\alpha}}{|P_1|^{\alpha}\cdot 2^{(1-\alpha)\ren_{\alpha}(P_1)}+|P_2|^{\alpha}\cdot 2^{(1-\alpha)\ren_{\alpha}(P_2)}}\right)
    \\&=
    \frac{1}{\alpha-1}\log\left(\frac{(|P_1|+|P_2|)^{\alpha}}{|P_1|^{\alpha}\cdot \sum_{i=1}^m\left(\frac{|P_1(u_i)|}{|P_1|}\right)^\alpha+|P_2|^{\alpha}\cdot \sum_{i=1}^m\left(\frac{|P_2(u_i)|}{|P_2|}\right)^\alpha}\right)
    \\&=
    \frac{1}{\alpha-1}\log\left(\frac{(|P_1|+|P_2|)^{\alpha}}{ \sum_{i=1}^m|P_1(u_i)|^\alpha+\sum_{i=1}^m|P_2(u_i)|^\alpha}\right).
\end{align*}
Since $u(P_1)\cap u(P_2)=\emptyset$, for every $i\in[m]$ either $|P_1(u_i)|=0$ or $|P_2(u_i)|=0$, so it holds that $|P_1(u_i)|^\alpha+|P_2(u_i)|^\alpha=(|P_1(u_i)|+|P_2(u_i)|)^\alpha$. Hence,
\begin{align*}
     \frac{1}{\alpha-1}&\log\left(\frac{(|P_1|+|P_2|)^{\alpha}}{ \sum_{i=1}^m|P_1(u_i)|^\alpha+\sum_{i=1}^m|P_2(u_i)|^\alpha}\right)
     \\&=
     \frac{1}{\alpha-1}\log\left(\frac{(|P_1|+|P_2|)^{\alpha}}{ \sum_{i=1}^m\left(|P_1(u_i)|+|P_2(u_i)|\right)^\alpha}\right)
     \\&=\frac{1}{\alpha-1}\log\left(\frac{1}{\sum_{i=1}^m\left(\frac{|P_1(u_i)|+|P_2(u_i)|}{|P_1\cup P_2|}\right)^\alpha}\right)=\ren_\alpha(P_1\cup P_2).\qedhere
\end{align*}
\end{proof}
}

\new{
\begin{lem}
    Let $P_3\subset P_1\subset P$ such that $|u(P_3)|=1$ and $u(P_1\setminus P_3)\cap u(P_3)=\emptyset$. It holds that,
\begin{equation}
\ren_{\alpha}(P_1\setminus P_3)=\frac{1}{\alpha-1}\log\left(\frac{(|P_1|-|P_3|)^{\alpha}}{|P_1|^{\alpha}\cdot 2^{(1-\alpha)\ren_{\alpha}(P_1)}-|P_3|^\alpha}\right).  
\end{equation}
\end{lem}
\begin{proof}
    We have,
    \begin{align*}
        \frac{1}{\alpha-1}&\log\left(\frac{(|P_1|-|P_3|)^{\alpha}}{|P_1|^{\alpha}\cdot 2^{(1-\alpha)\ren_{\alpha}(P_1)}-|P_3|^\alpha}\right)
        \\&=
        \frac{1}{\alpha-1}\log\left(\frac{(|P_1|-|P_3|)^{\alpha}}{|P_1|^{\alpha}\cdot \sum_{i=1}^m\left(\frac{|P_1(u_i)|}{|P_1|}\right)^\alpha-|P_3|^\alpha}\right)
        \\&=
        \frac{1}{\alpha-1}\log\left(\frac{(|P_1|-|P_3|)^{\alpha}}{ \sum_{i=1}^m\left(|P_1(u_i)|^\alpha\right)-|P_3|^\alpha}\right)
    \end{align*}
    Since $|u(P_3)|=1$ and $u(P_1)\cap u(P_3)=\emptyset$ it holds that $$\sum_{i=1}^m\left(|P_1(u_i)|^\alpha\right)-|P_3|^\alpha = \sum_{i=1}^m\left(|P_1(u_i)|-|P_3|\right)^\alpha=\sum_{i=1}^m\left(|P_1(u_i)|-P_3(u_i)\right)^\alpha.$$
    Hence,
    \begin{align*}
        \frac{1}{\alpha-1}&\log\left(\frac{(|P_1|-|P_3|)^{\alpha}}{ \sum_{i=1}^m\left(|P_1(u_i)|^\alpha\right)-|P_3|^\alpha}\right)
        \\&=
        \frac{1}{\alpha-1}\log\left(\frac{(|P_1|-|P_3|)^{\alpha}}{ \sum_{i=1}^m\left(|P_1(u_i)|-|P_3(u_i)|\right)^\alpha}\right)=\ren_\alpha(P_1\setminus P_3).\qedhere
    \end{align*}
\end{proof}
}
\end{document}